\definecolor{darkred}  {rgb}{0.5,0,0}
\definecolor{darkblue} {rgb}{0,0,0.5}
\definecolor{darkgreen}{rgb}{0,0.5,0}
\newcommand{\be}{\begin{equation}}
\newcommand{\ee}{\end{equation}}
\newcommand{\ba}{\begin{array}}
\newcommand{\ea}{\end{array}}
\newcommand{\bea}{\begin{eqnarray}}
\newcommand{\eea}{\end{eqnarray}}
\newcommand{\calA}{{\cal A }}
\newcommand{\calH}{{\cal H }}
\newcommand{\calD}{{\cal D }}
\newcommand{\calN}{{\cal N }}
\newcommand{\calB}{{\cal B }}
\newcommand{\calF}{{\cal F }}
\newcommand{\calG}{{\cal G }}
\newcommand{\calV}{{\cal V }}
\newcommand{\calE}{{\cal E }}
\newcommand{\calC}{{\cal C }}
\newcommand{\calS}{{\cal S }}
\newcommand{\calW}{{\cal W }}
\newcommand{\calT}{{\cal T }}
\newcommand{\calU}{{\cal U }}
\newcommand{\calO}{{\cal O }}
\newcommand{\FF}{\mathbb{F}}
\newcommand{\ZZ}{\mathbb{Z}}
\newcommand{\CC}{\mathbb{C}}
\newcommand{\trace}[1]{{\mathrm{Tr}{#1}}}
\newcommand{\css}[2]{{\mathrm{CSS}{({#1},{#2})}}}
\newcommand{\jbf}{{\mathbf j}}
\newcommand{\trn}[1]{{#1}^\intercal}
\newcommand{\supp}[1]{{\mathrm{supp}{(#1)}}}
\newtheorem{dfn}{Definition}
\newtheorem{lemma}{Lemma}
\newtheorem{fact}{Fact}
\newtheorem{corol}{Corollary}
\title{Doubled Color Codes}
\newcommand{\IBM}{IBM  T.J. Watson  Research Center, Yorktown Heights, NY 10598, USA}
\author{Sergey Bravyi\footnote{\IBM} \and Andrew Cross\footnotemark[1] }
\date{}
\begin{document}
\maketitle

\begin{abstract}
We show how to perform a fault-tolerant universal quantum computation in 2D architectures
using only transversal unitary operators and local syndrome measurements. 
Our approach is based on  a doubled version of the   2D color code.
It enables a transversal implementation of all logical gates
in the Clifford$+T$ basis using the gauge fixing method proposed recently by Paetznick and Reichardt.
The gauge fixing requires six-qubit parity measurements for
Pauli operators supported on faces of the honeycomb lattice with two qubits per site. 
Doubled color codes are promising candidates for the experimental
demonstration of logical gates since they do not require state distillation.
Secondly,  we propose a Maximum Likelihood algorithm for the error correction
and gauge fixing tasks  that enables a numerical  simulation 
of logical circuits in the Clifford$+T$ basis. The algorithm can be used in the online regime such that a new error syndrome
is revealed at each time step. 
We estimate the average number of logical gates that can be implemented reliably
for the smallest doubled color code and a toy noise model that includes
depolarizing memory errors and syndrome measurement errors.  
\end{abstract}

\newpage

\tableofcontents

\newpage

\section{Introduction}
\label{sec:intro}

Recent years have witnessed  several major steps  towards experimental
demonstration of quantum error correction~\cite{Barends2014,Kelly2015,corcoles2014}
giving us hope that a small-scale fault tolerant quantum memory may become a reality soon.
Quantum memories based on 
topological stabilizer codes such as the 2D surface code
are arguably among the most promising candidates
since they can tolerate a high level of noise 
and can be realized on a two-dimensional grid of qubits
with local parity checks~\cite{Dennis2001,Raussendorf2007,Fowler2009}.
Logical qubits encoded by such codes 
would be virtually isolated from the environment  
by means of an active error correction
and could  preserve delicate superpositions of  quantum states for
extended periods of time. 

Meanwhile, demonstration of a  universal set of
logical gates required for a fault-tolerant quantum computing
remains a distant goal. Although the surface code provides
a low-overhead implementation of  logical  Clifford gates such as 
the CNOT or the Hadamard gate~\cite{Raussendorf2007,Fowler2009},
implementation of logical non-Clifford gates 
poses a serious challenge.
  Non-Clifford gates such as the single-qubit $45^\circ$ phase shift known as the $T$-gate
  are required to express interesting quantum algorithms
but  their operational cost in the surface code architecture 
exceeds the one of Clifford gates 
by orders of magnitude. This large overhead stems from the 
state distillation subroutines
which may require a  thousand or more physical qubits to realize just a single
logical $T$-gate~\cite{Jones2013,Fowler2013surface}.
Some form of state distillation is used by all currently known fault-tolerant protocols
based on 2D stabilizer codes. 

The purpose of this paper is to propose an alternative family of quantum codes and fault tolerant protocols
for 2D architectures where all logical gates are implemented 
transversally. Recall that a logical gate is called transversal if it 
can be implemented by applying some single-qubit
rotations to each physical qubit. 
Transversal gates are highly desirable since they introduce no overhead
and do not spread errors. 
Assuming that all  qubits are controlled in parallel, a transversal 
gate takes the same time as a single-qubit rotation,
which is arguably the best one can hope for. 
Unfortunately, transversal gates have a very limited computational power. 
A no-go theorem proved by  Eastin and Knill~\cite{Eastin2009} asserts
that a quantum code can have only a finite number of transversal gates
which rules out universality. In the special case of 2D stabilizer codes
a more restrictive version of this  theorem have been proved
asserting that transversal logical gates must belong to the Clifford 
group~\cite{Bravyi2013,Pastawski2015,Beverland2014}.

To circumvent these no-go theorems we employ the gauge fixing method
proposed  recently by Paetznick and Reichardt~\cite{Paetznick2013}.
A  fault-tolerant protocol based on the gauge fixing method alternates
between two error correcting codes that provide a transversal implementation
of logical Clifford gates and the logical $T$-gate respectively. Thus a computational universality 
is  achieved by combining transversal gates of two different codes. 
A  conversion between the codes can be made fault-tolerantly if their stabilizer groups have a sufficiently large intersection. 
This is achieved by properly choosing a  pattern of parity checks 
measured at each time step and applying  a gauge fixing operator depending on the measured syndromes.
The latter is responsible both for error correction and for switching between two different
encodings of the logical qubit.

Our goal for the first part of the paper (sections~\ref{sec:subs}-\ref{sec:CT15}) is 
to develop effective error models  and decoding algorithms 
suitable for  simulation of  logical circuits in  the Clifford$+T$ basis.
Although a transversal implementation of logical $T$-gates offers a substantial
overhead reduction, it poses several challenges for the  decoding algorithm.
First, $T$-gates introduce correlations between $X$-type and $Z$-type errors
that cannot be described by the standard stabilizer formalism.
This may prevent the decoder from using  error syndromes measured 
before application of a $T$-gate in the error correction steps performed afterwards.
 Second,  implementation of $T$-gates by the gauge fixing method
requires an online decoder  such that a new gauge fixing operator has to be 
computed and applied prior to each logical $T$-gate. 
Thus a practical decoder must have  running time $O(1)$ per logical gate
independent of the total length of the circuit. 
The present work makes two contributions
that partially address these challenges. 

First, we generalize the stabilizer formalism  commonly used for
a numerical  simulation of  error correction to logical Clifford$+T$ circuits.
 Specifically, we show how to commute Pauli errors through  a composition of a  transversal $T$-gate
and a certain  twirling map such that 
the effective error model at each step of the circuit can be described by  random Pauli errors,
even though the circuit may contain many non-Clifford gates. 
The twirling map has no effect on the logical state since it includes only stabilizer operators.
We expect that this technique may find applications in  other contexts.

Second, we propose a Maximum Likelihood (ML) decoding  algorithm
for the error correction and gauge fixing tasks. 
The ML decoder applies the Bayes rule to find a recovery operator which is most likely
to succeed in a given task based on the full  history of  measured syndromes. This is achieved by 
properly  taking into account  statistics of memory and measurement errors, as well as
correlations between $X$-type and $Z$-type errors introduced by transversal $T$-gates.
Although the number of syndromes that the  ML decoder has to process scales
linearly with the length of the logical circuit, the decoder has 
a constant  running time per logical gate which scales as $O(n2^n)$ for
a code with $n$ physical qubits. The decoder can be used in the online
regime for sufficiently small codes, which is
crucial for the future experimental demonstration of logical gates. 
A key ingredient of the decoder is the fast Walsh-Hadamard transform. 
A heuristic  approximate version of the algorithm
called a sparse ML decoder is proposed 
that could be applicable to medium size codes.

We apply the ML decoder to a particular gauge fixing protocol 
proposed by Anderson et al~\cite{Anderson2014}.
The protocol alternates between the $15$-qubit Reed-Muller code
and the $7$-qubit Steane code that provide a transversal implementation of the
$T$-gate and Clifford gates respectively.
Numerical simulations are performed 
for a phenomenological error model that consists of depolarizing memory errors and 
syndrome measurement errors with some rate $p$. 
Following ideas of a randomized benchmarking~\cite{Knill2008,Chow2009} we choose a Clifford$+T$ circuit
at random, such that
each Clifford gate is drawn from the uniform distribution on the single-qubit Clifford group. 
The circuit alternates between Clifford and $T$ gates. 
The quantity we are interested in is 
a logical error rate defined as  $p_L=1/g$,
where $g$ is the average number of logical gates implemented before 
the first failure in the error correction or gauge fixing subroutines.
Here $g$ includes both Clifford and $T$ gates. 
The sparse ML decoder enables a numerical simulation of circuits with more than 10,000 logical gates. 
For small error rates we observed  a scaling $p_L =Cp^2$ with  $C\approx 182$.
Assuming that a physical Clifford$+T$ circuit
has an error probability $p$ per gate,   the logical circuit becomes more reliable than the 
physical one  provided  that  $p_L<p$, that is, $p<p_0=C^{-1} \approx 0.55\%$.
This value can be viewed as an ``error threshold" of the proposed protocol. 
The observed threshold is comparable with the one calculated by 
Brown et al~\cite{Brown2015} for a gauge fixing protocol based on the 3D color
codes which were recently proposed by Bombin~\cite{Bombin2015}.
We note however that Ref.~\cite{Brown2015} studied only 
the storage of a logical qubit (no logical gates). 
We anticipate that the tools developed in the present paper could be
used to simulate logical Clifford$+T$ circuits based on the 3D color codes as well.
It should be pointed out that the threshold $p_0=0.55\%$ is almost one order of magnitude smaller
than the one of the 2D surface code for the analogous error model~\cite{Wang2003}.
This is the price one has to pay for the low-overhead implementation of all logical gates. 

The numerical results obtained for the $15$-qubit code call for more general code constructions
that could  achieve a more favorable scaling of the logical error rate.
In the second part of the paper  (sections~\ref{sec:summary}-\ref{sec:gadget})
we propose an infinite family of 2D quantum codes with  a diverging code distance 
that enable  implementation of  Clifford$+T$ circuits by the gauge fixing method.
The number of physical qubits required to achieve a code distance $d=2t+1$
is $n=2t^3+8t^2+6t-1$.
For comparison, the 3D color codes of Ref.~\cite{Bombin2015} require
$n=4t^3+6t^2+4t+1$ physical qubits. 
The new codes can be embedded into the 2D honeycomb lattice with two qubits per site
such that all syndrome measurements required for error correction and gauge fixing are
spatially local. More precisely, any check operator measured in the protocol
acts on at most six qubits located on some  face of the lattice. 
As was pointed out in Ref.~\cite{Bombin2015}, the gauge fixing method circumvents
no-go theorems proved for transversal  non-Clifford gates in the 
2D geometry~\cite{Bravyi2013,Pastawski2015,Beverland2014}
since the decoder that controls all quantum operation may perform a non-local
classical processing.

The key ingredient of our approach is a doubling transformation 
from the classical coding theory
originally proposed 
by Betsumiya and Munemasa~\cite{Betsum2010}.
Its quantum analogue can be used to construct  high-distance codes
with a special symmetry  required for  transversality  
of logical $T$-gates. Namely, a quantum code of CSS type~\cite{CSS1996,Steane1996}
 is said to be triply even (doubly even)
if the weight of any $X$-type stabilizer  is a multiple of eight (multiple of four).
Any triply even CSS code of odd length is known to have a transversal $T$-gate.
The doubling transformation combines a triply even code with distance $d-2$ and
two copies of  a doubly even  code with distance $d$ to produce  a triply even code
with distance $d$. Our construction recursively applies the doubling transformation
to the family of regular color codes on the honeycomb lattice~\cite{Bombin2006}
such that each recursion level increases the code distance by two. 
The regular color codes are known to be doubly even~\cite{Kubica2015} (in a certain generalized sense).
Producing a distance-$d$ triply even code  
requires $(d-1)/2$ recursion levels
that combine  color codes with distance $3,5,\ldots,d$.
We refer to the new family of codes as doubled color code since
the construction relies on taking two copies of the regular color codes. 
It should not be confused with the quantum double construction
from the topological quantum field theory~\cite{Kitaev2003}.
Doubled color codes have almost all properties required for
implementation of logical Clifford$+T$ circuits by the gauge fixing method.
Namely, a doubled color code has a transversal $T$-gate and can be
converted fault-tolerantly to the regular color code which is known to have  transversal Clifford gates~\cite{Kubica2015}.
Unfortunately, the doubling transformation does not preserve spatial locality
of check operators. Even worse, some check operators of a doubled color code
have very large weight and their syndromes cannot be measured in a fault-tolerant fashion.
Converting the doubled color codes into
a local form is our main technical contribution.
This is achieved in two steps. First we 
show how to implement all levels of the recursive doubling transformation
on the honeycomb lattice with two qubits per site such that almost
all check operators of the output code are spatially local. 
We show that each of the remaining non-local checks can be decomposed into a product
of local ones by introducing several ancillary qubits and extending the code
properly to the ancillary qubits. 
This technique is reminiscent of perturbation theory gadgets that are used to
generate effective low-energy Hamiltonians with long-range many-body interactions 
starting  from a simpler high-energy Hamiltonian with 
short-range  two-body interactions~\cite{Oliveira2008}.

The $15$-qubit code studied in Section~\ref{sec:CT15} can be viewed as  the smallest
example of a doubled color code. Furthermore, the $49$-qubit triply even code 
with distance $d=5$
discovered by an exhaustive numerical search in Ref.~\cite{Bravyi2012}
can be viewed as a doubled color code 
obtained from the regular color code $[[17,1,5]]$ on the 
square-octagon lattice via the doubling transformation.
The $49$-qubit code is  optimal in the sense that no distance-$5$
code with less than $49$ qubits can be triply even~\cite{Betsum2010,Bravyi2012}.
Thus the family of doubled color codes includes
the best known examples of triply even codes with a small distance.

Although this paper focuses on codes with a single logical qubit,
our fault-tolerant  protocols  can be incorporated into the
lattice surgery method based on the regular color codes~\cite{Landahl2014}.
The former would provide implementation of logical single-qubit rotations
decomposed into a product of Clifford and $T$-gates while the latter 
enables logical CNOTs  and can serve as a quantum memory.
We note that efficient and nearly optimal algorithms for decomposing single-qubit rotations into
a product of Clifford and $T$-gates  have been proposed recently~\cite{Kliuchnikov2013fast,Selinger2015}.

To make the paper self-contained, 
we provide all necessary background on quantum codes of CSS type, the 
gauge fixing method, and transversal logical gates in Sections~\ref{sec:notations}-\ref{sec:trans}.
The effective error model describing the action of transversal $T$-gates on random Pauli
errors is developed in Section~\ref{sec:Tgate}.
We describe the ML decoding algorithm suitable for simulation of Clifford$+T$ circuits
in Section~\ref{sec:ML}.
Numerical simulation of random Clifford$+T$ logical circuits based on the family
of $15$-qubit codes is  described in Section~\ref{sec:CT15}.
This section also serves as an example illustrating  the general construction of doubled color codes.
The latter is described in Sections~\ref{sec:summary}-\ref{sec:gadget}.
First, we highlight main properties of the doubled color codes  in Section~\ref{sec:summary}.
Definition of regular 2D color codes and their properties are summarized in Section~\ref{sec:color}.
Doubled color codes and their embedding into the honeycomb lattice
are defined in Sections~\ref{sec:doubling},\ref{sec:color2}.
The most technical part of the paper is Section~\ref{sec:gadget} explaining
how to convert doubled color codes into a spatially local form.

\section{Notations}
\label{sec:notations}

Let $\FF_2^n$ be the $n$-dimensional linear space over the binary field $\FF_2$. 
A vector $x\in \FF_2^n$ is regarded as a column vector with 
components $x_1,\ldots,x_n$. We shall write $\trn{x}$ for the corresponding
row vector. The set of integers $\{1,2,\ldots,n\}$ will be denoted $[n]$.
Let $\supp{x}\subseteq [n]$ be the
support of $x$, that is, the
subset of indexes $i$ such that $x_i=1$.
We shall often identify a vector $x$ and 
the subset $\supp{x}$.
Let $|x|\equiv |\supp{x}|$ be the weight of $x$, that is, the number of non-zero components.
Given a subset $A\subseteq [n]$ 
let $x_A\in \FF_2^{|A|}$ be a restriction of $x$ onto $A$, 
that is, a vector  obtained from $x$ by deleting all components $x_i$
with $i\notin A$. Conversely, given a vector $y\in \FF_2^{|A|}$
let $y[A]\in \FF_2^n$ be a vector obtained from $y$ by inserting zero components
for all $i\notin A$. For example, if $n=5$, $A=\{1,3,5\}$, and $y=(111)$ then
$y[A]=(10101)$.
The space $\FF_2^n$ is equipped with a standard basis $e^1,e^2,\ldots,e^n$, where
$e^j\equiv 1[j]$ is the vector with a single non-zero component indexed by $j$.
We shall use notations $\overline{0}$ and $\overline{1}$ for the
all-zeros and the all-ones vectors.
The inner product between vectors $x,y\in \FF_2^n$ is defined as
\[
x^\intercal y=\sum_{i=1}^n x_i y_i {\pmod 2}.
\]
A linear subspace spanned by vectors $x^1,\ldots,x^m\in \FF_2^n$ will
be denoted $\langle x^1,\ldots,x^m\rangle$.
Given a subset $A\subseteq [n]$ and a subspace $\calS\subseteq \FF_2^{|A|}$, let
$\calS[A]\subseteq \FF_2^n$ be the subspace spanned by vectors $y[A]$ with
$y\in \calS$.
Let $\calE^n$ and $\calO^n$ be the subspaces of $\FF_2^n$ spanned by all even-weight 
and all odd-weight vectors respectively.
We shall use shorthand notations $\calE\equiv \calE^n$ and $\calO\equiv \calO^n$
whenever the value of $n$ is clear from the context. 
Given a linear subspace $\calS\subseteq \FF_2^n$ let $\calS^\perp$ be the orthogonal
subspace, 
\[
\calS^\perp = \{ x\in \FF_2^n \, : \, \trn{x} y=0 \quad \mbox{for all $y\in \calS$}\}
\]
and $\dot{\calS}$ be the subspace spanned by all even-weight vectors orthogonal to $\calS$,
\[
\dot{\calS}=\calS^\perp \cap \calE.
\]
A subspace $\calS$ is self-orthogonal if $\calS\subseteq \calS^\perp$.
We shall use identities 
$(\calS^\perp)^\perp=\calS$,   $(\calS+\calT)^\perp =\calS^\perp\cap \calT^\perp$,
and  $\calE^\perp=\langle \overline{1}\rangle$.
Given a subspace $\calS$, let $d(\calS)$ be the minimum weight of odd-weight
vectors in $\calS^\perp$, 
\begin{equation}
\label{distance}
d(\calS)\equiv \min{ \{ |f| \, : \, f\in \calS^\perp \cap \calO\}}.
\end{equation}

Consider now a system of $n$ qubits and let $X_j,Y_j,Z_j$ be the 
Pauli operators  acting on a qubit $j$ tensored with the identity
on the remaining qubits. 
Given a vector $f\in \FF_2^n$ and 
a single-qubit Pauli operator $P$ 
let $P(f)$ be the $n$-qubit operator that applies $P$
to each qubit in the support of $f$, 
\[
P(f)=\prod_{j\in \supp{f}} P_j.
\]
For any subset $\calS\subseteq \FF_2^n$ define
the corresponding set of Pauli operators
  $P(\calS)\equiv \{ P(f) \, : \, f\in \calS\}$.
A pair of subspaces $\calA,\calB\subseteq \FF_2^n$ defines a group of $n$-qubit Pauli operators
\[
\css{\calA}{\calB}= \langle X(\calA),Z(\calB)\rangle.
\]
Any element of $\css{\calA}{\calB}$ has a form $i^m X(f) Z(g)$ for some
$f\in \calA$,  $g\in \calB$, and some integer $m$. 
Note that the  group $\css{\calA}{\calB}$ is abelian iff $\calA$ and $\calB$ are mutually orthogonal,
 $\calA\subseteq \calB^\perp$, since  
\[
X(f)Z(g)=(-1)^{\trn{f} g} \,Z(g)X(f).
\]
Given a vector $f\in \FF_2^n$, let 
$|f\rangle=|f_1\otimes \cdots \otimes f_n\rangle$ be the corresponding  basis state of $n$ qubits.
Note that $|f\rangle=X(f)|\overline{0}\rangle$.

\section{Subsystem quantum codes and gauge fixing}
\label{sec:subs}

This section summarizes some known facts concerning subsystem
quantum codes of Calderbank-Steane-Shor (CSS) type~\cite{CSS1996,Steane1996} and the 
gauge fixing method. 
A quantum CSS code   is constructed from a pair of linear subspaces $\calA,\calB\subseteq \FF_2^n$
that are mutually orthogonal,  $\calA\subseteq \calB^\perp$.
Such pair defines an abelian group of Pauli operators
$\calS=\css{\calA}{\calB}$ called a {\em stabilizer group}.
Elements of $\calS$ are called stabilizers.  
We shall often identify a CSS code and its stabilizer group.
A subspace spanned by $n$-qubit states 
invariant under the action of any stabilizer is called a {\em codespace}.
A projector onto the codespace can be written as
\begin{equation}
\label{proj}
\Pi=\frac1{|\calS|}\sum_{G\in \calS} G.
\end{equation}
In this paper we  only consider a restricted class  of CSS codes
such that all vectors in $\calA$ and $\calB$ have even weight
whereas the number of physical qubits $n$ is odd,
\begin{equation}
\label{simpleCSS}
n=1{\pmod 2}, \quad \quad \calA\subseteq \calE, \quad \quad \calB\subseteq \calE.
\end{equation}
We shall only consider codes with a single logical (encoded) qubit.
Operators acting on the encoded qubit are expressed in terms of 
{\em logical Pauli operators} 
\begin{equation}
\label{XYZL}
X_L=X(\overline{1}), \quad Y_L=Y(\overline{1}), \quad Z_L=Z(\overline{1}).
\end{equation}
Logical operators commute with any stabilizer due to Eq.~(\ref{simpleCSS}) and thus preserve the codespace.
Furthermore, $X_L,Y_L,Z_L$ obey the same commutation rules as single-qubit Pauli 
operators $X,Y,Z$ respectively. 
More generally, a Pauli operator $P$ is called a logical operator
iff it coincides with $X_L$, $Y_L$, or $Z_L$ modulo stabilizers, that is,
$P=X(a)Z(b)$, where $a\in \calA+\alpha \overline{1}$,
$b\in \calB+\beta\overline{1}$, and at least one of 
the coefficients $\alpha,\beta$ is non-zero. 
A logical state  encoding a single-qubit state
$\eta=(1/2)(I+\alpha X+\beta Y+\gamma Z)$ is  defined as  
\begin{equation}
\label{rhoL}
\rho_L\equiv \rho_L(\eta)=O_L \Pi, \quad \quad O_L=\delta(I+ \alpha X_L + \beta Y_L + \gamma Z_L),
\end{equation}
where $\delta$ is a  coefficient responsible for normalization $\trace{(\rho_L)}=1$.

Pauli operators  that commute with both stabilizers and logical operators generate a group
$\calG$ called a {\em gauge group}.  Elements of $\calG$
are called gauge operators. Note that a Pauli operator $X(f)$ 
commutes with all stabilizers iff $f\in \calB^\perp$. Likewise, $X(f)$ commutes
with the logical operators iff $\trn{\overline{1}}f=0$, that is, $f\in \calE$. 
Thus $X(f)$ is a gauge operator iff $f\in \dot{\calB}$, where we use notations
of Section~\ref{sec:notations}. The same reasoning shows that $Z(g)$ is a gauge 
operator iff $g\in \dot{\calA}$. Thus the gauge group 
corresponding to a stabilizer group $\calS=\css{\calA}{\calB}$ 
is given by
\begin{equation}
\label{Ggroup}
\calG=\css{\dot{\calB}}{\dot{\calA}}.
\end{equation}
By definition, $\calS\subseteq \calG$.
Note that $E\rho_LE^\dag=\rho_L$ for any $E\in \calG$, that is, 
gauge operators have no effect on the logical state.

A noise maps the logical state $\rho_L$ to a  probabilistic mixture of
states $E_\alpha \rho_L E_\alpha^\dag$, where $E_\alpha$ are some $n$-qubit Pauli operators
called {\em memory errors} or simply errors. We shall only consider noise that can be described
by random Pauli errors. 
Note that errors $E_\alpha$ and $E_\beta$ have the same
action on any logical state whenever $E_\beta^\dag E_\alpha\in \calG$, 
that is, gauge-equivalent  errors can be identified.  
We shall say that an error $E$ is {\em non-trivial} if $E\notin \calG$.
Errors are diagnosed by measuring eigenvalues of some stabilizers. 
An eigenvalue measurement of a  stabilizer $X(f)$  
has an outcome $(-1)^{\xi(f)}$, where $\xi(f)\in \FF_2$ is called a {\em syndrome} of $X(f)$.
A corrupted state  $E\rho_L E^\dag$
with a memory error $E=X(a)Z(b)$ has a syndrome
$\xi(f)=\trn{f}b$. It reveals whether $E$ commutes or anti-commutes with $X(f)$. 
A syndrome of a 
stabilizer $Z(g)$   is defined as 
$\zeta(g)=\trn{g}a$. It reveals whether $E$ commutes or anti-commutes with  $Z(g)$. 
By definition, gauge-equivalent errors have the same syndromes.

In some cases stabilizer syndromes  cannot be measured directly (for example, if a
stabilizer has too  large weight) but they can be inferred by measuring
eigenvalues of some gauge operators. For example, suppose a stabilizer $X(f)$
can be represented as a product of $X$-type gauge operators $X(g^\alpha)$, that is,
$f=g^1+g^2+\ldots+g^m$ for some $f\in \calA$ and $g^\alpha\in \dot{\calB}$.
An eigenvalue measurement of the gauge operator $X(g^\alpha)$ 
has an outcome $(-1)^{\xi(g^\alpha)}$, where $\xi(g^\alpha)$ is called 
a {\em gauge syndrome}. Once all gauge syndromes $\xi(g^\alpha)$ have been
measured, the syndrome of $X(f)$ is inferred from $\xi(f)=\xi(g^1)+\ldots+\xi(g^m)$.
Since gauge operators commute with stabilizers, these measurements do not
affect syndromes of any $Z$-type stabilizers. However, measuring gauge syndromes
may change the encoded state and the latter no longer has a form
$E_\alpha \rho_L E_\alpha^\dag$ (since some gauge degrees of freedom have been fixed).
We shall assume that each syndrome measurement is followed by a twirling map
\[
\calW_\calG(\rho)=\frac1{|\calG|} \sum_{G\in \calG} G\rho G^\dag
\]
that applies a random gauge operator $G$ drawn from the uniform distribution on $\calG$.
The twirling map restores the original form of the encoded state $E_\alpha \rho_L E_\alpha^\dag$
by bringing all gauge degrees of freedom to the maximally mixed state. 
Gauge syndromes $\zeta(g^\alpha)$ for $Z$-type gauge operators $g^\alpha\in \dot{\calA}$
are defined analogously. 

A memory error $E$ is said to be {\em undetectable} if it 
commutes with any element of $\calS$. Equivalently, $E$
has zero syndrome for any stabilizer.
Errors $E$ that are both undetectable and non-trivial ($E\notin \calG$)
should be avoided since they  can alter the logical state.
A {\em code distance} $d$ is defined as the minimum weight of an undetectable
non-trivial error. The code $\css{\calA}{\calB}$  has distance
\begin{equation}
\label{dAdB}
d=\min{\{ d(\calA),d(\calB) \}},
\end{equation}
where $d(\calA)$ and $d(\calB)$ are defined by Eq.~(\ref{distance}).

Next let us discuss a {\em gauge fixing} operation that 
extends the stabilizer group  to a larger group or, equivalently,
reduces the gauge group to a smaller subgroup. 
Consider  a pair of codes with stabilizer groups $\calS=\css{\calA}{\calB}$
and $\calS'=\css{\calA'}{\calB'}$ such that $\calA\subseteq \calA'$ and $\calB\subseteq \calB'$.
Let $\calG=\css{\dot{\calB}}{\dot{\calA}}$ and $\calG'=\css{\dot{\calB}'}{\dot{\calA}'}$
be the respective gauge groups. 
We note
that $\calS\subseteq \calS'$ and $\calG\supseteq \calG'$.
Let us represent $\calG$ as a disjoint union of cosets
$G_a\calG'$ for some fixed set of coset representatives $G_1,\ldots,G_m\in \calG$. 
Simple algebra shows that the codespace projectors $\Pi$ and $\Pi'$ of the two codes
are related as 
\begin{equation}
\label{rhoL1}
\Pi= \sum_{a=1}^m G_a \Pi' G_a^\dag,
\end{equation}
and all subspaces $G_a \Pi' G_a^\dag$ are pairwise orthogonal.
Substituting Eq.~(\ref{rhoL1}) into Eq.~(\ref{rhoL}) 
shows that  any logical state $\rho_L$ of the code $\calS$ can be written as
\begin{equation}
\label{rhoL2}
\rho_L =\frac1m \sum_{a=1}^m G_a \rho_L' G_a^\dag,
\end{equation}
where  $\rho_L'$ is a logical state of the code $\calS'$.
Furthermore, $\rho_L$ and $\rho_L'$ encode the same state.
Since $G_a$ are not gauge operators for the code $\calS'$, they should be
treated as memory errors. Moreover, any pair of errors $G_a$
and $G_b$ can be distinguished by measuring a complete syndrome
of the code $\calS'$ (that is, measuring syndromes of 
some complete set of generators of $\calS'$).
Indeed, if $G_a$ and $G_b$ would have the same syndromes
for some $a\ne b$ 
then the operator $G_aG_b^\dag$ would commute with any
element of $\calS'$ as well as  with the logical operators.
This would imply $G_aG_b^\dag\in \calG'$ which is a contradiction since
$G_a\calG'\ne G_b\calG'$.
Thus a complete syndrome measurement for the code $\calS'$ projects the 
state $\rho_L$ in Eq.~(\ref{rhoL2}) onto some particular term $G_a \rho_L' G_a^\dag$
and the coset $G_a\calG'$ can be inferred from the measured syndrome.
Applying a recovery operator $R$  chosen as any representative of the
coset $G_a\calG'$ yields a state
$RG_a\rho_L' G_a^\dag R^\dag = \rho_L'$. 
This completes the gauge fixing operation.
The gauge fixing can be performed in the reverse direction as well.
Namely, the encoded state $\rho_L'$ can be mapped to $\rho_L$
by applying randomly one of the operator $G_1,\ldots,G_m$ drawn from the uniform
distribution, see Eq.~(\ref{rhoL2}). 
We shall collectively refer to the gauge fixing and its reverse as {\em code deformations}.

It will be convenient to distinguish between {\em subsystem} and  {\em regular} CSS codes
(a.k.a. subspace or stabilizer codes). 
The latter is a special class of CSS codes such that the stabilizer and the gauge groups
are the same. In other words, a stabilizer group  $\css{\calA}{\calB}$ defines
a regular CSS code  iff
$\calA=\dot{\calB}$ and $\calB=\dot{\calA}$.
Note that applying the dot operation twice gives the original subspace,
$\ddot{\calA}=\calA$,  whenever $\calA\subseteq \calE$ and $n$ is odd.
Thus $\calA=\dot{\calB}$ iff $\calB=\dot{\calA}$.
A regular CSS code has a  two-dimensional codespace with  an orthonormal basis
\begin{equation}
\label{logical01}
|0_L\rangle =|\calA|^{-1/2} \sum_{f\in \calA} |f\rangle
\quad \mbox{and} \quad
|1_L\rangle=X_L|0_L\rangle =|\calA|^{-1/2}  \sum_{f\in \calA} |f+\overline{1}\rangle,
\end{equation}
such that $\Pi=|0_L\rangle\langle 0_L|+|1_L\rangle\langle 1_L|$.
In the case of subsystem CSS codes, the gauge group is strictly larger than the stabilizer group,
so that the codespace can be further decomposed into a tensor product of a logical
subsystem  and a gauge subsystem.

\section{Transversal logical gates}
\label{sec:trans}

In this section we state sufficient conditions under which 
a quantum code of CSS type has   transversal  logical gates.
Recall that a single-qubit unitary gate $V$ is called transversal if 
there exist a product unitary operator
 $U_{all}=U_1\otimes U_2\otimes \cdots \otimes U_n$
that  preserves the codespace and the action of $U_{all}$ on the
logical qubit coincides with $V$. 
More precisely,  let $\rho_L(\eta)$ be a logical state 
that encodes a single-qubit state $\eta$,
see Eq.~(\ref{rhoL}).
Then we require that $U_{all}\, \rho_L(\eta) U_{all}^{-1} = \rho_L(V\eta V^{-1})$
for all $\eta$. 
To simplify notations, we shall write $U_{all}=\prod_{j=1}^n U_j$ meaning that $U_j$
acts on the $j$-th qubit. 
We shall use the following set of single-qubit logical gates:
\[
H=\frac1{\sqrt{2}} \left[ \ba{cc} 1 & 1 \\ 1 & -1 \\ \ea\right],
\quad 
S=\left[ \ba{cc} 1 & 0 \\ 0 & i \\ \ea\right], \quad \mbox{and} \quad 
T=\left[ \ba{cc} 1 & 0 \\ 0 & e^{i\pi/4} \\ \ea\right].
\]
It is known that $H$ and $S$ generate the full Clifford group on one qubit,
whereas $H,S,T$ is a universal set generating a dense subgroup of the unitary group.
The set $H,S,T$ is known as the Clifford$+T$ basis. 
To state sufficient conditions for transversality we shall need  notions
of  a doubly-even and a  triply-even subspace~\cite{Betsum2010}.
\begin{dfn}
A  subspace $\calA\subseteq \FF_2^n$ is 
doubly-even iff there exist disjoint subsets $M^\pm \subseteq [n]$ such that 
\begin{equation}
\label{even2}
 |f\cap M^+| - |f\cap M^-| = 0{\pmod 4} \quad\quad  \mbox{for all $f\in \calA$}.
\end{equation}
\end{dfn}
\begin{dfn}
A  subspace $\calA\subseteq \FF_2^n$ is 
triply-even iff there exist disjoint subsets $M^\pm \subseteq [n]$ such that 
\begin{equation}
\label{even3}
 |f\cap M^+| - |f\cap M^-| = 0{\pmod 8} \quad\quad  \mbox{for all $f\in \calA$}.
\end{equation}
\end{dfn}

We require that at least one of the subsets $M^\pm$ 
is non-empty, since otherwise the definitions are meaningless.
Below we implicitly assume that each doubly- or  triply-even 
subspace is equipped with a pair of subsets $M^\pm$ satisfying Eq.~(\ref{even2})
or Eq.~(\ref{even3}) respectively.  In this case
$\calA$ is said to be doubly- or triply-even  with respect to  $M^\pm$.
The original definition of triply-even codes given in~\cite{Betsum2010}
is recovered by choosing $M^+=[n]$ and $M^-=\emptyset$. 

Consider a pair of disjoint subsets $M^\pm \subseteq [n]$
such that 
\[
m=|M^+|-|M^-|=1{\pmod 2}
\]
and define $n$-qubit product operators
\begin{equation}
\label{TSH}
T_{all}=\prod_{j \in M^+} T_j \cdot \prod_{j\in M^-} T_{j}^{-1},
\quad
S_{all}=\prod_{j \in M^+} S_j \cdot \prod_{j\in M^-} S_{j}^{-1},
\quad
\mbox{and}  \quad H_{all}=\prod_{j=1}^n H_j.
\end{equation}
The following lemma provides sufficient conditions for transversality 
of $T,S$, and $H$ gates. 
\begin{lemma}
\label{lemma:transversal}
Consider a quantum code $\css{\calA}{\calB}$, where $\calA,\calB\subseteq \FF_2^n$
are mutually orthogonal subspaces satisfying Eq.~(\ref{simpleCSS}). 
The code has a transversal  gate $T^m$, 
\begin{equation}
\label{Tall}
T_{all}\, \rho_L(\eta)T_{all}^{-1} = \rho_L(T^m \eta T^{-m}),
\end{equation}
whenever $\calB=\dot{\calA}$ and  $\calA$ is triply even with respect to 
$M^\pm$.
The code has a  transversal  gate $S^m$,
\begin{equation}
\label{Sall}
S_{all} \, \rho_L(\eta)S_{all}^{-1} = \rho_L(S^m \eta S^{-m}),
\end{equation}
whenever $\calA\subseteq \calB$ and  $\calA$ is doubly even with respect to
 $M^\pm$. 
The code has a transversal $H$-gate,
\begin{equation}
\label{Hall}
H_{all}\, \rho_L(\eta)H_{all} = \rho_L(H \eta H),
\end{equation}
 whenever $\calA=\calB$.
\end{lemma}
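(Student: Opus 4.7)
I would prove each of the three claims separately by computing the action of the proposed product unitary directly. The $T$-case assumes $\calB=\dot{\calA}$, which makes the code a regular CSS code with two-dimensional codespace spanned by $|0_L\rangle,|1_L\rangle$ as in Eq.~(\ref{logical01}), so one can work at the level of these two vectors. The $H$- and $S$-cases allow subsystem codes; there I would instead show that the transversal unitary preserves both the stabilizer group $\calS$ and the gauge group $\calG$, so that $[U_{all},\Pi]=0$ and the action on the codespace factorises as a logical unitary tensored with a gauge unitary. The logical factor can then be identified by conjugating $X_L$ and $Z_L$.

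For the $T$-gate, the key observation is that $T_{all}$ is diagonal in the computational basis with $T_{all}|f\rangle=e^{i(\pi/4)(|f\cap M^+|-|f\cap M^-|)}|f\rangle$. For $f\in\calA$ the triply-even hypothesis forces the exponent to be a multiple of $8$, so the phase equals $1$; hence $T_{all}|0_L\rangle=|0_L\rangle$. For $f+\overline{1}$ with $f\in\calA$, the identity $|(f+\overline{1})\cap M^\pm|=|M^\pm|-|f\cap M^\pm|$ shows the exponent becomes $m$ modulo $8$, yielding a uniform phase $e^{im\pi/4}$ on every summand of $|1_L\rangle$. Thus in the logical basis $T_{all}$ acts as $\mathrm{diag}(1,e^{im\pi/4})=T^m$, which is Eq.~(\ref{Tall}).

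For the $H$-gate, Clifford conjugation gives $H_{all}X(f)H_{all}=Z(f)$ and $H_{all}Z(g)H_{all}=X(g)$. The hypothesis $\calA=\calB$ (which also gives $\dot{\calA}=\dot{\calB}$) means both $\calS=\css{\calA}{\calB}$ and $\calG=\css{\dot{\calB}}{\dot{\calA}}$ are invariant under swapping $X$-type and $Z$-type generators, so $H_{all}$ preserves $\Pi$ and the gauge subsystem. Applying the same swap to $X_L=X(\overline{1})$ and $Z_L=Z(\overline{1})$ shows the logical factor is precisely $H$, which yields Eq.~(\ref{Hall}).

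For the $S$-gate, a direct calculation using $SXS^{-1}=Y$ and $S^{-1}XS=-Y$ gives $S_{all}X(f)S_{all}^{-1}=i^{|f\cap M^+|-|f\cap M^-|}X(f)Z(f\cap(M^+\cup M^-))$, where the phase collapses to $+1$ whenever $f\in\calA$ by doubly-evenness, while $S_{all}Z(g)S_{all}^{-1}=Z(g)$ trivially. Invoking $\calA\subseteq\calB$ allows the residual $Z$-factor to be absorbed into $\calS$, yielding $S_{all}\calS S_{all}^{-1}=\calS$ and hence $[S_{all},\Pi]=0$. Applying the same formula with $f=\overline{1}$ gives $S_{all}X_LS_{all}^{-1}=i^m X_L Z(M^+\cup M^-)$, which equals $Y_L$ modulo stabilizers when $m$ is odd and in general matches the action of $S^m$ on $X$; together with $S_{all}Z_LS_{all}^{-1}=Z_L$ this establishes Eq.~(\ref{Sall}). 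The main technical obstacle is the absorption step: one needs $Z(f\cap(M^+\cup M^-))\in\calS$, which is immediate when $M^+\cup M^-=[n]$ but in the general case requires checking that the doubly-even structure, together with the choice of $M^\pm$, forces $f\cap(M^+\cup M^-)$ to lie in $\calB$ whenever $f\in\calA$.
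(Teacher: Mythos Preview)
Your proposal is correct and follows essentially the same route as the paper. The $T$-gate argument is identical: compute $T_{all}$ on the logical basis states using the triply-even hypothesis and the identity $|(f+\overline{1})\cap M^\pm|=|M^\pm|-|f\cap M^\pm|$. For the $S$- and $H$-gates the paper likewise shows $U_{all}\,\calS\,U_{all}^{-1}=\calS$ (hence $[U_{all},\Pi]=0$) and then reads off the logical action by conjugating $X_L\Pi$ and $Z_L\Pi$.

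Two small remarks. First, you need not verify that the gauge group $\calG$ is preserved: since $\rho_L(\eta)=O_L\Pi$ with $O_L$ a linear combination of $I,X_L,Y_L,Z_L$, the identities $S_{all}\Pi S_{all}^{-1}=\Pi$, $S_{all}X_L S_{all}^{-1}=i^mX_LZ_L$, and $S_{all}Z_LS_{all}^{-1}=Z_L$ already give $S_{all}\rho_L(\eta)S_{all}^{-1}=\rho_L(S^m\eta S^{-m})$ directly, with no reference to the gauge subsystem. This is exactly what the paper does, and it sidesteps having to control $S_{all}X(g)S_{all}^{-1}$ for general $g\in\dot{\calB}$. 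Second, your observation that the conjugate of $X(f)$ carries $Z(f\cap(M^+\cup M^-))$ rather than $Z(f)$ is sharper than the paper, which silently writes $Z(f)$; the paper is effectively assuming $M^+\cup M^-=[n]$, which holds in every instance used later (the color-code sublattices $\Delta^0_t,\Delta^2_t$ partition $\Lambda_t$, and this persists through the doubling construction). With that assumption the ``absorption obstacle'' you flag disappears, since $f\cap(M^+\cup M^-)=f\in\calA\subseteq\calB$.
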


\begin{proof}
We start from Eq.~(\ref{Tall}). Since $\calB=\dot{\calA}$, the
codespace is two-dimensional with an orthonormal basis
$|0_L\rangle$, $|1_L\rangle$ defined in Eq.~(\ref{logical01}).
Clearly, $T_{all} |f\rangle = e^{i\pi/4 (|f\cap M^+| - |f\cap M^-|)} |f\rangle$
for any basis state $|f\rangle$ of $n$-qubits. 
Combining this identity and definition of the logical state $|0_L\rangle$
one gets
\[
T_{all}|0_L\rangle=T_{all}\sum_{f\in \calA} |f\rangle =
\sum_{f\in \calA} e^{i\pi/4 (|f\cap M^+| - |f\cap M^-|)} |f\rangle
=|0_L\rangle.
\]
Here we ignored the normalization of $|0_L\rangle$.
Using the identity $|(f+\overline{1})\cap M^\pm|=|M^\pm|-|f\cap M^\pm|$
and definition of the logical state
$|1_L\rangle$ one gets 
\[
T_{all}|1_L\rangle=T_{all}\sum_{f\in \calA} |f+\overline{1}\rangle =
e^{i\pi m/4}\sum_{f\in \calA} e^{-i\pi/4 (|f\cap M^+| - |f\cap M^-|)} |f+\overline{1}\rangle
=e^{i\pi m/4}|1_L\rangle.
\]
This proves Eq.~(\ref{Tall}). 
Consider now Eq.~(\ref{Sall}). Let $\calH \equiv \css{\calA}{\calB}$
be the stabilizer group.
 First we claim that $\calH$
  is invariant under the conjugated action 
of $S_{all}$, that is, 
\begin{equation}
\label{VSVdag=S}
S_{all} \, \calH S_{all}^{-1}=\calH.
\end{equation}
Indeed,  since $S_{all}$ commutes with $Z$-type stabilizers,
 it suffices to check that $S_{all} X(f)S_{all}^{-1} \in \calH$ for all $f\in \calA$.
Using the identities $SXS^{-1} =iXZ$ 
and $S^{-1} X S=-iXZ$  one gets
\[
S_{all} X(f)S_{all}^{-1}  =i^{|f\cap M^+| - |f\cap M^-|}  X(f) Z(f)=X(f) Z(f)\in \calH.
\]
Here we used the assumption that $\calA$ is doubly even and 
$\calA\subseteq \calB$ which implies $Z(f)\in \calH$ for any $f\in \calA$.
This proves Eq.~(\ref{VSVdag=S}). We conclude that $S_{all}$ preserves the codespace,
$S_{all}\, \Pi S_{all}^{-1} =\Pi$. Therefore
\[
S_{all} \, X_L \Pi S_{all}^{-1} = S_{all} \, X_L S_{all}^{-1} \Pi = i^m X_L Z_L \Pi 
\quad \mbox{and} \quad S_{all} \, Z_L \Pi S_{all}^{-1} =Z_L \Pi.
\]
The assumption that $m$ is odd implies $S^m X S^{-m} =i^m X Z$ and $S^m Z S^{-m} =Z$.
We conclude that   $S_{all}$
implements a gate $S^m$ on the logical qubit which proves Eq.~(\ref{Sall}).

The same arguments as above show that  $H_{all}\Pi H_{all}=\Pi$.
Since $H_{all}$ interchanges the logical operators $X_L$ and $Z_L$,
this proves Eq.~(\ref{Hall}). 
\end{proof}
We note that  $T=T^p S^q Z^r$
whenever the integers $p,q,r$ obey $p+2q+4r=1{\pmod 8}$. 
Using this identity one can implement the logical $T$-gate 
as a composition of the gate $T^p$ with any odd $p$, the
gate $S^q$ with any odd $q$, and the logical Pauli operator $Z_L$.

\section{Commuting Pauli errors through $T$-gates}
\label{sec:Tgate}

Consider a regular code $\css{\calA}{\calB}$
with the logical states $|0_L\rangle$, $|1_L\rangle$ and suppose
the code has a transversal $T$-gate
that can be implemented by a product operator $T_{all}=T^{\otimes n}$.
Let $|\psi\rangle=\alpha|0_L\rangle + \beta|1_L\rangle$ be some 
logical state and  $\rho=|\psi\rangle\langle\psi|$.
In the absence of errors $T_{all}$ preserves the codespace, so that 
$\eta \equiv T_{all} \rho T_{all}^\dag$
is also a logical state. Consider now a memory error $X(e)$.
Our goal is understand what happens when 
$T_{all}$ is applied to a corrupted state $X(e)\rho X(e)$. 
We will show that a composition of $T_{all}$ and a certain Pauli twirling map
 transforms the initial state $X(e)\rho X(e)$
 to a probabilistic  mixture of states $E \eta E^\dag$,
 where $E=X(e)Z(f)$ and  $f\subseteq e$ is random vector drawn from a suitable probability distribution $P(f|e)$.
However, this is true only if the initial error $e$ satisfies a technical condition 
that we call {\em cleanability}. To state this condition, 
represent the binary space $\FF_2^n$ as
a disjoint union of cosets of $\calA$. By definition,  each coset
is a set of vectors $e+\calA$ for some $e\in \FF_2^n$.
Since $\rho$ is stabilized by $X(\calA)$, the state
$X(e)\rho X(e)$ depends only on the coset $e+\calA$.
\begin{dfn}
\label{dfn:clean}
A coset of $\calA$ is cleanable
iff it has a representative $e$ such that 
no vector $g\in \calA^\perp \cap \calO$
has support inside $e$.
\end{dfn}
Recall that the set $\calA^\perp \cap \calO$ describes undetectable
non-trivial $Z$-errors.
Thus a coset is cleanable iff it has a representative whose support contains no   such errors.
In particular, a coset is cleanable whenever it has a representative of 
weight less than $d(\calA)$, see Eq.~(\ref{distance}).
In practice, one can create a list of all cleanable cosets  by examining all vectors $e\in \FF_2^n$ 
and checking whether a set $M(e)\equiv \{ g\in \calO^{|e|}\, : \, g[e]\in \calA^\perp\}$
is non-empty. The  set $M(e)$  is determined by a linear system over $\FF_2$
with $|e|$ variables and $1+\dim{(\calA)}$ equations. If $M(e)$ is empty, the coset
$e+\calA$ is marked as cleanable. Cosets remaining unmarked at the end of this process
are not cleanable. 

From now on we assume that the coset $e+\calA$ is cleanable and
$e$ is a representative that obeys the condition of Definition~\ref{dfn:clean}. 
Define a twirling map $W_\calA$ that applies a random 
$X$-stabilizer drawn from the uniform distribution, 
\[
W_\calA(\omega)=\frac1{|\calA|} \sum_{g\in \calA} X(g) \omega X(g).
\]
Note that $W_\calA$ has trivial action on any logical state. 
Consider states 
\[
\eta=T_{all} \rho T_{all}^\dag \quad \mbox{and} \quad 
\tilde{\eta}=W_\calA( T_{all} X(e) \rho  X(e)  T_{all}^\dag).
\]
The identity
$TXT^\dag = e^{i\pi/4} X S^\dag$
implies
$T_{all} X(e)T_{all}^\dag\sim X(e) S^\dag(e)$,
where $\sim$ stands for some phase factor. Thus
\begin{equation}
\label{rho1}
\tilde{\eta}=W_\calA( X(e)S(e)^\dag \eta S(e) X(e) ).
\end{equation}
Next, the identity $S\sim (I-iZ)/\sqrt{2}$ implies
\[
S^\dag(e)\sim 2^{-|e|/2} \sum_{f\subseteq e} i^{|f|} Z(f),
\]
where $f\subseteq e$ is a shorthand for $\supp{f}\subseteq \supp{e}$.
This yields
\[
X(e) \tilde{\eta} X(e)=2^{-|e|}\sum_{f,f'\subseteq e} i^{|f|-|f'|}  W_\calA(Z(f) \eta Z(f')).
\]
We note that $\eta$ is  invariant under all stabilizers
$X(g)$
that appear in the twirling map since it is a logical state.
 Commuting  a stabilizer $X(g)$ from the twirling map
towards $\eta$ gives an extra phase factor $(-1)^{g(f+f')}$.
Summing up this phase factor over  $g\in \calA$ 
gives a non-zero contribution only if $f+f'\in \calA^\perp$. This shows that 
\begin{equation}
\label{ff'}
X(e)  \tilde{\eta}  X(e)=2^{-|e|} \sum_{\substack{f,f'\subseteq e\\ f+f'\in \calA^\perp\\ }} i^{|f|-|f'|}   Z(f) \eta Z(f').
\end{equation}
By cleanability assumption,
the  sum in Eq.~(\ref{ff'}) gets non-zero contributions only from the terms
with $f+f'\in \calA^\perp \cap \calE\equiv \dot{\calA}$.
Furthermore, $\vphantom{\hat{\hat{A}}} \dot{\calA}=\calB$ since we assumed that the
code is regular.
Define a subspace 
\[
\calB(e)=\{ g\in \calB \, :\, g\subseteq e\}.
\]
Performing a change of variables $f'=f+g$ in Eq.~(\ref{ff'})  gives
\[
X(e)  \tilde{\eta}  X(e) =2^{-|e|} \sum_{f\subseteq e}\; \sum_{g\in \calB(e)} 
i^{|f|-|f+g|} Z(f) \eta Z(f).
\]
Here we noted that $\eta Z(f+g)=\eta Z(f)$
since $Z(g)$ is a stabilizer and $\eta$ is a logical state. 
Next, the identity
$|f+g|=|f|+|g|-2|f\cap g|$  implies  $i^{|f|-|f+g|}=(-1)^{\trn{f} g + |g|/2}$.
Note that $|g|$ is even since $\calB\subseteq \calE$.
We conclude that
\[
X(e)  \tilde{\eta}  X(e) =2^{-|e|}\sum_{f\subseteq e}\;|\calB(e)|^{-1} \sum_{g,h\in \calB(e)} 
(-1)^{\trn{f} g+ \trn{h}g + |g|/2}  Z(f) \eta Z(f).
\]
Here we introduced a dummy variable $h\in \calB(e)$  and used the fact that 
$\eta$ is invariant under stabilizers $Z(h)$ with $h\in \calB(e)$.
The sum over $h$ gives a non-zero contribution only if  $g\in \calB(e)^\perp$. 
We arrive at
\begin{equation}
\label{XrhoX}
X(e) \tilde{\eta}  X(e)= \sum_{f\subseteq e} P(f|e) \, Z(f) \eta Z(f),
\end{equation}
where
\begin{equation}
\label{P(f|e)}
P(f|e)=2^{-|e|} \sum_{g\in \calB(e) \cap \calB(e)^\perp}
(-1)^{\trn{f} g + |g|/2}.
\end{equation}
We claim that $P(f|e)$ is a normalized probability
distribution on the set of vectors $f\subseteq e$,  that is,
\[
\sum_{f\subseteq e} P(f|e)=1 \quad \mbox{and} \quad P(f|e)\ge 0.
\]
Indeed, normalization of $P(f|e)$ follows trivially from Eq.~(\ref{P(f|e)}).
To prove that $P(f|e)\ge 0$ choose an arbitrary basis 
$\calB(e)\cap \calB(e)^\perp=\langle g^1,\ldots,g^m\rangle$.
For any vector $g=\sum_{a=1}^m x_a g^a$ one has
\[
|g|=\sum_{a=1}^m x_a |g^a| -2\sum_{1\le a<b\le m} x_a x_b |g^a\cap g^b| =\sum_{a=1}^m x_a |g^a| {\pmod 4}
\]
since all overlaps $|g^a\cap g^b|$ must be even. It follows that 
\begin{equation}
\label{P(f|e)1}
P(f|e)=2^{-|e|} \prod_{a=1}^m \left[ 1+(-1)^{\trn{f}g^a + |g^a|/2} \right]\ge 0.
\end{equation}

To conclude, the  transversal operator $T_{all}$ followed by the twirling map transforms the initial 
error  $X(e)$  to a probabilistic mixture of
errors $X(e)Z(f)$, where $f\subseteq e$ is drawn from the distribution $P(f|e)$.
However, this is true only if  the coset $e+\calA$ is cleanable.
 More generally, the initial state may have an error $X(e)Z(g)$. Since $Z$-type errors
commute with $T_{all}$, the final error becomes $X(e)Z(g+f)$, where $f\subseteq e$ is a random vector as above.

The above discussion implicitly assumes that the initial error $X(e)$ already includes
a recovery operator applied by the decoder to correct the preceding memory errors.
Ideally, the recovery cancels the error modulo stabilizers. In this case
$X(e)$ itself is a stabilizer and the transversal gate $T_{all}$ creates no additional errors.
If $X(e)$ is not a stabilizer but the coset $e+\calA$ is cleanable, 
application of $T_{all}$ can possibly create new errors $Z(f)$ with $f\subseteq e$.
However, all these new $Z$-errors are either detectable or trivial
due to the cleanability assumption.
Thus, if the decoder 
has a sufficiently small list of candidate initial errors $X(e)$, 
it might be possible to correct the new $Z$-errors at the later stage. 
On the other hand, if the coset $e+\calA$ is not cleanable, 
at least one of the new $Z$-errors is 
undetectable and non-trivial, so the 
the encoded information is lost. 
Although the decoder cannot test cleanability condition, 
this can be easily done in numerical simulations where the actual error
is known at each step. A good strategy is to test the cleanability condition prior to each
application of the logical $T$-gate and abort the simulation whenever the test fails.
Conditioned on passing the cleanability test at each step,
the only effect of the  transversal gates $T_{all}$  is to 
propagate pre-existing $X$-errors to new $Z$-errors as described above.
In this way the simulation can be performed using the standard stabilizer formalism,
even though the circuit may contain  non-Clifford gates.

\section{Maximum likelihood error correction and gauge fixing}
\label{sec:ML}

In this section we develop error correction and gauge fixing algorithms
 for simulation of 
logical Clifford$+T$ circuits in a presence of noise. 
A  fault-tolerant protocol implementing such logical circuit
is modeled as a sequence of the following elementary steps:
\begin{itemize}
\item Memory errors,
\item Noisy syndrome measurements,
\item Code deformations extending the gauge group to a larger group,
\item Code deformations restricting the gauge group to a subgroup,
\item Transversal logical Clifford gates,
\item Transversal logical $T$-gates.
\end{itemize}  
We shall label steps of the protocol by an integer time variable $t=0,1,\ldots,L$.
 At each step $t$ the logical state
is encoded by a  quantum code of CSS type
with a stabilizer group $\calS_t$ and a gauge group $\calG_t$ defined as 
\[
\calS_t= \css{\calA_t}{\calB_t} \quad \mbox{and} \quad
\calG_t=\css{\dot{\calB}_t}{\dot{\calA}_t}.
\]
Each code has one logical qubit
with logical Pauli operators $X_L=X(\overline{1})$, $Z_L=Z(\overline{1})$
 and $n$ physical qubits. 
A state of the physical qubits at the $t$-th step
will be represented as $E_t \omega_t E_t^\dag$,
where $\omega_t$ is some logical state of the code $\calS_t$
and  $E_t$ is a Pauli  error.
To simplify notations, we shall represent 
$n$-qubit Pauli operators $X(a)Z(b)$  by binary vectors $e\in \FF_2^{2n}$ such that
$e=(a,b)$. Here we ignore the overall phase.
Multiplication in the Pauli group corresponds to addition in $\FF_2^{2n}$.
Each Pauli error $e\in \FF_2^{2n}$ is contained in some coset of the gauge group $\calG_t$,
namely, $e+\calG_t$.   The decoder described below operates on the level of cosets without ever trying to distinguish
errors from the same coset. This is justified since 
the logical state $\omega_t$ is invariant under gauge operators,
see Section~\ref{sec:subs}.
We shall parameterize cosets of $\calG_t$ by binary vectors as follows.
Let $A_t$ and $B_t$ be some fixed generating matrices of subspaces $\calA_t+\langle \overline{1}\rangle$ and 
$\calB_t+\langle \overline{1}\rangle$. Define a matrix
\begin{equation}
\label{Ct}
C_t=\left[ \ba{cc} B_t  &  \\  & A_t \\ \ea \right]. 
\end{equation}
Then the coset of a Pauli error $e=(a,b)$ is parameterized by a vector $f=C_te=(B_ta,A_t b)$.
Note that  the matrix $C_t$ has size $c_t\times 2n$, where $c_t=2+\dim{(\calS_t)}$.
Accordingly, the gauge group $\calG_t$ has $2^{c_t}$ cosets.
Furthermore, if $\css{\calA_t}{\calB_t}$ is a regular code, that is,
$\calA_t=\dot{\calB_t}$  and $\calB_t=\dot{\calA}_t$ 
then $A_t$ and $B_t$ become generating matrices of $\calB_t^\perp$
and $\calA_t^\perp$ respectively. In this case 
$B_ta$ determines the coset of $\calA_t$ that contains $a$
while $A_tb$ determines the coset of $\calB_t$ that contains $b$.
Furthermore, $c_t\le n+1$ with the equality in the case of regular codes.

We assume that  memory errors occur at half-integer times
$t-1/2$, where $t=1,\ldots,L$.
A memory error is modeled  by a random vector $e\in \FF_2^{2n}$
drawn from some fixed probability distribution $\pi$. For example, 
the depolarizing noise that independently applies one of the Pauli 
operators $X,Y,Z$ to each qubit with probability $p/3$ 
can be described by 
$\pi(e)=\prod_{j=1}^n \pi_1(e_j,e_{n+j})$, where 
$\pi_1(0,0)=1-p$ and $\pi_1(1,0)=\pi_1(0,1)=\pi_1(1,1)=p/3$.
Let $e^t\in \FF_2^{2n}$ be the memory error that occurred at 
time $t-1/2$
and $e^1+\ldots+e^t$ be the accumulated error at the $t$-th step.
Assuming that  $e\in \FF_2^{2n}$ is drawn from the distribution $\pi$,
the corresponding coset $f=C_te$ is drawn from a distribution
\begin{equation}
\label{P_t}
P_t(f)=\sum_{e\, : \, C_te=f} \pi(e).
\end{equation}
The coset of $\calG_t$ that contains the  accumulated error  
at the $t$-th step is 
\begin{equation}
\label{full}
f^t=C_t(e^1+e^2+\ldots+e^t).
\end{equation}

Syndrome measurements occur at integer times $t=1,2,\ldots,L$.
The measurement performed at the $t$-th step determines syndromes for
some subset of $b_t$ stabilizers from the stabilizer group $\calS_t$.
These stabilizers may or may not be independent.
They may or may not generate the full  group $\calS_t$.
The syndrome measurement  reveals a partial information about the coset 
of the accumulated error $f^t$ which can be represented by a vector 
\[
s^t=M_t f^t\in \FF_2^{b_t}
\]
for some binary matrix $M_t$ of size $b_t \times c_t$.
We shall refer to the vector $s^t$ as a {\em syndrome}.
A precise form of the matrix $M_t$  is not important at this point. 
The only restriction that we impose is that  $M_tC_t e=0$ if  $e$ represents a logical Pauli operator,
that is, $e=(\overline{1},\overline{0})$ or $e=(\overline{0},\overline{1})$.
This ensures that syndrome measurements do not affect the logical qubit.
In a presence of measurement errors the observed syndrome 
may differ from the actual one. 
We model a measurement error
by  a random vector $e'\in \FF_2^{b_t}$ drawn from some fixed probability distribution $Q_t$
such that the syndrome observed at the $t$-th step
is $s^t=M_tf^t + e'$.
The family of  matrices $C_t,M_t$ and the probability distributions $P_t,Q_t$
capture all features of the error model that matter for the decoder.
The decoding problem can be stated as follows.

\vspace{3mm}
\noindent
{\bf ML Decoding:} {\em Given a list of observed syndromes $s^1,\ldots,s^L$,
determine which coset of the gauge group $\calG_L$ is most likely to contain the
accumulated error at the last  time step.}
\vspace{3mm}

Below we describe an implementation of the ML decoder with  the running time $O(L c2^c)$,
where $c=\max_t c_t$.
The decoder can be used in the online regime such that
the syndrome $s^t$ is revealed only at the time step $t$.
 In this setting the running time is $O(c2^c)$ per time step. 
The decoder also includes a preprocessing step that 
computes  matrices $C_t,M_t$, the effective error model $P_t,Q_t$,
and certain additional data required for implementation of logical $T$-gates. 
The preprocessing step takes time $2^{O(n)}$ in the worst case. 
Since this step  does not depend on the measured syndromes, it  can be performed offline.
The preprocessing can be done more efficiently if the 
code has a special structure. 

Consider first the simplest case when all codes are the same
and no logical gates are applied. Our model then describes a quantum memory
with one logical qubit. Since all codes are the same, we temporarily suppress the subscript $t$
in $\calG_t,C_t,M_t,P_t,Q_t$ and $c_t,b_t$. 
For simplicity, we assume that the initial state at time $t=0$ has no errors, that is,
$f^0=\overline{0}$. 
Consider some time step $t$ and  let $f\in \FF_2^c$ be a coset of $\calG$.
Let $\rho_t(f)$ be the probability that $f$ contains the accumulated  error
at the $t$-th step, $f=C(e^1+\ldots+e^t)$, 
conditioned on the observed syndromes $s^1,\ldots,s^t$.
Since we assume that all errors occur independently, one has 
\begin{equation}
\label{eq111}
\rho_t(f)=\sum_{f^1,\ldots,f^{t-1}\in \FF_2^c}\; \;  \prod_{u=1}^t P(f^u+f^{u-1}) \prod_{u=1}^t Q(s^u + Mf^u),
\end{equation}
where we set $f^t\equiv f$ and $f^0\equiv \overline{0}$.
Also we ignore the overall normalization of $\rho_t(f)$
which depends only on the observed syndromes. 
We set $\rho_0(f)=1$ if $f=\overline{0}$ and $\rho_0(f)=0$ otherwise.
It will be convenient to represent the probability distribution $\rho_t(f)$ by a 
{\em likelihood vector}
\begin{equation}
\label{likelihood}
|\rho_t\rangle = \sum_{f\in \FF_2^{c}} \rho_t(f)\, |f\rangle
\end{equation}
that belongs to the  Hilbert space of $c$ qubits. 
Define $c$-qubit operators 
\begin{equation}
\label{PQ}
P=\sum_{f,g\in \FF_2^{c}} P(f+g) |f\rangle\langle g|
\quad \mbox{and} \quad
Q_t=\sum_{f\in \FF_2^c} Q(s^t + Mf) |f\rangle\langle f|.
\end{equation}
Then
\[
|\rho_t\rangle = (Q_t P) \cdots  (Q_2 P) ( Q_1 P)|\overline{0}\rangle.
\]
Next we observe that $P$ commutes with any $X$-type Pauli operator,
\[
\langle f| X(h) P|g\rangle =
\langle f+h|P|g\rangle = P(f+g+h) = \langle f|P|g+h\rangle= \langle f|P  X(h) |g\rangle.
\]
Thus $P$ is diagonal in the $X$-basis
$\{ H|f\rangle\}$, where
 $H$ is the Walsh-Hadamard transform on $c$ qubits, 
\[
H|f\rangle =2^{-c/2} \sum_{g\in \FF_2^{c}} (-1)^{\trn{f} g} |g\rangle.
\]
Note that $H^2=I$. 
The above shows that a  matrix
$\hat{P} \equiv HPH$ is diagonal in the $Z$-basis,
\begin{equation}
\label{chi}
\hat{P} |f\rangle = \hat{P}(f) |f\rangle, \quad \mbox{where} \quad \hat{P}(f)=\sum_{g\in \FF_2^{c}} (-1)^{\trn{f} g} P(g).
\end{equation}
Assuming that the likelihood vector $\rho_{t-1}$ has been already computed
and stored in a classical memory as a real vector of size $2^c$, one can compute
$\rho_t$ in four steps:
\begin{enumerate}
\item Apply the Walsh-Hadamard transform $H$.
\item Apply the diagonal matrix $\hat{P}$.
\item Apply the Walsh-Hadamard transform $H$.
\item Apply the diagonal matrix $Q_t$.
\end{enumerate}
Obviously, steps (2,4) take time $O(2^{c})$. Using the fast Walsh-Hadamard transform
one can implement steps $(1,3)$ in time $O(c2^{c})$. 
Overall, computing all  likelihood vectors $\rho_1,\ldots,\rho_L$ requires time 
$O(Lc2^c)$.  Finally, the most likely coset of $\calG$
at the final time step is 
$m^L\equiv \arg \max_f \rho_L(f)$. It can be found in time $O(2^c)$. 
If one additionally assumes that the final syndrome measurement is noiseless,
there are only four cosets consistent with the final syndrome $s^L$. In this case the
most likely coset $m^L$ can be computed in time $O(1)$.

 Less formally, the ML decoding can be viewed as a competition between two 
opposing forces. Memory errors described by $P$-matrices
tend to spread the support of the likelihood vector over many cosets, whereas
syndrome measurements described by $Q$-matrices tend to localize the likelihood vector on  cosets $f$
whose syndrome $Mf$ is sufficiently close to the observed one.
Typically, the most likely coset $m^L$ coincides with the coset of the accumulated error
if the likelihood vector $\rho_t$ remains sufficiently peaked at all time steps.
Once the likelihood vector becomes flat, a successful decoding might not be possible
and the encoded information is lost.

Next let us  extend the ML decoder to code deformations that  can
change the gauge group $\calG_t$, see Section~\ref{sec:subs}.
For concreteness, assume that  a code deformation
mapping $\calG_{t-1}$ to $\calG_t$
 occurs at time $t-1/4$.
We assume that for any  time step $t$ one has
either $\calG_{t-1}\subseteq \calG_t$ or $\calG_{t-1}\supseteq \calG_t$.
The general case can be reduced  to one of these cases by adding a  dummy 
time step with a gauge group $\calG_{t-1}\cap \calG_t$,
such that first $\calG_{t-1}$ is reduced to $\calG_{t-1}\cap \calG_t$
and next $\calG_{t-1}\cap \calG_t$  is extended to $\calG_t$.

\noindent
{\em Case~1:} $\calG_{t-1}\subseteq \calG_t$.
Then each coset of $\calG_{t-1}$ uniquely determines a coset of $\calG_t$.
In other words,  $f^t=D_t f^{t-1}$ for some binary matrix $D_t$
of size $c_t\times c_{t-1}$. Note that $c_t\le c_{t-1}$ since a larger group
has fewer cosets.
Then the code deformation is equivalent to updating the likelihood
vector according to 
\begin{equation}
\label{merge}
|\rho_t\rangle \gets \sum_{f\in \FF_2^{c_{t-1}}} \; \rho_t(f) |D_t f\rangle.
\end{equation}
Let us show that the updated vector can be computed 
 in time 
$O(2^{c_{t-1}})$. Indeed, for any fixed $g\in \FF_2^{c_t}$ 
a linear system $D_t f=g$ has $2^{c_{t-1}-c_t}$ solutions $f\in \FF_2^{c_{t-1}}$.
One can create a list of all solutions $f$
and sum up the amplitudes $\rho_t(f)$  in 
time $O(2^{c_{t-1}-c_t})$. This would give a single amplitude 
of the updated likelihood vector. 
Since the number of such amplitudes is $2^{c_t}$, 
the overall time scales as 
 $O(2^{c_{t-1}})$. 
A good strategy for numerical simulations is to choose 
matrices $C_t$ in Eq.~(\ref{Ct})  such that 
$C_t$ is obtained from $C_{t-1}$ by removing a certain subset of $c_{t-1}-c_t$ rows.
Then $f^t$ is obtained from $f^{t-1}$ by erasing a certain subset  of bits
and the  update Eq.~(\ref{merge})  amounts to 
taking a partial trace.

\noindent
{\em Case~2:} $\calG_{t-1}\supseteq \calG_t$.
 Note that $c_t\ge c_{t-1}$ since a smaller group
has more cosets.  
Then each coset of $\calG_{t-1}$ splits into $2^{c_t-c_{t-1}}$ cosets of $\calG_t$.
Thus  $f^{t-1}=D_t f^t$ for some binary matrix $D_t$ of size $c_{t-1}\times c_t$.
As we argued in Section~\ref{sec:subs}, the coset $f^t$ is drawn from the uniform
distribution on the set of all cosets of $\calG_t$ contained in $f^{t-1}$,
see Eq.~(\ref{rhoL2}).
Then the code deformation is equivalent to updating the likelihood
vector according to 
\begin{equation}
\label{split}
|\rho_t\rangle \gets2^{c_{t-1}-c_t}  \sum_{f\in \FF_2^{c_t}} \; \rho_t(D_t f) |f\rangle.
\end{equation}
The same arguments as above show that the
update can be performed in time $O(2^{c_t})$.
A good strategy for numerical simulations is to choose
matrices $C_t$ in Eq.~(\ref{Ct})  such that 
$C_t$ is obtained from $C_{t-1}$ by adding  a certain subset of $c_t-c_{t-1}$ rows.
Then $f^t$ is obtained from $f^{t-1}$ by inserting random bits on a subset
of coordinates 
and the  action of $\tilde{D}_t$ on the likelihood vector amounts to 
taking a tensor product with the maximally mixed state of $c_t-c_{t-1}$ qubits.

Next let us discuss transversal logical gates. 
For concreteness, assume that a logical gate applied at a step $t$
occurs at time $t+1/4$, where $t=0,\ldots,L-1$.
A logical gate can be applied at a step $t$
only if the corresponding  code is regular, that is, 
$\calS_t=\calG_t=\css{\calA_t}{{\calB}_t}$,
where $\calB_t=\dot{\calA}_t$.
We only allow logical gates from the Clifford$+T$ basis
and assume that the code $\calS_t$ 
obeys  transversality conditions of Lemma~\ref{lemma:transversal}.
Let $V$ be one of the transversal operators
$H_{all}$, $S_{all}$, $T_{all}$ defined in Section~\ref{sec:trans}.

We start from logical Clifford gates. 
Let $P_t$ be the Pauli operator that represents the accumulated 
error $e^1+\ldots+e^t$.
The initial state before application of $V$
has a form $P_t \omega_t P_t^\dag$,
where $\omega_t$ is a logical state
of the code $\calS_t$.
The final state after application of $V$ is
\[
VP_t \omega_t P_t^\dag V^\dag = Q_t \tilde{\omega}_t Q_t^\dag
\]
where  $\tilde{\omega}_t=V\omega_t V^\dag$ is  a new logical state
 and $Q_t=VP_t V^\dag$ is a new
Pauli error. Thus the likelihood of a coset $P\calG_t$ before application of $V$
must be the same as the likelihood of a coset $Q \calG_t$ after application of $V$,
where $P$ is an arbitrary Pauli error and $Q=VPV^\dag$.
Note that the coset $Q\calG_t$ 
depends only on the coset of $P$ since
$V\calG_tV^\dag=\calG_t$, see  the proof of  Lemma~\ref{lemma:transversal}.
Thus the action of $V$ on cosets of $\calG_t$ defined above can be described by
a linear map $v\,: \, \FF_2^{c_t} \to \FF_2^{c_t}$.
For concreteness, assume that  cosets $f\in \FF_2^{c_t}$ are  parameterized as in Eq.~(\ref{Ct})
such that an error $X(a)Z(b)$ belongs to a coset
$f=(\alpha,\beta)$, where
$\alpha=B_ta$ and $\beta=A_tb$.
If $V$ implements a logical $H$-gate then
$v(\alpha,\beta)=(\beta,\alpha)$. If $V$ implements a logical $S$-gate
then $v(\alpha,\beta)=(\alpha,\alpha+\beta)$. 
One can compute the action of any other Clifford gate on cosets
in a similar fashion. 
This shows that application of $V$ is equivalent to updating the likelihood vector
according to 
\begin{equation}
\label{v}
|\rho_t\rangle \gets \sum_{f\in \FF_2^{c_t}} \rho_t(f) |vf\rangle.
\end{equation}
This update can be performed in time  $O(2^{c_t})$.

Next assume that $V=T_{all}$ is a   transversal $T$-gate. 
As we argued in Section~\ref{sec:Tgate}, it is desirable
to correct pre-existing memory errors
of $X$-type before applying the $T$-gate.
The ML decoder performs error correction
by examining the current likelihood vector $\rho_t$
and choosing a  recovery operator from the most likely coset
of $X$-errors
\begin{equation}
\label{recovery}
\alpha^*=\arg \max_{\alpha} \sum_{\beta} \rho_t(\alpha,\beta).
\end{equation}
Applying a recovery operator $X(\alpha^*)$ 
is equivalent to updating the likelihood vector according to
\[
|\rho_t\rangle \gets \sum_{\alpha,\beta} \rho_t(\alpha,\beta)|\alpha+\alpha^*,\beta\rangle.
\]
This update can be performed in time $O(2^{c_t})$.

Recall that for regular CSS codes a Pauli error $(a,b)$
belongs to a coset $(\alpha,\beta)=C_t(a,b)$, where $\alpha=B_ta$ is the coset of $\calA_t$
that contains $a$ and $\beta=A_tb$ is the coset of $\calB_t$ that contains $b$.
Let $\calN_t$ be a set of cleanable cosets of $\calA_t$,
see Definition~\ref{dfn:clean}.
As we argued in Section~\ref{sec:Tgate}, a lookup table of cleanable
cosets can be computed in time $O(2^n)$ at the preprocessing step. 
The next step of the ML decoder is to project the likelihood vector 
on the subspace spanned by cleanable cosets,
\[
|\rho_t\rangle \gets \sum_{\alpha \in \calN_t}\sum_\beta  \rho_t(\alpha,\beta ) |\alpha,\beta\rangle.
\]
This update can be performed in time $O(2^{c_t})$.

Now the decoder is ready to apply the transversal gate $T_{all}$.
Here we assume for simplicity that $T_{all}=T^{\otimes n}$.
For each cleanable coset $\alpha$  let $e(\alpha)\in \FF_2^n$
be some fixed representative of $\alpha$ that satisfies
condition of Definition~\ref{dfn:clean}, that is, 
$B_te(\alpha)=\alpha$ and 
the support of $e(\alpha)$ contains no vectors from $\calA_t^\perp \cap \calO$.
A lookup table of vectors $e(\alpha)$ 
must be  computed at the preprocessing step.
As was argued in Section~\ref{sec:Tgate}, a composition of $T_{all}$ and 
a twirling map $\calW_{\calA_t}$ maps a pre-existing Pauli error
$X(e)Z(g)$ to a probabilistic mixture of Pauli errors
$X(e)Z(f+g)$, where $f\subseteq e$ is a random vector drawn
from distribution $P(f|e)$ defined in Eqs.~(\ref{P(f|e)},\ref{P(f|e)1}).
Since the twirling map has no effect on the likelihood vector, it can be skipped.
Thus $T_{all}$  updates the likelihood vector according to
\begin{equation}
\label{Tupdate1}
|\rho_t\rangle \gets
\sum_{\alpha,\beta}\;  \sum_{f\subseteq e(\alpha)} P(f|e(\alpha)) \rho_t(\alpha,\beta) |\alpha,\beta+A_t f\rangle.
\end{equation}
Define an auxiliary operator $\Gamma_\alpha$ such that 
\begin{equation}
\label{Gamma}
\Gamma_\alpha |\beta\rangle =\sum_{f\subseteq e(\alpha)} P(f|e(\alpha)) |\beta+A_t f\rangle.
\end{equation}
This operator acts on the Hilbert space 
of $m$ qubits, where $m=\dim{(\calB_t^\perp)}=1+\dim{(\calA_t)}$.
Then Eq.~(\ref{Tupdate1}) can be written as  
\begin{equation}
\label{Tupdate2}
|\rho_t\rangle \gets \left(\sum_\alpha |\alpha\rangle\langle \alpha| \otimes \Gamma_\alpha\right) |\rho_t\rangle.
\end{equation}
We note that  $\Gamma_\alpha$ is diagonal in the $X$-basis since
\[
\Gamma_\alpha=\sum_{f\subseteq e(\alpha)} P(f|e(\alpha))  X(A_t f).
\]
Let $H$ be  the Walsh-Hadamard transform  on $m$ qubits.
Define an  operator
 $\hat{\Gamma}_\alpha=H\Gamma_\alpha H$ which is diagonal in the $Z$-basis
of $m$ qubits. The 
 diagonal matrix elements of $\hat{\Gamma}_\alpha$ are
\begin{equation}
\label{Tupdate3}
\langle \beta|\hat{\Gamma}_\alpha|\beta\rangle
=\sum_{f\subseteq e(\alpha)} P(f|e(\alpha)) (-1)^{\trn{\beta} A_t f}.
\end{equation}
Below we consider some fixed $\alpha$ and use a shorthand $e\equiv e(\alpha)$.
Substituting the explicit expression for $P(f|e)$ from Eq.~(\ref{P(f|e)})
into Eq.~(\ref{Tupdate3}) yields
\begin{equation}
\label{Tupdate4}
\langle \beta|\hat{\Gamma}_\alpha|\beta\rangle
=\sum_{g\in \calB_t(e)\cap \calB_t(e)^\perp}\;
2^{-|e|} \sum_{f\subseteq e} 
(-1)^{\trn{f} g  + |g|/2 + \trn{\beta}A_t f}.
\end{equation}
The sum over $f$ vanishes unless the restriction of the vector
$\trn{A}_t\beta$ onto $e$ coincides with $g$.
Define a diagonal $n\times n$ matrix $J_e$ that has non-zero
entries in the support of $e$, that is, $\mathrm{diag}(J_e)=e$. Then 
we get a constraint $g=J_e\trn{A}_t\beta$
and $g\in \calB_t(e)\cap \calB_t(e)^\perp$.
This shows that 
\begin{equation}
\label{Tupdate5}
\langle \beta|\hat{\Gamma}_\alpha|\beta\rangle
=\left\{
\ba{rcl}
(-1)^{\frac12 |J_e \trn{A}_t \beta |} &\mbox{if} & J_e \trn{A}_t \beta \subseteq \calB_t(e)\cap \calB_t(e)^\perp, \\
0 && \mbox{otherwise}. \\
\ea 
\right.
\end{equation}
The action of the  diagonal operator $\hat{\Gamma}_\alpha$ on any
state of $m$ qubits can be computed in time $O(2^m)$
using Eq.~(\ref{Tupdate5}).
Using the fast Walsh-Hadamard transform on $m$ qubits
one can compute the action of the original  operator $\Gamma_\alpha$
defined in Eq.~(\ref{Gamma}) on any
state of $m$ qubits in time $O(m2^m)$. 
The number of cleanable cosets $\alpha$ is upper bounded
by the total number of cosets of $\calA_t$.
The latter is equal to $2^k$, where $k=\dim{(\calA_t^\perp)}=1+\dim{(\calB_t)}$.
The updated likelihood vector can be computed  by evaluating each
term $\alpha$ in Eq.~(\ref{Tupdate2}) individually and summing up these terms.
This calculation requires time $O(m2^{k+m})=O(c_t2^{c_t})$.
Overall, application of the transversal $T$-gate requires time $O(c_t2^{c_t})$.

\section{Logical Clifford$+T$ circuits  with the $15$-qubit code}
\label{sec:CT15}

In this section we apply the ML decoder 
to a specific fault-tolerant protocol which similar to the one proposed by  Anderson et al~\cite{Anderson2014}.
The protocol simulates a logical quantum circuit composed of Clifford and $T$ gates
by alternating between two error correcting codes: the 
$7$-qubit color code (a.k.a. the Steane code) and the
$15$-qubit Reed-Muller code.
We shall label these two codes $C$ and $T$ 
since they provide  transversal Clifford gates and the $T$-gate respectively.
Both $C$ and $T$ are regular CSS codes that can be obtained by the gauge fixing method  from 
a single subsystem CSS code that we call the base code.
This section
may also serve  as simple example  illustrating the general construction 
of doubled color codes
described in Sections~\ref{sec:summary}-\ref{sec:gadget}.

Our starting point is the $7$-qubit color code.
Consider a graph $\Lambda=(V,E)$ 
with a set of vertices $V=[7]$ and a set of edges $E$
shown on Fig.~\ref{fig:steane}(a).
For consistency with the subsequent sections, we shall refer to the graph
$\Lambda$ as a lattice. Vertices of $\Lambda$ will be referred to as sites. 
We place one qubit at each site of the lattice.
Let $f^1,f^2,f^3\subseteq \FF_2^7$ be the three faces of $\Lambda$, 
see Fig.~\ref{fig:steane}(b), and $\calS\subseteq \FF_2^7$ be the three-dimensional
subspace spanned by the faces, $\calS=\langle f^1,f^2,f^3\rangle$.
The faces $f^1,f^2,f^3$ can also be viewed as rows of a parity check matrix 
\[
  \begin{array}{|c|c|c|c|c|c|c|}  \hline
    1 &   & 1  &	  & 1 &  & 1 \\  \hline
      & 1 &  1 &  &   & 1 & 1 \\  \hline
      &   &  & 1 & 1 &  1  & 1 \\  \hline
  \end{array}
\]
The $7$-qubit color code has a stabilizer group $\css{\calS}{\calS}$.
This is a regular CSS code with six stabilizers $X(f^i)$, $Z(f^i)$.
The code distance is $d=d(\calS)=3$. 
Minimum weight logical operators  are $X(\omega)$, $Z(\omega)$, where 
\[
\omega=\overline{1}+f^2=e^1+e^4+e^5.
\]
Recall that $e^i$ denotes the standard basis vector of the binary space.
The subspace $\calS$ is doubly even,
$|f|=0{\pmod 4}$ for all $f\in \calS$. Lemma~\ref{lemma:transversal} implies
that the color code has transversal logical gates $H$ and $S$
that can be realized by operators $H_{all}=H^{\otimes 7}$ and $S_{all}=S^{\otimes 7}$.

\begin{figure}[h]
\centerline{\includegraphics[height=4cm]{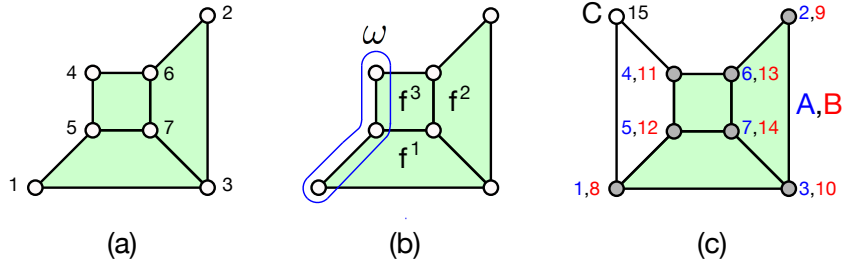}}
\caption{(a) Color code lattice $\Lambda$.  The $7$-qubit color code is defined by placing one qubit
at each site of $\Lambda$.
(b) Faces $f^1,f^2,f^3$ and the subset $\omega$
define stabilizers $X(f^i)$, $Z(f^i)$ and logical operators 
$X(\omega)$, $Z(\omega)$ respectively.
(c) A doubled color code  is obtained  by making two copies of $\Lambda$
labeled $A$ and $B$ placed atop of each other,  and adding one additional qubit labeled $C$. 
\label{fig:steane}
}
\end{figure}

Suppose now that each site of the lattice $\Lambda$ contains two  qubits
labeled $A$ and $B$.  
Let us add one additional  qubit labeled $C$. We assume that  $C$ is placed next to the lattice
$\Lambda$ such that $C$ and $\omega$ form one additional face, see  Fig.~\ref{fig:steane}(c).
This defines a system of $15$ qubits that are partitioned into three consecutive
blocks, $[15]=ABC$, where $|A|=|B|=7$ and $|C|=1$.

The $T$-code that provides a transversal  $T$-gate
has a stabilizer group  $\css{\calT}{\dot{\calT}}$,
 where $\calT\subseteq \FF_2^{15}$ is a
four-dimensional subspace spanned by  double faces of the lattice $\Lambda$
and by the all-ones vector supported on the region $BC$,
\begin{equation}
\label{Tspace}
\calT=\langle f^i[A]+f^i[B]\rangle + \langle BC\rangle.
\end{equation}
Here  $i=1,2,3$ and $BC\equiv \overline{1}[BC]$. 
One can also view
basis vectors of $\calT$ as rows of a parity check matrix 
\[
  \begin{array}{|c|c|c|c|c|c|c|c|c|c|c|c|c|c|c|}  \hline
    1 &   & 1  &	  & 1 &    & 1 &  &1   &   & 1   & & 1 &  & 1 \\  \hline
      & 1 &  1 &    &     & 1 & 1 &  &     & 1 &  1 &  &   & 1 & 1  \\  \hline
      &   &  & 1 & 1 &  1  & 1 &  &   &   &  & 1 & 1 &  1  & 1  \\  \hline
      &   &  &    &    &      &    &  1 & 1  & 1  &  1 &  1  &    1 &  1    &   1  \\ \hline
  \end{array}
\]
Here the first three rows stand for $f^i[A]+f^i[B]$
and the last row stands for $BC$. 
A direct inspection shows that $\calT$ is  triply even, $|f|=0{\pmod 8}$ for all $f\in \calT$.
By Lemma~\ref{lemma:transversal}, the $T$-code has
a transversal logical $T$-gate realized by an operator $T_{all}=T^{\otimes 15}$. 
Let us explicitly describe $Z$-stabilizers of the $T$-code.
Define a subspace $\calG\subseteq \FF_2^{15}$ spanned by 
double edges of $\Lambda$, 
\begin{equation}
\label{Gedge}
\calG=\langle l[A]+l[B]\, : \, l\in E\rangle.
\end{equation}
For example, a double edge $l=(2,3)\in E$ gives rise to a basis vector
$l[A]+l[B]=e^2+e^3+e^9+e^{10}$, see Fig.~\ref{fig:steane}(c).
Clearly, double edges $l[A]+l[B]$ have an even overlap with double faces
$f^i[A]+f^i[B]$ 
as well as with the vector $BC$. This shows that 
$\calG\subseteq  \dot{\calT}$. 
Define also a subspace $\calC\subseteq \FF_2^{15}$ spanned by
single faces of $\Lambda$, including the extra face formed by $\omega$ and $C$,
namely
\begin{equation}
\label{Cspace}
\calC=\langle f^i[A]\rangle + \langle f^i[B]\rangle + \langle \omega[B]+C\rangle,
\end{equation}
where $i=1,2,3$ and $C\equiv 1[C]$.
A direct inspection shows that 
$\calC\subseteq \dot{\calT}$ and, moreover,
\begin{equation}
\label{Gspace}
\dot{\calT}=\calC+\calG.
\end{equation}
To summarize,  $Z$-stabilizers of the $T$-code fall into three classes:
(i) {\em edge-type}  stabilizers $Z(f)$, where $f=l[A]+l[B]$ is a double edge of the color code lattice
$\Lambda$, (ii) {\em face-type} stabilizers $Z(g)$, where  $g=f^i[A]$
 or $g=f^i[B]$ is a face of the lattice $\Lambda$,
and (iii) a special face-type stabilizer  
$g=\omega[B]+C$ that represents the 
extra face connecting $C$ and $\omega$. 
These stabilizers are not independent. For example, the product
of edge-type stabilizers over any closed loop on the color code lattice
$\Lambda$ gives the identity. 
Minimum weight logical operators of the $T$-code
can be chosen as $X(A)$ and $Z(\omega[A])$.
The code distance is $d=3$
since $d(\calT)=3$ and $d(\dot{\calT})=7$.

The $C$-code that provides   transversal Clifford gates has a stabilizer group
$\css{\calC}{\calC}$, where $\calC$ is defined by Eq.~(\ref{Cspace}). 
Stabilizers of this code can be partitioned into three classes:
(i) stabilizers $X(f^i[A])$, $Z(f^i[A])$  define the $7$-qubit color code on the
region $A$ with logical operators $X(\omega[A])$, $Z(\omega[A])$, 
(ii) stabilizers  $X(f^i[B])$, $Z(f^i[B])$ define the $7$-qubit color code
on the region $B$ with logical operators $X(\omega[B])$, $Z(\omega[B])$, and (c) stabilizers
 $X(\omega[B]+C)$,
$Z(\omega[B]+C)$ define the two-qubit EPR state $|0,0\rangle + |1,1\rangle$
shared between $B$ and $C$ such that the first qubit of the EPR state is encoded
by the $7$-qubit color code. 
Thus any logical state of the $C$-code has a form
\begin{equation}
\label{psi_L}
|\psi_L\rangle = (\alpha |0_L\rangle + \beta |1_L\rangle)_A \otimes (|0_L 0\rangle + |1_L 1\rangle)_{BC},
\end{equation}
where $\alpha,\beta\in \CC$ are some coefficients and
$|0_L\rangle$, $|1_L\rangle$ are the  logical basis states of the $7$-qubit  color code.
By discarding qubits of $BC$ one can convert the $C$-code   into the $7$-qubit color code.
Thus these two codes have the same logical operators, the same distance, and the same
transversality properties. In particular, the $C$-code 
provides a transversal implementation of the full Clifford group.
This can also be seen from Lemma~\ref{lemma:transversal} by noting that 
$\dot{\calC}=\calC$  and 
$\calC$ is doubly even,
$|f|=0{\pmod 4}$ for all $f\in \calC$.

The base code appears as an intermediate step
in the conversion between $C$ and $T$ codes. 
The stabilizer group of the base code is defined as  the intersection of 
 stabilizer groups $\css{\calC}{\calC}$  and $\css{\calT}{\dot{\calT}}$
describing  the $C$ and $T$ codes.
Thus the base code has a stabilizer group $\css{\calC\cap \calT}{\calC\cap \dot{\calT}}$.
By definition, $\calC\subseteq \dot{\calT}$, see Eq.~(\ref{Gspace}),
that is, $\calC\cap \dot{\calT}=\calC$. 
We claim that $\calT\subseteq \calC$.
Indeed,  all double face generators of $\calT$ are contained in $\calC$
by definition, 
so we just need to check that $BC\in \calC$.
Using the identity
$f^2+\omega=\overline{1}$, see Fig.~\ref{fig:steane}(a,b), one gets
\begin{equation}
\label{1BC}
BC=B+C=(f^2+\omega)[B]+C=f^2[B]+(\omega[B]+C)\in \calC.
\end{equation}
We conclude that the base code has a stabilizer group $\css{\calT}{\calC}$
and the gauge group $\css{\dot{\calC}}{\dot{\calT}}=\css{\calC}{\dot{\calT}}$.
The base code has distance $d=3$ since $d(\calT)=d(\calC)=3$.
All three codes have the same logical operators defined in Eq.~(\ref{XYZL}).
We summarize definitions of the three codes in Table~\ref{table:codes}. 
\begin{table}[!ht]
\centerline{
\begin{tabular}{r|c|c|c|c|c}
 & Transversal gates  & $X$-stabilizers & $Z$-stabilizers & stabilizer group  & gauge group\\
\hline
$\vphantom{\hat{\hat{A}}}$
$C$-code & Clifford group &  single faces & single faces & $\css{\calC}{\calC}$  &  $\css{\calC}{\calC}$   \\
\hline
$\vphantom{\hat{\hat{A}}}$
$T$-code & $T$ gate &  double faces&
single faces& $\css{\calT}{\dot{\calT}}$  & $\css{\calT}{\dot{\calT}}$  \\
& & $BC$ & double edges & &  \\
\hline
$\vphantom{\hat{\hat{A}}}$
Base code &   & double faces & single faces & $\css{\calT}{\calC}$   & $\css{\calC}{\dot{\calT}}$ \\
 & & $BC$ &  & & \\
\hline
\end{tabular}}
\caption{The family of codes used in the protocol.}
\label{table:codes}
\end{table}

\begin{figure}[ht]
\centerline{\includegraphics[height=5.5cm]{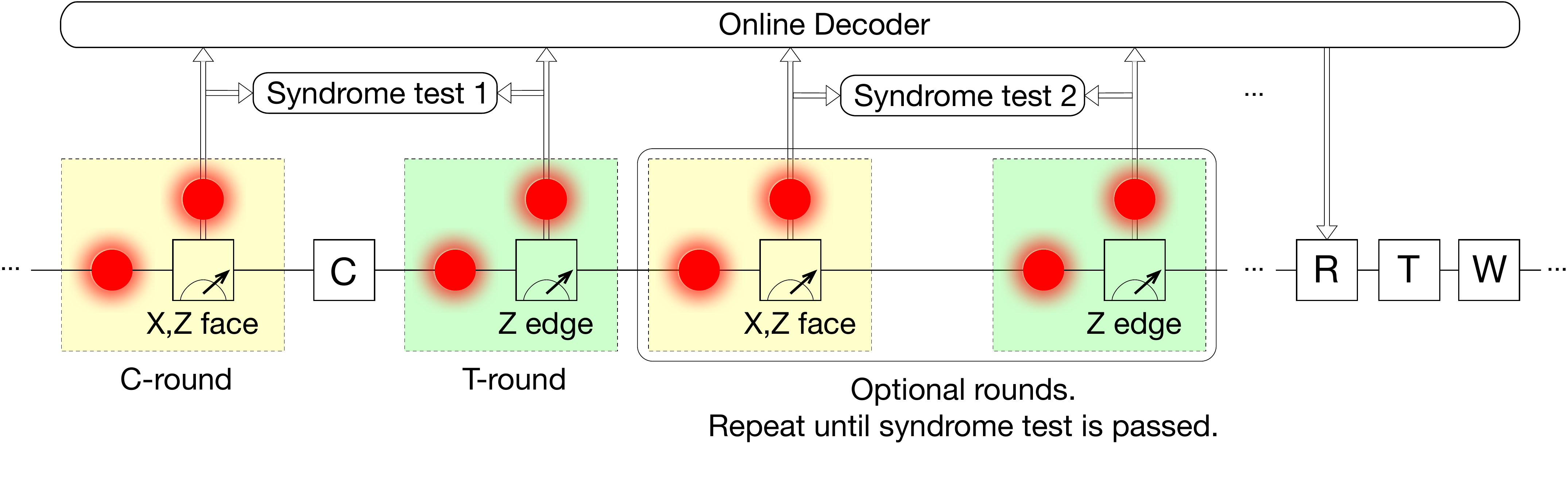}}
\caption{A fault-tolerant implementation of a  Clifford$+T$ circuit.
Red circles  represent depolarizing memory errors and syndrome measurement errors.
Transversal Clifford and $T$-gates are shown by $C$ and $T$ boxes.
The boxes $W$ and $R$ represent  the  twirling map 
and the recovery operator respectively.
Measurement boxes represent (partial) syndrome measurements.
Face-type stabilizers of the $C$-code
 are measured in every $C$-round.
Edge-type stabilizers of the $T$-code are measured in every $T$-round.
Optional rounds are added only if the 
full syndrome of the $T$-code inferred from these measurements
fails to pass a consistency test. 
The rounds continue in the periodic fashion. 
On average, the protocol applies one logical gate per round.
\label{fig:circuit}
}
\end{figure}

We are now ready to describe a fault-tolerant implementation of a logical circuit in the Clifford$+T$ basis.
Our protocol consists of an alternating sequence of  rounds
labeled $C$ and $T$, see Fig.~\ref{fig:circuit}.
Each $C$-round is responsible for measuring syndromes of
all face-type stabilizers of the $C$-code, that is,
$X(f)$ and $Z(f)$, where $f$ is one of the vectors $f^i[A]$, $f^i[B]$,
or $\omega[B]+C$. 
Each $T$-round is responsible for measuring syndromes of all
edge-type stabilizers of the $T$-code, that is, stabilizers $Z(l[A]+l[B])$ with $l\in E$.
All measured syndromes are sent to the decoder.
Typically, but not always, a logical Clifford gate  ($T$-gate)
 is applied after each $C$-round ($T$-round).
Whether or not a logical gate is applied 
depends on the outcome of a certain test that we call a syndrome test.
A decoder is  responsible for choosing a recovery operator $R$
which is applied at the end of every pair of $C,T$ rounds  passing the syndrome test.
In the beginning of each round the decoder
performs a code deformation such that the logical qubit is encoded
by the $C$-code ($T$-code) in every $C$-round ($T$-round). 
Although the base code does not explicitly appear in the protocol,
it is used by the decoder as an intermediate step in the code deformation,
see Section~\ref{sec:ML}.
Namely, at the beginning of each $C$-round the gauge group 
changes according to 
\[
\css{\calT}{\dot{\calT}}\to \css{\calC}{\dot{\calT}}  \to \css{\calC}{\calC}.
\]
At the beginning of each $T$-round the gauge group changes in the reverse
direction. 

Let us now describe the syndrome test.
Recall that  the syndromes of $X(f)$ and $Z(f)$ are denoted $\xi(f)$ and $\zeta(f)$
respectively, see Section~\ref{sec:subs}.
Consider some fixed  pair of rounds $C,T$ and let $U$ be the logical Clifford
gate applied in the $C$-round (set $U=I$ if no logical gate have been applied).
Let $\xi(f)$ and $\zeta(f)$ be the face-type syndromes measured in this round.
Measuring the syndrome of a stabilizer $Z(f)$ after application of $U$
is equivalent to measuring the syndrome of $P(f)$ before application of $U$,
where $P\equiv UZU^\dag$. Suppose $P\sim X(a)Z(b)$, where $a,b\in \FF_2$.
Define an updated syndrome
\[
\zeta_U(f)=a\xi(f)+b\zeta(f).
\]
Thus $\zeta_U(f)$ determines the  syndrome of a stabilizer $Z(f)$ that would be observed
in the absence of the logical  gate $U$.
Let $\zeta(l[A]+l[B])$ be the edge-type syndromes measured in the 
 $T$-round. 
We say that the pair of rounds $C,T$ passes a syndrome test if 
\begin{equation}
\label{Stest}
\zeta(l[A]+l[B])+\zeta(l'[A]+l'[B]) + \zeta_U(f^i[A]) + \zeta_U(f^i[B])=0
\end{equation}
for any face $f^i$ and for any pair of edges $l,l'\in E$ such that 
$l+l'=f^i$. In other words, $l$ and $l'$ are the two opposite edges
forming the boundary of $f^i$. 

In the absence of errors the syndrome test is always passed since
the product of edge-type stabilizers
$Z(l[A]+l[B])$, $Z(l'[A]+l'[B])$ and  face-type stabilizers $Z(f^i[A])$, $Z(f^i[B])$ equals the identity.
Our protocol performs the syndrome test after each $T$-round,
see Fig.~\ref{fig:circuit}.
If the syndrome test fails, an additional pair of  rounds $C,T$
is requested and the process continues until the syndrome test is passed.
We note that the syndrome test can fail for at least two reasons. First, any single-qubit $X$-error
on some qubit $j\in AB$  that occurs inside the chosen $T$-round
flips the syndromes $\zeta(l[A]+l[B])$ on all edges $l$ incident to the site $u\in \Lambda$
that contains $j$
without flipping any face-type syndromes (because the latter have been measured {\em before}
this error occurred). This would violate at least one constraint in Eq.~(\ref{Stest}).
Secondly, any single measurement  error
for edge-type stabilizers  $Z(l[A] + l[B])$  in the $T$-round
or any single measurement error for 
 face-type stabilizers  that contribute to the updated syndromes $\zeta_U(f)$
 would violate at least one constraint in Eq.~(\ref{Stest}).
The purpose of the syndrome test is to ensure that 
neither of these possibilities occurs before asking  the decoder to perform 
a recovery operation. 


Combining syndromes measured in any consecutive pair of rounds $C,T$
provides the full syndrome for the $T$-code (in the absence of errors).
Indeed, the syndrome of $X(\calT)$ can be inferred from the
syndromes of $X(\calC)$ and $Z(\calC)$  measured in the $C$-round since $\calT\subseteq \calC$
and $X(\calT)$ commutes  with all operators measured in the $T$-round.
Likewise, the  syndrome of $\vphantom{\hat{\hat{A}}} Z(\dot{\calT})$ can be obtained by combining
the syndromes of $X(\calC)$ and $Z(\calC)$ measured in the $C$-round
and the syndrome of  $Z(\calG)$ measured in the $T$-round,
since $\vphantom{\hat{\hat{A}}} \dot{\calT}=\calC+\calG$.
By spreading the syndrome measurement for the $T$-code over two  rounds 
we were able to keep the number of measurements per qubit in any  single round reasonably small,
which might be important for practical implementation.

A transversal Clifford gate is applied after each $C$-round
provided that the latest syndrome tests was successful.
We performed simulations for a random Clifford$+T$ circuit such that 
each Clifford gate is drawn from the uniform distribution on the Clifford group.

A transversal $T$-gate is applied at the end of each $T$-round that passes the syndrome test.
It is preceded by a Pauli recovery operator $R$ classically controlled by the decoder,
see Section~\ref{sec:ML},
and followed by a twirling  map $\calW_\calT$ that applies a randomly chosen stabilizer $X(f)$ with $f\in \calT$,
\begin{equation}
\label{twirl}
\calW_\calT(\rho)=\frac1{|\calT|} \sum_{f\in \calT} X(f) \rho X(f).
\end{equation}

Each round includes  memory and syndrome measurement errors. 
We model memory errors by the depolarizing noise with some
error rate $p$, that is, each qubit suffers from a Pauli error $X,Y,Z$ with probability $p/3$ each.
Within each round a memory error occurs  before the syndrome measurement, 
see Fig.~\ref{fig:circuit}.
We model a noisy syndrome measurement by an ideal measurement in which the
outcome is flipped with a probability $p$.

At any given time step the protocol can be terminated depending on 
the outcome of two  tests: (1) logical error test and (2) cleanability test.
A logical error test is performed at the end of each round by 
computing the most likely coset of errors consistent with the current syndrome.
The test is passed if the most likely coset contains the actual memory error.
A cleanability test is performed after each recovery operation. 
If $E\sim X(a)Z(b)$ is the residual error left  after the recovery, the test
is passed if $a+\calT$ is a cleanable coset, see Eq.~(\ref{Tspace}) and Definition~\ref{dfn:clean}.
We found that $\calT$ has $996$ cleanable cosets. 
The protocol terminates whenever one of the two tests fails. 
Accordingly, the number of logical gates implemented in the protocol
is a random variable. 
Conditioned on passing the cleanability test, a transversal $T$-gate
is implemented using the method of Section~\ref{sec:Tgate}.
The quantity we are interested in is 
a logical error rate defined as  $p_L=1/g$,
where $g$ is the average number of logical gates implemented before 
the protocol terminates.
Here $g$ includes both Clifford and $T$ gates.

\begin{figure}[h]
\centerline{\includegraphics[height=7cm]{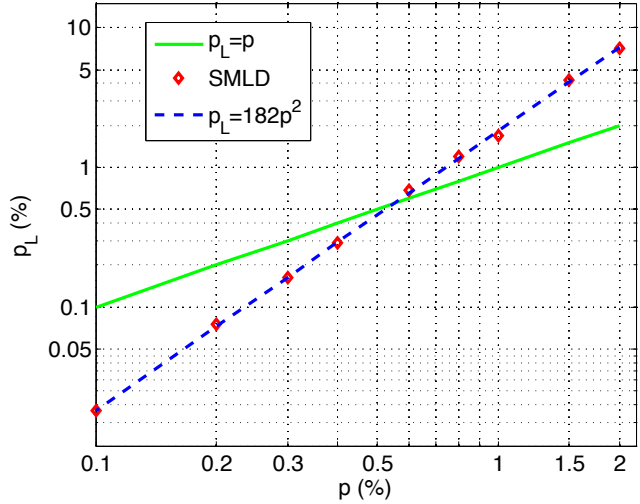}}
\caption{Monte Carlo simulation of logical Clifford$+T$ circuits
with the sparse ML decoder (SMLD).
Here $p$ is the physical error rate 
and $p_L$ is the logical error rate defined as 
$p_L=1/g$, where $g$ is the average number of logical gates 
implemented before the protocol terminates. 
Each average was estimated  using  $400$  Monte Carlo trials.
 \label{fig:CTplot}
}
\end{figure}

The logical error rate $p_L$ was computed numerically  using  a simplified  version
of the ML decoder that we call a sparse ML decoder (SMLD).
It follows the algorithm described in Section~\ref{sec:ML}
with two modifications. First,  SMLD models memory errors by 
a sparse distribution $\pi^s$
approximating the exact distribution $\pi$. 
We have chosen $\pi^s(e)=\pi(e)$ if  $e$ is a Pauli error 
acting non-trivially on at most one qubit 
and $\pi^s(e)=0$ otherwise (here we ignore the normalization). 
Replacing $\pi$ by $\pi^s$ in the update rules of Section~\ref{sec:ML}
one can see that the matrix $P_t$ defined in Eqs.~(\ref{P_t},\ref{PQ}) becomes sparse
and  the update $|\rho_t\rangle \gets P_t |\rho_t\rangle$
can be performed using sparse matrix-vector
multiplication avoiding the Walsh-Hadamard transforms.
The actual memory errors in the Monte Carlo simulation
are drawn from the exact distribution $\pi$.
This approximation is justified since none of the codes used in the protocol 
can correct memory errors of weight larger than one. 
Secondly, SMLD performs a truncation of the likelihood vector 
$\rho_t$   after each round in order to keep $\rho_t$ sufficiently sparse.
The truncation was performed by normalizing $\rho_t$ and setting to zero
all components with $\rho_t(f)<\epsilon$, where  $\epsilon=10^{-6}$
is an empirically chosen  cutoff value.
We observed that for large error rates ($p\approx 1\%$) the two
versions of the decoder achieve the same logical error probability
within statistical fluctuations. On the other hand, SMLD 
provides at least $10$x speedup compared with the exact version
and enables simulation of circuits with more than 10,000 logical gates. 
Our results are presented on Fig.~\ref{fig:CTplot}.
We observed a scaling
$p_L=Cp^2$ with $C\approx 182$. 
Assuming that a physical Clifford$+T$ circuit
has an error probability $p$ per gate,   the logical circuit becomes more reliable than the 
physical one  provided  that  $p_L<p$, that is, $p<p_0=C^{-1} \approx 0.55\%$.
This value can be viewed as an ``error threshold" of the proposed protocol. 
Generating the data shown on Fig.~\ref{fig:CTplot} took approximately one day
on a laptop computer.

\section{Doubled color codes: main properties}
\label{sec:summary}

Our goal for the rest of the paper is to generalize the $15$-qubit codes described in Section~\ref{sec:CT15}
to higher distance codes that can be  embedded into a 2D lattice such that 
all syndrome measurements  required for error correction and gauge fixing are spatially local.
This section highlights main properties of the new codes.
Let $d=2t+1$ be the desired code distance, where $t\ge 1$ is an integer. 
For each $t$ we shall construct a pair of  CSS codes
labeled $C$ and $T$ that
encode one logical qubit into $n_t=2t^3+8t^2+6t-1$ physical qubits.
These codes have transversal Clifford gates and the $T$-gate respectively. 
The codes are defined 
on the 2D honeycomb lattice with two qubits per site such that 
the gauge group of the $C$-code has spatially local generators supported
on faces of the lattice. Most of these generators are analogous to face-type
stabilizers in the $15$-qubit example, see Section~\ref{sec:CT15}.
There are also additional generators of weight two that couple 
pair of qubits located on the same face. The latter have no analogue in the $15$-qubit
example. 
 The $C$-code can be converted to the regular color code on the honeycomb lattice
by discarding a certain subset of qubits. 
Although the $T$-code does not have local gauge generators, it
can be obtained from the $C$-code by a
local gauge fixing. It requires syndrome measurements for weight-four
stabilizers supported on edges of the lattice. 
The latter are analogous to edge-type stabilizers in the $15$-qubit example.
We shall also define a base code that appears as an intermediate step in the 
conversion between $C$ and $T$ codes. 
All three codes have distance $d=2t+1$. 

The  codes will be constructed from a pair  of linear subspaces
$\calC_t,\calT_t\subseteq \FF_2^{n_t}$ that satisfy
\begin{equation}
\label{desired1}
\calT_t\subseteq \calC_t \subseteq \dot{\calC}_t \subseteq \dot{\calT}_t
\end{equation}
and
\begin{equation}
\label{desired2}
d(\calT_t)=2t+1.
\end{equation}
The subspaces $\calC_t$ and $\calT_t$ have a special symmetry 
required for  transversality of logical gates,  namely,
$\calC_t$ is doubly even and $\calT_t$ is triply even with respect to some
subsets of qubits $M^{\pm}_t\subseteq [n_t]$ such that
$|M^+_t|-|M^-_t|=1$.
Furthermore, $\dot{\calC}_t$ and $\dot{\calT}_t$ have spatially local generators (basis vectors)
supported on faces of the lattice.
The subspaces $\calC_1$ and $\calT_1$ coincide with $\calC$ and $\calT$
defined in Section~\ref{sec:CT15}.
Definitions of  the three codes are summarized in Table~\ref{table:CT}.
We shall refer to the family of codes defined in Table~\ref{table:CT} 
as doubled color codes since each of them is constructed from two copies of the regular color code.

\begin{table}[!ht]
\centerline{
\begin{tabular}{r|c|c|c|}
 & Transversal gates  & Stabilizer group & Gauge group \\
\hline
$\vphantom{\hat{\hat{A}}}$
$C$-code & Clifford group &  $\css{\calC_t}{\calC_t}$ & $\css{\dot{\calC}_t}{\dot{\calC}_t} $   \\
\hline
$\vphantom{\hat{\hat{A}}}$
$T$-code & $T$ gate &  $\css{\calT_t}{\dot{\calT}_t}$ & $\css{\calT_t}{\dot{\calT}_t}$  \\
\hline
$\vphantom{\hat{\hat{A}}}$
Base code &   & $\css{\calT_t}{\calC_t}$  & $\css{\dot{\calC}_t}{\dot{\calT}_t}$ \\
\hline
\end{tabular}}
\caption{A family of 2D doubled color codes that achieves universality by the gauge fixing method.
Each code has one logical qubit,  $n_t=2t^3+8t^2+6t-1$ physical qubits,
and  distance $d=2t+1$.
The subspaces $\dot{\calC}_t$ and $\dot{\calT}_t$ have spatially local generators
supported on faces of the lattice. These subspaces are doubly even and triply even
respectively.
}
\label{table:CT}
\end{table}
\noindent

We expect that the new codes can be used 
in a protocol  analogous to the one shown on Fig.~\ref{fig:circuit}
to implement fault-tolerant Clifford$+T$ circuits in the 2D architecture. 
The $C$-round of the protocol would be responsible for measuring
gauge syndromes $\xi(f^\alpha)$ and $\zeta(f^\alpha)$ for some complete set of generators  $f^1,\ldots,f^m\in \dot{\calC}_t$.
A transversal Clifford gate can be applied after each $C$-round. 
The $T$-round would be responsible for measuring 
stabilizer syndromes $\zeta(g^\beta)$ for some set of generators $g^1,\ldots,g^k\in \dot{\calT}_t$
such that  $\dot{\calT}_t=\langle f^1,\ldots,f^m,g^1,\ldots,g^k\rangle$.
We will see that such generators $g^\beta$  can be chosen as double edges of the lattice,
by analogy with the $15$-qubit example. 
A transversal $T$-gate can be applied after each $T$-round. 
In the case of higher distance codes the $C$-round 
could be repeated 
 several times
to ensure that any combination of $t$ syndrome
measurement errors is correctable. 
We expect that the $T$-round requires less (if any) repetitions since
the edge-type stabilizers measured in this round are highly redundant. 
An explicit construction of a fault-tolerant protocol that would suppress the 
error probability per gate 
 from $p$ to $O(p^{t+1})$   is an interesting open problem that we leave for a future work.

\section{Regular color codes}
\label{sec:color}

The starting point for our construction is the standard color code on the hexagonal lattice~\cite{Bombin2006}.
For  a suitable choice of boundary conditions
such code has exactly one logical qubit and corrects $t$ single-qubit errors, where $t\ge 0$
is any given integer (recall that the code distance is $d=2t+1$). 
Examples of the color code lattice for $t\le 3$ are shown on Fig.~\ref{fig:Lambda}.
More formally, consider a  lattice $\Delta_t$ such that the lattice sites 
are  triples of non-negative integers $\jbf=(j_1,j_2,j_3)$ satisfying
$j_1+j_2+j_3=3t$. Clearly, $\Delta_t$ is the regular triangular lattice.
Note that  $j_2-j_1 = j_3-j_2 = j_1-j_3{\pmod 3}$
for any $\jbf\in \Delta_t$. For each $b\in \ZZ_3$ consider a sublattice
\begin{equation}
\label{lattice1}
\Delta_t^b=\{\; \jbf \in \Delta_t\, : \, j_2-j_1=b {\pmod 3}\; \}.
\end{equation}
Define a color code lattice $\Lambda_t$ as
\begin{equation}
\label{lattice2}
\Lambda_t=\Delta_t^0 \cup \Delta_t^2.
\end{equation}
In other words, sites of $\Lambda_t$ are triples of non-negative integers $\jbf=(j_1,j_2,j_3)$
satisfying $j_1+j_2+j_3=3t$ and $j_2-j_1\ne 1{\pmod 3}$.
Sites of the triangular lattice $\Delta_t$ that are not present in $\Lambda_t$ 
become centers of faces of $\Lambda_t$.
More formally, a subset $f\subseteq \Lambda_t$ is called a face iff
there exists $\jbf \in \Delta_t^1$ such that $f$ is the set of 
nearest neighbors of $\jbf$ in the triangular lattice $\Delta_t$.
A direct inspection shows that any face of $\Lambda_t$ 
consists of four or six sites. 
Although the color code lattice is usually equipped with a face $3$-coloring,
we shall not specify the colors since they are irrelevant for what follows.

\begin{figure}[h]
\centerline{\includegraphics[height=4cm]{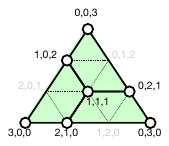}
\raisebox{4mm}{\includegraphics[height=4cm]{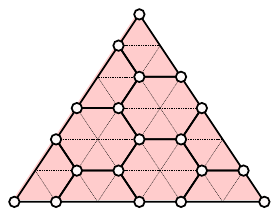}}
\raisebox{3mm}{\includegraphics[height=6cm]{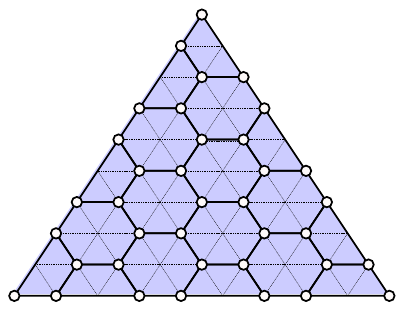}}}
\caption{Color code lattices $\Lambda_1$, $\Lambda_2$, and $\Lambda_3$.
Qubits are located at the lattice sites (empty circles).
 Each face $f$
gives rise to a pair of stabilizer generators $\hat{X}(f)$ and $\hat{Z}(f)$.
The color code defined on the lattice $\Lambda_t$ has one logical qubit
and corrects $t$ single-qubit errors (code distance is $d=2t+1$).
\label{fig:Lambda}
}
\end{figure}

The lattice $\Lambda_t$ contains 
\begin{equation}
m_t=|\Lambda_t|=3t^2+3t+1
\end{equation}
sites and $(m_t-1)/2$ faces. We will use a convention that $\Lambda_0$ is a single site.
Let  $\calS_t \subseteq \FF_2^{m_t}$ be a
subspace spanned by all faces of $\Lambda_t$
 (recall that we identify a face $f$ and a binary vector
whose support is $f$). Note that $\calS_t$ is self-orthogonal,
since any pair of faces overlap on even number of sites.
We shall need the following well-known properties of $\calS_t$,
see Refs.~\cite{Bombin2006,Bombin2015,Kubica2015}.
\begin{fact}
\label{fact:CC}
The subspace $\calS_t$ is doubly-even with respect to the
subsets $\Delta^0_t$ and $\Delta^2_t$.
Furthermore, 
\begin{equation}
\label{CCfact1}
|\Delta^0_t|-|\Delta^2_t|=1.
\end{equation}
The orthogonal subspace $\calS_t^\perp$ is given by
\begin{equation}
\label{CCfact2}
\calS_t^\perp \cap \calE=\calS_t \quad \mbox{and}  \quad \calS_t^\perp \cap \calO=\calS_t + \overline{1},
\end{equation}
Finally,
\begin{equation}
\label{CCfact3}
d(\calS_t)=2t+1.
\end{equation}
\end{fact}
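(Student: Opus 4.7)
I would establish the four claims of Fact~\ref{fact:CC} by combining elementary arithmetic on the triangular lattice $\Delta_t$, dimension counting, and the standard distance bound for color codes. For Eq.~(\ref{CCfact1}), the count $|\Delta_t^b|$ is the number of non-negative integer triples with $j_1+j_2+j_3 = 3t$ and $j_2 - j_1 \equiv b \pmod 3$. Extracting this residue with the character $\omega = e^{2\pi i/3}$ gives
\[
|\Delta_t^b| = \frac{1}{3}\sum_{k=0}^{2} \omega^{-kb}\, [x^{3t}] \frac{1}{(1-x)(1-\omega^k x)(1-\omega^{-k}x)},
\]
and for $k \ne 0$ the denominator collapses to $1-x^3$, so the coefficient of $x^{3t}$ is $1$. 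The resulting linear combination yields $|\Delta_t^0| - |\Delta_t^2| = 1$ and, as a sanity check, also recovers $m_t = 3t^2+3t+1$.

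For the doubly-even property I would first verify it on the face generators. The six lattice neighbors of any $\Delta_t^1$-site split $3$-$3$ between $\Delta_t^0$ and $\Delta_t^2$, as a direct computation of the six neighbor-shift vectors shows; so every hexagonal face $f$ has $|f\cap \Delta_t^0| - |f\cap \Delta_t^2| = 0$, and the same bookkeeping handles the square boundary faces since their two absent neighbors form one $\Delta_t^0$/$\Delta_t^2$ pair. To extend to all of $\calS_t$, the identity $|f+g| = |f|+|g|-2|f\cap g|$ applied on each of $\Delta_t^0, \Delta_t^2$ separately reduces closure under addition to the parity condition $|f\cap g\cap \Delta_t^0| \equiv |f\cap g \cap \Delta_t^2| \pmod 2$. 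Two faces intersect either trivially or along an edge of $\Lambda_t$, and the same neighbor analysis shows that every edge of $\Lambda_t$ joins a $\Delta_t^0$-site to a $\Delta_t^2$-site, delivering the required parity.

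The orthogonality claim~(\ref{CCfact2}) follows by dimension counting. Each face has even size and any two faces overlap in an even number of sites, so $\calS_t \subseteq \calS_t^\perp \cap \calE$; a peeling argument on corner faces shows the face generators are linearly independent, giving $\dim \calS_t = (m_t-1)/2$. Since every face is even we have $\overline{1} \in \calS_t^\perp$, and $\overline{1} \notin \calE$ because $m_t$ is odd, so $\calS_t^\perp = (\calS_t^\perp \cap \calE)\oplus \langle \overline{1}\rangle$ and dimensions force $\calS_t^\perp \cap \calE = \calS_t$. Adding $\overline{1}$ then yields $\calS_t^\perp \cap \calO = \overline{1} + \calS_t$.

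The preceding parts reduce Eq.~(\ref{CCfact3}) to computing $\min_{f\in \calS_t} |\overline{1}+f|$. The upper bound $\le 2t+1$ is achieved by the logical operator supported on the $2t+1$ sites of $\Lambda_t$ along the boundary $j_3 = 0$ of $\Delta_t$ (counted via the same residue analysis used for Eq.~(\ref{CCfact1})). The matching lower bound is the hardest step and where I would rely on the literature: the standard minimum-distance proof for 2D color codes in~\cite{Bombin2006, Kubica2015} shows that any non-stabilizer element of $\calS_t^\perp$ realizes a string connecting two distinct boundary components and must therefore intersect any transversal cut in at least $2t+1$ sites. Rather than reproduce that topological argument, I would verify that the boundary structure of $\Lambda_t$ matches the hypotheses of those references and invoke the distance bound directly.
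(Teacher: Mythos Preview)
Your argument for the doubly-even property is essentially the same as the paper's: both use the inclusion--exclusion identity $|f+g|=|f|+|g|-2|f\cap g|$ (the paper writes its multi-term analogue), verify directly that individual faces satisfy $|f\cap\Delta_t^0|=|f\cap\Delta_t^2|$ via the $3$--$3$ (hexagon) and $2$--$2$ (square) neighbor split, and handle the cross terms by observing that every edge of $\Lambda_t$ joins a $\Delta_t^0$-site to a $\Delta_t^2$-site.

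The difference is in scope. The paper proves only this first claim and defers Eqs.~(\ref{CCfact1})--(\ref{CCfact3}) to the cited references, whereas you supply arguments for all four. Your character-sum computation for $|\Delta_t^0|-|\Delta_t^2|=1$ is correct (and more machinery than strictly needed, but clean), and your dimension count for Eq.~(\ref{CCfact2}) is standard and fine; the only point to watch is that the independence of the face generators (``peeling from a corner'') deserves a sentence, since it is what pins down $\dim\calS_t=(m_t-1)/2$. For Eq.~(\ref{CCfact3}) you do what the paper does---cite the literature for the lower bound---so there is no real divergence there.
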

For the sake of completeness, let us prove the first claim.
\begin{proof}
We shall use shorthand notations $\calS\equiv \calS_t$, $m\equiv m_t$, and $\Delta^b\equiv \Delta^b_t$.
Consider any stabilizer $f\in \calS$. Then  $f$ is a linear combination of faces.
One can always choose the numbering of  faces such that 
$f=\sum_{i=1}^k f^i$ for some $k\le s$.
We shall use an identity
\begin{equation}
\label{id1}
\left| \sum_{i=1}^k g^i \right| = \sum_{i=1}^k |g^i | - 2\sum_{1\le i<j\le k}\; |g^i \cap g^j|
{\pmod 4}
\end{equation}
which holds for any vectors $g^1,\ldots,g^k\in \FF_2^m$.
Choosing $g^i=f^i \cap \Delta^b$  one gets
\begin{equation}
\label{Sgate2}
|f\cap  \Delta^0| - |f\cap \Delta^2| =\sum_{i=1}^k \left( |f^i\cap  \Delta^0| - |f^i\cap \Delta^2|\right)
-2\sum_{1\le i<j\le k}\;  \left( |f^i \cap f^j\cap \Delta^0| -  |f^i \cap f^j\cap \Delta^2| \right) 
\end{equation}
modulo four. Consider any site $\jbf \in \Delta^1$. 
The nearest neighbors of $\jbf$ in the triangular lattice belong to either
$\Delta^0$ or $\Delta^2$ as shown on Fig.~\ref{fig:arrows}.
By examining this figure one can easily check 
that any edge of the color code lattice $\Lambda_t$
connects some site of $\Delta^0$ and some site $\Delta^2$. 

\begin{figure}[h]
\centerline{\includegraphics[height=3cm]{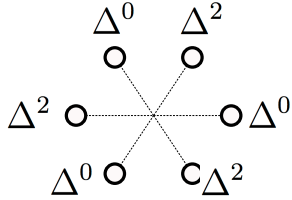}}
\caption{A local neighborhood of a site $\jbf \in \Delta^1$.
This shows that any edge of the color code lattice  $\Lambda_t=\Delta^0\cup \Delta^2$ 
has one endpoint in $\Delta^0$ and the other endpoint in $\Delta^2$.
\label{fig:arrows}
}
\end{figure}

It follows that 
$|f^i\cap  \Delta^0| - |f^i\cap \Delta^2|=3-3=0$
for hexagonal faces $f^i$ and
$|f^i\cap  \Delta^0| - |f^i\cap \Delta^2|=2-2=0$
for square faces $f^i$. Likewise, consider any pair of faces $f^i,f^j$ such that 
$f^i \cap f^j\ne \emptyset$. Then $f^i \cap f^j$ is an edge of the color code lattice
$\Lambda_t$, so that 
\[
|f^i \cap f^j\cap \Delta^0| -  |f^i \cap f^j\cap \Delta^2|=1-1=0.
\]
This proves that all terms in Eq.~(\ref{Sgate2}) are zero,
that is, $\calS$ is triply-even as promised. 

\end{proof}

Finally, let $\omega^1_t,\omega^2_t$ and $\omega^3_t$ be the vectors
supported on  the right, on the left, and on the bottom sides of the  triangle
that forms the boundary of $\Lambda_t$.
In other words, 
\begin{equation}
\label{omega}
\omega^i_t=\{\jbf\in \Lambda_t\, : \, j_i=0\}, \quad \quad i=1,2,3.
\end{equation}
Note that $\omega^i_t$ are logical operators, that is, $\omega^i_t\in \calS_t^\perp \cap \calO$.
Furthermore, $|\omega^i_t|=2t+1\equiv d$,
that is,  $\omega^i_t$ are minimum weight
logical operators.

\section{Doubling transformation}
\label{sec:doubling}

Let us now describe a general construction of triply even subspaces 
inspired by  Ref.~\cite{Betsum2010}.
 Consider a pair of integers  $m,n$ and a  pair of subspaces
\[
\calS\subseteq \FF_2^m \quad \mbox{and} \quad \calT\subseteq \FF_2^n.
\]
Let $k=2m+n$. Partition the set of integers $[k]$ into three consecutive blocks,
$[k]=ABC$,  
such that $|A|=|B|=m$ and $|C|=n$.
Define a subspace 
$\calU\subseteq \FF_2^k$ such that 
\begin{equation}
\label{Sout}
\calU=\langle f[A]+f[B] \, : \, f\in \calS\rangle + 
\calT[C]
+ \langle BC \rangle.
\end{equation}
Here we used a shorthand notation $BC\equiv \overline{1}[BC]$.
This definition can be rephrased  in terms of the generating matrices of $\calS$ and $\calT$.
Recall that $S$ is a generating matrix of a linear subspace $\calS$
if $\calS$ is spanned by rows of $S$. 
Suppose  $\calS$ and $\calT$ have generating matrices $S$ and $T$
respectively.  Then $\calU$ has  a generating matrix
\begin{equation}
\label{S1'}
U=\left[
\begin{array}{ccc}
S & S & \\
 & & T \\
 & \overline{1} & \overline{1} \\
\end{array} \right]
\end{equation}
where the three groups of columns correspond to $A,B$, and $C$
respectively.
Here we only show the non-zero entries of $U$.
The mapping from $\calS$ and $\calT$ to $\calU$ will be referred to as
a doubling map since it involves taking two copies of $\calS$.
We shall use an informal notation $\calU=2\calS+\calT$ to indicate
that $\calU$ is obtained from $\calS$ and $\calT$ via the
doubling map as described above.

Given a subset $M\subseteq [m]$ and a subset $N\subseteq [n]$
let $MMN\subseteq ABC$ be a subset obtained by choosing
the subset $M$ in the blocks $A,B$ and choosing the subset $N$ in the block $C$.
\begin{lemma}
\label{lemma:doubling}
Assume $\calS$ is doubly even with respect to some subsets
$M^\pm \subseteq [m]$ and $\calT$ is triply even with respect to
some subsets  $N^\pm \subseteq [n]$ such that 
\begin{equation}
\label{mod8condition}
|M^+|-|M^-| + |N^-|-|N^+|=0{\pmod 8}.
\end{equation}
Then $\calU=2\calS+\calT$ is triply even with respect to subsets $K^\pm = M^\pm M^\pm N^\mp$.
\end{lemma}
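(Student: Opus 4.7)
The plan is to write a generic element $h \in \calU$ in terms of its generators and then compute $|h \cap K^+| - |h \cap K^-|$ by splitting across the three blocks $A,B,C$. Since $\calU$ is defined as the sum of three pieces in Eq.~(\ref{Sout}), a general element takes the form
\[
h = F[A] + F[B] + g[C] + \epsilon\cdot BC, \quad F\in \calS,\; g\in \calT,\; \epsilon \in \FF_2.
\]
This uses linearity to collapse the sum $\sum_i (f^i[A]+f^i[B])$ into a single doubled vector $F[A]+F[B]$ with $F = \sum_i f^i \in \calS$. The restrictions of $h$ to $A$, $B$, $C$ are then $F$, $F+\epsilon \overline{1}$, and $g+\epsilon \overline{1}$ respectively, where the all-ones vectors live in $\FF_2^m$ and $\FF_2^n$ as appropriate.

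Next I would unpack $K^\pm = M^\pm M^\pm N^\mp$ and use the elementary identity $|(v+\overline{1})\cap X| = |X| - |v\cap X|$ to get
\[
|h\cap K^+| = |F\cap M^+| + |(F+\epsilon\overline{1})\cap M^+| + |(g+\epsilon\overline{1})\cap N^-|,
\]
and similarly for $K^-$. The proof then splits into the two cases $\epsilon=0$ and $\epsilon=1$. In the $\epsilon=0$ case, everything telescopes to
\[
|h\cap K^+|-|h\cap K^-| \;=\; 2\bigl(|F\cap M^+|-|F\cap M^-|\bigr) \;-\; \bigl(|g\cap N^+|-|g\cap N^-|\bigr).
\]
The first term is divisible by $8$ because $\calS$ is doubly even w.r.t.\ $M^\pm$ (giving a factor of $4$, multiplied by $2$), and the second is divisible by $8$ because $\calT$ is triply even w.r.t.\ $N^\pm$.

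In the $\epsilon=1$ case the $F$-dependent terms cancel via the $|X|-|v\cap X|$ identity, leaving
\[
|h\cap K^+|-|h\cap K^-| \;=\; \bigl(|M^+|-|M^-|\bigr) + \bigl(|N^-|-|N^+|\bigr) + \bigl(|g\cap N^+|-|g\cap N^-|\bigr),
\]
where the first two groups vanish mod $8$ by the hypothesis~(\ref{mod8condition}) and the third vanishes mod $8$ by triple-evenness of $\calT$. There is no real obstacle here — the only subtle point is bookkeeping the sign flips on $N^\pm$ (the definition sends $N^\pm$ to $N^\mp$ in $K^\pm$), which is exactly what is needed for the factor $|N^-|-|N^+|$ in~(\ref{mod8condition}) to appear with the right orientation. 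Both cases together establish Eq.~(\ref{even3}) for $\calU$ with the subsets $K^\pm$, proving triple-evenness.
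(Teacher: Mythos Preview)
Your proof is correct and follows essentially the same approach as the paper's: both write a generic element of $\calU$ as $f[A]+f[B]+g[C]$ with an optional $BC$ added, split into the two cases (your $\epsilon=0,1$ versus the paper's $h$ and $h'=h+BC$), and reduce each case to the doubly/triply even hypotheses together with condition~(\ref{mod8condition}). The computations and the handling of the $N^\pm \mapsto N^\mp$ swap are identical.
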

\begin{proof}
Consider first a vector
\[
h=f[A]+f[B]+g[C]\in \calU,
\]
where $f\in \calS$ and $g\in \calT$. 
By assumption, 
\begin{equation}
\label{h2}
|f\cap M^+ | -  |f\cap M^- |=0{\pmod 4} \quad \mbox{and} \quad |g\cap N^+|-|g\cap N^-|=0{\pmod 8}.
\end{equation}
The identity $|h\cap K^\pm |=2|f\cap M^\pm | + |g\cap N^\mp |$ then implies 
\[
|h\cap K^+|-|h\cap K^-| =0{\pmod 8}.
\]
Consider now a vector 
\[
h'=h+ BC\in \calU.
\]
Then 
\[
|h'\cap K^\pm | =|f\cap M^\pm |  +  (|M^\pm | -|f\cap M^\pm | ) + (|N^\mp| - |g\cap N^\mp |)=
|M^\pm |+ |N^\mp| - |g\cap N^\mp |.
\]
Taking into account Eqs.~(\ref{mod8condition},\ref{h2}) one arrives at
\[
|h'\cap K^+|-|h'\cap K^-| =0{\pmod 8}.
\]
Since any vector of $\calU$ can be written as $h$ or $h'$,
the lemma is proved.
\end{proof}

Next let us compute the orthogonal subspace $\calU^\perp$.
We specialize to the case when 
 $m,n$ are odd, whereas all vectors
in $\calS$ and $\calT$ have even weight.
This will be the case for applications considered below,
where $\calS$ and $\calT$ define stabilizer groups of CSS-type quantum codes.
\begin{lemma}
\label{lemma:ort}
Suppose  $n$ and $m$ are odd. Suppose 
$\calS\subseteq \calE$ and $\calT\subseteq \calE$.
Then 
\begin{equation}
\label{Uperp}
\calU^\perp = \langle f[A] + f[B]  \, : \, f\in \calE\rangle 
+ \calS^\perp[A] + \dot{\calT}[C]+ 
 \langle BC\rangle.
\end{equation}
Furthermore,
\begin{equation}
\label{UperpEven}
\dot{\calU}= 
 \langle f[A] + f[B]  \, : \, f\in \calE\rangle  + 
\dot{\calS}[B] + \dot{\calT}[C]+ 
 \langle BC\rangle.
\end{equation}
and
\begin{equation}
\label{Udistance}
d(\calU)=\min{ \{ d(\calS), d(\calT)+2\}}.
\end{equation}
\end{lemma}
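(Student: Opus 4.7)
The plan is to compute $\calU^\perp$ directly from the definition, then intersect with $\calE$, and finally deduce the distance by a case analysis on the parity structure of $\calU^\perp\cap\calO$.

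\textbf{Step 1: Compute $\calU^\perp$.}
Write a generic vector of $\FF_2^k$ as $v=(a,b,c)$ with $a,b\in\FF_2^m$ and $c\in\FF_2^n$. Pairing $v$ against the three families of generators of $\calU$ in Eq.~(\ref{Sout}) shows that $v\in\calU^\perp$ iff
\begin{equation*}
a+b\in\calS^\perp,\qquad c\in\calT^\perp,\qquad |b|+|c|\equiv 0\pmod 2.
\end{equation*}
Now I will check that the four pieces on the right-hand side of Eq.~(\ref{Uperp}) satisfy these three conditions, which is straightforward once we recall that $\overline{1}\in\calS^\perp$ and $\overline{1}\in\calT^\perp$ (since $\calS,\calT\subseteq\calE$) and that $m,n$ are odd. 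Conversely, given $v=(a,b,c)$ satisfying the three conditions, I split on $|c|\bmod 2$: if $|c|$ is even then $c\in\dot\calT$, so subtracting $c[C]$ and $b[A]+b[B]$ (with $|b|$ even) leaves $((a+b)[A],0,0)$ which lies in $\calS^\perp[A]$; if $|c|$ is odd I first subtract $BC$ to reduce to the even case. This proves Eq.~(\ref{Uperp}).

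\textbf{Step 2: Compute $\dot\calU$.}
By definition $\dot\calU=\calU^\perp\cap\calE$. Every generator listed in Eq.~(\ref{UperpEven}) has even weight (using $m+n$ even), so the right-hand side is contained in $\dot\calU$. For the reverse inclusion, take $(a,b,c)\in\dot\calU$. The condition $|a|+|b|+|c|\equiv 0$ together with $|b|+|c|\equiv 0$ forces $|a|$ even. Using the trick $(a,b,c)=(a[A]+a[B])+(0,a+b,c)$ reduces the task to showing that the residual lies in $\dot\calS[B]+\dot\calT[C]+\langle BC\rangle$. A parity split on $|c|$ (as in Step~1) finishes this, producing $\dot\calS$ instead of $\calS^\perp$ because now the $B$-component is forced to have even weight (possibly after subtracting $BC$).

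\textbf{Step 3: Distance.}
For the upper bound, I exhibit two explicit odd-weight witnesses. Taking any $a\in\calS^\perp\cap\calO$ of weight $d(\calS)$ and $b=c=0$ gives a weight-$d(\calS)$ element of $\calU^\perp\cap\calO$. Taking any $c\in\calT^\perp\cap\calO$ of weight $d(\calT)$ together with $a=b=e^j[A]$ for a single index $j\in A$ (so $a+b=0\in\calS^\perp$) gives a weight-$(d(\calT)+2)$ element. Hence $d(\calU)\le\min\{d(\calS),d(\calT)+2\}$.

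For the lower bound, take any $v=(a,b,c)\in\calU^\perp\cap\calO$ and set $u=a+b\in\calS^\perp$. The parity constraints $|a|+|b|+|c|\equiv 1$ and $|b|+|c|\equiv 0$ give $|a|\equiv 1\pmod 2$. I now split on the nature of $u$:
\begin{itemize}
\item If $u\in\calS^\perp\cap\calO$ then $|u|\ge d(\calS)$ and $|a|+|b|\ge|u|$, so $|v|\ge d(\calS)$.
\item If $u=0$ then $a=b$, so $|b|$ is odd and hence $|c|$ is odd, giving $|c|\ge d(\calT)$ and $|a|+|b|\ge 2$, hence $|v|\ge d(\calT)+2$.
\item If $u\in\dot\calS\setminus\{0\}$ then $|u|$ is even while $|a|$ is odd, which forces $|b|$ odd, hence $|c|$ odd, hence $|c|\ge d(\calT)$ and $|a|+|b|\ge 2$, so again $|v|\ge d(\calT)+2$.
\end{itemize}
Combining the three cases yields $d(\calU)\ge\min\{d(\calS),d(\calT)+2\}$, which together with the upper bound proves Eq.~(\ref{Udistance}).

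The main obstacle is bookkeeping in Step~3: all three subcases involve reading off the parities of $|a|,|b|,|c|$ from the two parity constraints and the chosen coset of $u$ inside $\calS^\perp$. The cleanest way to organize it is exactly the trichotomy $u\in\calS^\perp\cap\calO$ vs $u=0$ vs $u\in\dot\calS\setminus\{0\}$, because $|u|$ parity controls whether the odd weight of $v$ must be absorbed on the $a,b$ side or pushed into $c$.
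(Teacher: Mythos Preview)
Your proof is correct and follows essentially the same approach as the paper's: both identify the three orthogonality conditions $a+b\in\calS^\perp$, $c\in\calT^\perp$, $|b|+|c|\equiv 0\pmod 2$, then reduce via case analysis on parities. The only cosmetic differences are that the paper organizes Step~1 by splitting on $|b|$ rather than $|c|$, and for the distance lower bound it uses a two-case split on the parity of $c$ (which coincides with the parity of your $u=a+b$ since $|a|$ is always odd), so your second and third subcases merge into the paper's single ``$h_C$ odd'' case.
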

\begin{proof}
Consider an arbitrary vector $h\in \FF_2^k$.
By definition,   the inclusion $h\in \calU^\perp$ is equivalent to
\begin{equation}
\label{ort_cond}
h_A+h_B \in \calS^\perp, \quad h_C\in \calT^\perp, \quad \mbox{and} \quad
h_B\oplus h_C\in \calE.
\end{equation}
A direct inspection shows that Eq.~(\ref{ort_cond}) holds 
if $h$ belongs to each individual term in Eq.~(\ref{Uperp})
which proves the inclusion $\supseteq$ in Eq.~(\ref{Uperp}).
Conversely, suppose $h\in \calU^\perp$.
We have to prove that $h$ is contained in the sum of the four terms in Eq.~(\ref{Uperp}).
Set $f=h_B$ if $h_B\in \calE$ and $f=h_B+\overline{1}$ if $h_B\in \calO$.
In both cases $f\in \calE$. Replacing $h$ by $h+f[A]+f[B]$
we can make $h_B=\overline{0}$ or $h_B=\overline{1}$.
Since $\overline{1}\in \calS^\perp$, the first condition in Eq.~(\ref{ort_cond})
implies $h_A\in \calS^\perp$. If $h_B=\overline{1}$,
replace $h$ by $h+\overline{1}[B]+  \overline{1}[C]$.
This makes $h_B=\overline{0}$ and does not change the second condition in Eq.~(\ref{ort_cond})
since $\overline{1}\in \calT^\perp$. The last condition in Eq.~(\ref{ort_cond})
then implies $h_C\in \calE$. We conclude that $h=g[A] + g'[C]$
for some $g\in \calS^\perp$ and $g'\in \dot{\calT}$.
This proves Eq.~(\ref{Uperp}). 

To prove Eq.~(\ref{UperpEven}) we note 
that odd-weight vectors in $\calU^\perp$ can only originate from
the term  $\calS^\perp[A]$.
Restriction to the even subspace replaces this term
by $\dot{\calS}[A]$.
Since we already know that $\dot{\calU}$ contains all
vectors $f[A]+f[B]$ with  $f\in \calE$, we can move
$\dot{\calS}$ from $A$ to $B$ without changing $\dot{\calU}$.
This proves Eq.~(\ref{UperpEven}). 

It remains to prove Eq.~(\ref{Udistance}). 
Choose any vectors $f^*\in \calS^\perp \cap \calO$ and $g^*\in \calT^\perp \cap \calO$ such that $d(\calS)=|f^*|$ and $d(\calT)=|g^*|$. 
Choose any $i\in [m]$ and consider vectors $x=e^i[A]+ e^i[B] + g^*[C]$
and $y=f^*[A]$. 
A direct inspection shows that $x,y\in \calU^\perp \cap \calO$.
Therefore
$d(\calU)\le |x|=2+|g^*|=d(\calT)+2$
and 
$d(\calU)\le |y|=|f^*|= d(\calS)$. This proves the  inequality $\le$ in Eq.~(\ref{Udistance}). 
Let us prove the reverse inequality. Consider any vector $h\in \calU^\perp \cap \calO$.
It must satisfy Eq.~(\ref{ort_cond}).
Since $h\in \calO$,  the  condition $h_B\oplus h_C\in \calE$   implies  $h_A\in \calO$.
Consider two cases. Case~1: $h_C\in \calO$. Then the second   condition in Eq.~(\ref{ort_cond})
implies $h_C\in \calT^\perp \cap \calO$, that is, $|h_C|\ge d(\calT)$. 
Furthermore, since $h,h_C,h_A\in \calO$ we infer that $h_B\in \calO$ which implies
$|h|=|h_A|+|h_B|+|h_C|\ge 2+d(\calT)$. Case~2: $h_C\in \calE$.  Then 
the  condition $h_B\oplus h_C\in \calE$   implies
 $h_B\in \calE$, that is, $h_A+h_B\in \calO$.
 The first condition in Eq.~(\ref{ort_cond}) implies
$h_A+h_B\in \calS^\perp\cap \calO$, that is, $|h_A+h_B|\ge d(\calS)$. By triangle inequality, 
$|h_A|+|h_B|\ge d(\calS)$ and thus $|h|\ge d(\calS)$. 
This proves  the  inequality $\ge$ in Eq.~(\ref{Udistance}).
\end{proof}
The above lemma has the following obvious corollary.
\begin{corol}
\label{corol:ort}
Suppose  $n$ and $m$ are odd,
$\calS\subseteq \calE$, and $\calT$ is self-orthogonal.
Then $\calU$ is self-orthogonal. 
\end{corol}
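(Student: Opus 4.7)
The plan is to derive self-orthogonality directly from the explicit description of $\dot{\calU}$ provided by Lemma~\ref{lemma:ort}. Since self-orthogonality means $\calU\subseteq \calU^\perp$, and since I will show that every generator of $\calU$ has even weight (forcing $\calU\subseteq \calE$), it suffices to prove the stronger statement $\calU\subseteq \dot{\calU}$.

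First I would verify the mild preliminary observation that, over $\FF_2$, self-orthogonality of $\calT$ forces $\calT\subseteq \calE$: any $g\in \calT$ satisfies $\trn{g}g=0$, which is exactly the statement that $|g|$ is even. Combined with the hypotheses $\calS\subseteq \calE$ and $m,n$ odd (so that $|BC|=m+n$ is even), this shows that the three families of generators of $\calU$ listed in Eq.~(\ref{Sout}) all have even weight, so $\calU\subseteq \calE$.

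Next I would invoke Eq.~(\ref{UperpEven}) from Lemma~\ref{lemma:ort}, which gives
\[
\dot{\calU}= \langle f[A] + f[B] \, : \, f\in \calE\rangle + \dot{\calS}[B] + \dot{\calT}[C] + \langle BC\rangle,
\]
and check the three generating families of $\calU$ one by one. For $f\in \calS$, the hypothesis $\calS\subseteq \calE$ gives $f\in \calE$, so $f[A]+f[B]$ sits in the first summand. For $g\in \calT$, self-orthogonality gives $g\in \calT^\perp$ and the preliminary step gives $g\in \calE$, whence $g\in \dot{\calT}$ and $g[C]\in \dot{\calT}[C]$. The vector $BC$ lies in $\langle BC\rangle$ by definition. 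Thus every generator of $\calU$ lies in $\dot{\calU}$, so $\calU\subseteq \dot{\calU}\subseteq \calU^\perp$.

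There is really no obstacle here: the corollary is essentially a repackaging of Eq.~(\ref{UperpEven}) once one notices that self-orthogonality over $\FF_2$ automatically supplies the even-weight condition needed to replace $\calT^\perp$ by $\dot{\calT}$. The only care required is to not confuse $\calS\subseteq \calE$ (assumed) with self-orthogonality of $\calS$ (not assumed); for $\calS$ we only need even weight, because its doubled copy $f[A]+f[B]$ automatically lies in the larger block $\langle f[A]+f[B]\,:\,f\in \calE\rangle$ of $\dot{\calU}$ regardless of whether $f\in \calS^\perp$.
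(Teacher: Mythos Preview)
Your proof is correct and follows essentially the same approach as the paper: both verify that each generator of $\calU$ from Eq.~(\ref{Sout}) lands in the explicit description of $\calU^\perp$ (or $\dot{\calU}$) furnished by Lemma~\ref{lemma:ort}, using that self-orthogonality of $\calT$ forces $\calT\subseteq\dot{\calT}$. The only cosmetic difference is that the paper invokes Eq.~(\ref{Uperp}) directly to get $\calU\subseteq\calU^\perp$, whereas you first establish $\calU\subseteq\calE$ and then use Eq.~(\ref{UperpEven}) to get $\calU\subseteq\dot{\calU}$; this extra step is harmless but unnecessary.
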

\begin{proof}
Self-orthogonality of $\calT$ implies $\calT\subseteq \dot{\calT}$.
Thus the first, the second, and the third terms of Eq.~(\ref{Sout})
are contained in the first, the third, and the fourth terms of Eq.~(\ref{Uperp})
respectively.
\end{proof}

\section{Doubled color codes: construction}
\label{sec:color2}

Let  $\calS_t$ be the subspace spanned by faces of the color code lattice
$\Lambda_t$ constructed in Section~\ref{sec:color}.
Recall that $\calS_t\subseteq \FF_2^{m_t}$ where $m_t\equiv |\Lambda_t|=3t^2+3t+1$. 
A doubled color code with distance $d=2t+1$ will require $n_t$ physical qubits,
where  $n_0=1$ and
\begin{equation}
\label{induction0}
n_t=2m_t+n_{t-1} \quad \mbox{for $t\ge1$}.
\end{equation}
Solving the recurrence relation gives
\begin{equation}
\label{n_t}
n_t=2t^3 + 6t^2 + 6t + 1.
\end{equation}
For instance, $n_1=15$, $n_2=53$, and $n_3=127$.
Define a  family  of   subspaces
$\calT_t\subseteq \FF_2^{n_t}$ such that
$\calT_0=\langle0\rangle \subseteq \FF_2$ and 
\begin{equation}
\label{induction1}
\calT_t=2\calS_t + \calT_{t-1}, 
\end{equation}
In other words, $\calT_t$ is obtained by applying the doubling map of Section~\ref{sec:doubling}
with $\calS=\calS_t$ and $\calT=\calT_{t-1}$. 
To describe this more explicitly, partition the set of integers $[n_t]$ into $2t+1$
consecutive blocks as
\begin{equation}
\label{AtBt}
[n_t]=A_tB_t\ldots A_2B_2 A_1B_1A_0 \quad \mbox{where} \quad
A_r\cong B_r\cong \Lambda_r.
\end{equation}
In other words, $A_r$ and $B_r$ represent two copies of the color code lattice $\Lambda_r$.
Then 
\begin{equation}
\label{Tgens}
\calT_t=\sum_{r=1}^t \langle f[A_r]+f[B_r]\, : \, f\in \calS_r\rangle
+ \sum_{r=1}^t \langle  B_r A_{r-1}\rangle.
\end{equation}
We can also describe $\calT_t$ by its generating matrix.
Suppose $S_t$ is a generating matrix of $\calS_t$
such that rows of $S_t$ correspond to faces of the color code lattice $\Lambda_t$.
Then $\calT_t$ has a generating matrix 

\vspace{3mm}
\centerline{
$T_t=\; $\begin{tabular}{|cc|cc|cc|cc|cc|c|}
 $A_t$ & $B_t$ & $A_{t-1}$ & $B_{t-1}$ &  &  & $A_2$ & $B_2$ & $A_1$ & $B_1$ & $A_0$ \\
\hline
$S_t$ & $S_t$ &&&&&&&&& \\
&& $S_{t-1}$  & $S_{t-1}$ &&&&&&& \\
&&&& $\cdots$ & $\cdots$ &&&&& \\
&&&&&& $S_2$ &  $S_2$ &&& \\
&&&&&&&& $S_1$  & $S_1$ & \\
 & $\overline{1}$ & $\overline{1}$ & && & && && \\
&& & $\overline{1}$ & $\overline{1}$ & && & &&  \\
&&&& $\cdots$ & $\cdots$ &&&&& \\
&&&& &  $\overline{1}$ & $\overline{1}$ & && & \\
&&&&&& &  $\overline{1}$ & $\overline{1}$ & & \\
&&&&&&&&  & $\overline{1}$ & 1 \\
\hline
\end{tabular}    
}
\vspace{3mm}

\noindent
Here the first line indicates which block of qubits contains a given group of columns.
We shall use the subspace $\calT_t$ to construct the $T$-code
as defined in Table~\ref{table:CT}.
Let us prove that  $\calT_t$ has the properties stated in Section~\ref{sec:summary}.
\begin{lemma}
\label{lemma:Npm}
The subspace $\calT_t$ is triply even with respect to some
subsets $N^\pm_t\subseteq [n_t]$ satisfying 
\begin{equation}
\label{induction3}
|N^+_t|-|N^-_t|=1.
\end{equation}
\end{lemma}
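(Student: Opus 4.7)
The plan is to proceed by induction on $t$, with Lemma~\ref{lemma:doubling} doing essentially all of the work at the inductive step. The recursive definition $\calT_t = 2\calS_t + \calT_{t-1}$ matches exactly the setup of that lemma, so the real task is just to verify its hypotheses and track how $|N^+_t|-|N^-_t|$ transforms under the doubling map.

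For the base case $t=0$, one has $\calT_0 = \langle 0 \rangle \subseteq \FF_2$, i.e.\ $n_0 = 1$. I would take $N^+_0 = \{1\}$ and $N^-_0 = \emptyset$: the only vector is $\overline{0}$, for which the triply even condition holds trivially, and $|N^+_0| - |N^-_0| = 1$ as required.

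For the inductive step, assume $\calT_{t-1}$ is triply even with respect to some disjoint $N^\pm_{t-1}$ satisfying $|N^+_{t-1}| - |N^-_{t-1}| = 1$. By Fact~\ref{fact:CC}, $\calS_t$ is doubly even with respect to $M^+ := \Delta^0_t$ and $M^- := \Delta^2_t$, and $|M^+| - |M^-| = 1$. The mod~$8$ hypothesis of Lemma~\ref{lemma:doubling} reads
\[
(|M^+|-|M^-|) + (|N^-_{t-1}|-|N^+_{t-1}|) = 1 + (-1) = 0 \pmod 8,
\]
so the lemma applies and yields that $\calT_t = 2\calS_t + \calT_{t-1}$ is triply even with respect to $N^\pm_t := K^\pm = M^\pm M^\pm N^\mp_{t-1}$, where the triple refers to the block decomposition $[n_t] = A_t B_t [n_{t-1}]$ from Eq.~(\ref{AtBt}). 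Disjointness of $N^+_t$ and $N^-_t$ is immediate because $M^+\cap M^- = \emptyset$ within each of $A_t, B_t$, and $N^+_{t-1}\cap N^-_{t-1} = \emptyset$ within the trailing block.

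Finally I would verify the cardinality condition directly from $N^\pm_t = M^\pm M^\pm N^\mp_{t-1}$:
\[
|N^+_t| - |N^-_t| = 2(|M^+|-|M^-|) + (|N^-_{t-1}| - |N^+_{t-1}|) = 2\cdot 1 + (-1) = 1,
\]
completing the induction. There is no real obstacle here: Lemma~\ref{lemma:doubling} was tailored for exactly this recursion, so the only things to check are the arithmetic of the compatibility condition and the book-keeping of $|N^+_t| - |N^-_t|$, both of which close up cleanly because $\calS_t$ contributes a net $+1$ in the $M$-difference that is doubled at each step, while $\calT_{t-1}$ contributes a $-1$ via the sign flip $N^\mp$ built into the doubling map.
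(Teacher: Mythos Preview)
Your proof is correct and follows essentially the same approach as the paper: induction on $t$ with the same base case $N^+_0=\{1\}$, $N^-_0=\emptyset$, and the inductive step driven by Lemma~\ref{lemma:doubling} applied with $M^\pm=\Delta^{0,2}_t$ from Fact~\ref{fact:CC}, yielding $N^\pm_t=M^\pm M^\pm N^\mp_{t-1}$. Your verification of the mod~$8$ hypothesis and the cardinality bookkeeping match the paper's argument line for line.
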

\begin{proof}
We shall use  induction in $t$.
The base of induction, $t=0$, corresponds to $n_0=1$
and $\calT_0=\langle 0\rangle\subseteq \FF_2$. 
Clearly, $\calT_0$ is triply even with respect to subsets 
\begin{equation}
\label{induction5}
N^+_0=\{1\}, \quad N^-_0=\emptyset
\end{equation}
which obey Eq.~(\ref{induction3}).
Consider now an arbitrary $t$. 
We already know that $\calS_t$ is doubly even with respect to the subsets $\Delta^0_t$
and $\Delta^2_t$ such that 
\begin{equation}
\label{0-2'}
|\Delta^0_t|-|\Delta^2_t|=1,
\end{equation}
see Fact~\ref{fact:CC}.
Define
\begin{equation}
\label{AB+-}
A^+_t=B^+_t=\Delta^0_t \quad \mbox{and} \quad A^-_t=B^-_t=\Delta^2_t.
\end{equation}
Here we consider $A^\pm_t$ and $B^\pm_t$ as subsets of $A_t$
and $B_t$ respectively. 
Choose
\begin{equation}
\label{induction4}
N^\pm_t = A^\pm_t B^\pm_t N^\mp_{t-1}, \quad t\ge 1. 
\end{equation}
Combining   Eqs.~(\ref{0-2'},\ref{AB+-},\ref{induction4}) 
and assuming that $|N^+_{t-1}|-|N^-_{t-1}|=1$
one gets
\begin{equation}
\label{induction6}
|N^+_t|-|N^-_t| = 2(|\Delta^2_t| - |\Delta^0_t|) + |N^-_{t-1}|-|N^+_{t-1}|
=1.
\end{equation}
This proves Eq.~(\ref{induction3}) for all $t\ge 0$.
Furthermore, Eqs.~(\ref{induction3},\ref{0-2'},\ref{AB+-}) imply
\[
|A^+_t| - |A^-_t|-|N^+_t|+|N^-_t|=0
\]
for all $t\ge 0$. This shows that  condition Eq.~(\ref{mod8condition}) of Lemma~\ref{lemma:doubling}
is satisfied for $M^\pm = A^\pm_t=B^\pm_t$.
 The lemma implies that  $\calT_t$  is triply even with respect to the
subsets $N^{\pm}_t$ for all $t\ge 0$.
\end{proof}

Let us use induction in $t$ to show that $\calT_t$ has distance
\begin{equation}
\label{d_t}
d(\calT_t)=2t+1.
\end{equation}
Indeed, $d(\calT_0)=1$ since $\calT_0^\perp=\langle0\rangle^\perp=\FF_2$ and  the only odd-weight vector in $\FF_2$ is $1$.
Furthermore, the color code on the lattice $\Lambda_t$ 
has distance $2t+1$, that is, $d(\calS_t)=2t+1$, see Fact~\ref{fact:CC}.
Assuming that $d(\calT_{t-1})=2t-1$ and using
Eq.~(\ref{Udistance}) of Lemma~\ref{lemma:ort} we infer that 
$d(\calT_t)=\min{\{ d(\calS_t),2+d(\calT_{t-1})\}}=2t+1$ which proves Eq.~(\ref{d_t})
for all $t\ge 0$.

To construct the $T$-code we shall also need a subspace $\dot{\calT}_t$, see Table~\ref{table:CT}.
It will be convenient to rewrite the partition in Eq.~(\ref{AtBt})
as $[n_t]=A_tB_tC_t$, where 
\[
C_t=A_{t-1}B_{t-1} \ldots A_1B_1A_0.
\]
Applying Eq.~(\ref{UperpEven}) of Lemma~\ref{lemma:ort} 
and taking into account that $\dot{\calS}_t=\calS_t$, see Fact~\ref{fact:CC},
one gets
$\dot{\calT}_0=0$ and 
\begin{equation}
\label{Gt}
\dot{\calT}_t=\langle f[A_t] + f[B_t]\, : \, f\in \calE\rangle
+ \calS_t[B_t] + \dot{\calT}_{t-1}[C_t] + \langle B_t A_{t-1} \rangle
\end{equation}
for $t\ge 1$. 
 Let $\omega_t\in \calS_t^\perp \cap \calO$
be some fixed minimum weight  logical operator of the regular color code
such that $|\omega_t|=2t+1$.
Later on we shall choose $\omega_t$ as defined in Eq.~(\ref{omega}).
Then $\overline{1}=\omega_t+g$ for some $g\in \calS_t$, see
Eq.~(\ref{CCfact2}) of Fact~\ref{fact:CC}.
This implies $B_t A_{t-1}= \omega_t[B_t] + g[B_t]+A_{t-1}$.
Since $g[B_t]$ is contained in the second term in Eq.~(\ref{Gt}),
we can replace the last term by  $\langle \omega_t[B_t] + A_{t-1}\rangle$.
Likewise, 
\[
A_{t-1}=\omega_{t-1}[A_{t-1}] + f[C_t]
\]
for some $f\in \dot{\calT}_{t-1}$. Here we noted that
 both $\overline{1}[A_{t-1}]$ and
$\omega_{t-1}[A_{t-1}]$ are contained in $\calT_{t-1}^\perp \cap \calO$,
so that the sum of them is contained in $\calT_{t-1}^\perp \cap \calE\equiv  \dot{\calT}_{t-1}$.
 Since $f[C_t]$ is contained
in the third term in Eq.~(\ref{Gt}), we can rewrite the last term 
as $\langle \omega_t[B_t] + \omega_{t-1}[A_{t-1}]\rangle$. Therefore
\begin{equation}
\label{Gt2}
\dot{\calT}_t=\langle f[A_t] + f[B_t]\, : \, f\in \calE\rangle
+ \calS_t[B_t] + \dot{\calT}_{t-1}[C_t] + \langle \omega_t[B_t] + \omega_{t-1}[A_{t-1}]\rangle
\end{equation}
for $t\ge 1$. Note that here we can add a term $\calS_t[A_t]$
since $f[A_t]+f[B_t]$ is contained in the first term in Eq.~(\ref{Gt2}) for 
any $f\in \calS_t$. We can also describe $\dot{\calT}_t$ by its generating matrix.
Suppose $E_t$ is a generating matrix of the even subspace $\calE^{m_t}$
such that rows of $E_t$ correspond to edges of the color code lattice $\Lambda_t$,
that is, each row of $E_t$ has a form $e^u+e^v$ for some edge $(u,v)$ of $\Lambda_t$.
Then $\dot{\calT}_t$ has a generating matrix
\vspace{3mm}
\centerline{
$\dot{T}_t=\; $\begin{tabular}{|cc|cc|cc|cc|cc|c|}
 $A_t$ & $B_t$ & $A_{t-1}$ & $B_{t-1}$ &  &  & $A_2$ & $B_2$ & $A_1$ & $B_1$ & $A_0$ \\
\hline
$S_t$ & &&&&&&&&& \\ 
& $S_t$ &&&&&&&&& \\ 
&& $S_{t-1}$  & &&&&&&&\\
&& & $S_{t-1}$ &&&&&&&\\
&&&& $\cdots$ & $\cdots$ &&&&& \\
&&&&&& $S_2$   &  &&& \\
&&&&&&  &  $S_2$ &&& \\
&&&&&&&&  $S_1$  &  & \\
&&&&&&&&   & $S_1$ & \\
$E_t$ & $E_t$ &&&&&&&&& \\
&& $E_{t-1}$  & $E_{t-1}$ &&&&&&& \\
&&&& $\cdots$ & $\cdots$ &&&&& \\
&&&&&& $E_2$ &  $E_2$ &&& \\
&&&&&&&& $E_1$  & $E_1$ & \\
 & $\omega_t$ & $\omega_{t-1}$ & && & && && \\
&& & $\omega_{t-1}$ & $\omega_{t-2}$ & && & &&  \\
&&&& $\cdots$ & $\cdots$ &&&&& \\
&&&& &  $\omega_3$ & $\omega_2$ & && & \\
&&&&&& &  $\omega_2$ & $\omega_1$ & & \\
&&&&&&&&  & $\omega_1$ & 1 \\
\hline
\end{tabular}    
}
\vspace{3mm}

\noindent
We shall refer to the first two groups of rows as {\em face-type} generators
and {\em edge-type} generators. 

To construct the $C$-code we shall need a subspace $\calC_t\subseteq \FF_2^{n_t}$
defined as 
\begin{equation}
\label{Cgens}
\calC_t=\sum_{r=1}^t \langle f[A_r]+g[B_r]\, : \, f,g\in \calS_r\rangle
+ \sum_{r=1}^t \langle  B_r A_{r-1}\rangle.
\end{equation}
By comparing Eqs.~(\ref{Tgens},\ref{Gt},\ref{Cgens}) one can see that 
\begin{equation}
\label{Cproperty3}
\calT_t\subseteq \calC_t = \dot{\calC}_t \subseteq \dot{\calT}_t.
\end{equation}
A generating matrix of $\calC_t$ can be obtained from $\dot{T}_t$
by removing all edge-type generators. 
A direct inspection shows that 
generators of $\calC_t$ can be partitioned into mutually disjoint subsets
supported on regions 
\[
M_t= A_t, \quad M_{t-1}=B_tA_{t-1}, \quad M_{t-2}=B_{t-1}  A_{t-2},
\quad \ldots \quad
, \quad M_0=B_1 A_0.
\]
Thus we can analyze properties of $\calC_t$ on each region $M_r$ separately.
Generators of $\calC_t$ supported on $M_t$ have
a form $f[A_t]$ with $f\in \calS_t$. 
These generators describe the regular color code $\calS_t$.
Generators supported on a region $M_r$  describe the two-qubit EPR state
$|0,0\rangle+|1,1\rangle$ shared between $B_{r+1}$ and $A_r$
such that the two qubits are encoded by the regular color code
$\calS_{r+1}$ and $\calS_r$.
As a consequence we obtain 
\begin{corol}
\label{corol:Ct}
The subspace $\calC_t$ is doubly even with respect to the
subsets  
$\Delta^{0,2}_t\subseteq A_t$, see Fact~\ref{fact:CC}.
Furthermore, $d(\calC_t)=2t+1$.
\end{corol}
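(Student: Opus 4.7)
The plan rests on the disjoint-region decomposition already observed in the text. Write $\calC_t$ as a direct sum of subspaces supported on the pairwise disjoint regions $M_t = A_t$ and $M_r = B_{r+1} A_r$ for $0 \le r \le t-1$. The generators on $M_t$ are $\calS_t[A_t]$, while the generators on $M_r$ for $r<t$ are $\calS_{r+1}[B_{r+1}]$, $\calS_r[A_r]$, and the all-ones vector $B_{r+1} + A_r$. I will treat the two claims (double evenness and distance) separately, using the fact that $\calC_t^\perp$ and $\calC_t$ both split as direct sums over the $M_r$.

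For double evenness, note that $\Delta_t^0 \cup \Delta_t^2 \subseteq A_t = M_t$, so any generator supported on some $M_r$ with $r<t$ contributes zero to both $|h \cap \Delta_t^0|$ and $|h \cap \Delta_t^2|$. Hence for any $h \in \calC_t$, the quantity $|h \cap \Delta_t^0| - |h \cap \Delta_t^2|$ equals the same quantity computed for $h|_{A_t} \in \calS_t$, and that quantity is $0 \pmod 4$ by Fact~\ref{fact:CC}. This immediately yields double evenness with respect to $\Delta_t^{0,2}$.

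For the distance, I need the minimum weight of an odd-weight vector in $\calC_t^\perp$. Since $\calC_t$ splits over regions, so does $\calC_t^\perp$: a vector $f$ lies in $\calC_t^\perp$ iff each restriction $f|_{M_r}$ is orthogonal to $\calC_t|_{M_r}$. On $M_t$, this orthogonal complement is $\calS_t^\perp$, whose minimum odd-weight element has weight $2t+1$ by Fact~\ref{fact:CC}. On $M_r$ for $r<t$, a vector $(u,v)$ with $u \in \FF_2^{B_{r+1}}$, $v \in \FF_2^{A_r}$ lies in the orthogonal complement iff $u \in \calS_{r+1}^\perp$, $v \in \calS_r^\perp$, and $\trn{\overline{1}}u + \trn{\overline{1}}v = 0$, i.e., $|u|+|v|$ is even. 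Thus the contribution of each $M_r$ with $r<t$ to $|f|$ is even, so the total weight $|f|$ is odd iff $|f|_{M_t}|$ is odd. This forces $|f| \ge 2t+1$, and equality is attained by taking $f = \omega_t[A_t]$, which lies in $\calS_t^\perp \cap \calO \subseteq \calC_t^\perp \cap \calO$. Hence $d(\calC_t) = 2t+1$.

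The only real subtlety is the parity constraint coming from the all-ones generator $B_{r+1} + A_r$ on each $M_r$ with $r<t$; this is what forces the even-weight contribution from those regions and rules out shorter odd-weight vectors spread over multiple regions. Everything else reduces cleanly to Fact~\ref{fact:CC} applied to $\calS_t$, so I expect no serious obstacle.
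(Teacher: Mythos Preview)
Your proof is correct and follows essentially the same approach as the paper: both rely on the disjoint-region decomposition $[n_t]=M_t\cup M_{t-1}\cup\cdots\cup M_0$ described immediately before the corollary, reducing everything to the properties of $\calS_t$ on $M_t=A_t$ from Fact~\ref{fact:CC}. The paper phrases the analysis of each $M_r$ with $r<t$ in terms of an encoded EPR state (hence no contribution to logical operators or distance), whereas you spell out explicitly the parity constraint imposed by the all-ones generator $B_{r+1}A_r$; these are the same observation in different language.
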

At this point we have proved all properties of the subspaces
$\calC_t,\calT_t$ stated in Section~\ref{sec:summary} except for the spatial locality.
  Let $\Lambda$ be the honeycomb   lattice 
with two qubits per site. 
We shall allocate a triangular-shaped region of $\Lambda$ 
isomorphic to the color code lattice $\Lambda_t$ 
 to accommodate the blocks of qubits $A_t$ and $B_t$. 
 The two blocks can  share the same set of sites
since each site contains two qubits. 
By a slight abuse of terminology, we shall identify $\Lambda_t$
and the region of $\Lambda$  accommodating $A_t$ and $B_t$.
The regions $\Lambda_t$ and $\Lambda_{t-1}$  are placed next to each other
as shown on Fig.~\ref{fig:snake}.

\begin{figure}[h]
\centerline{\includegraphics[height=8cm]{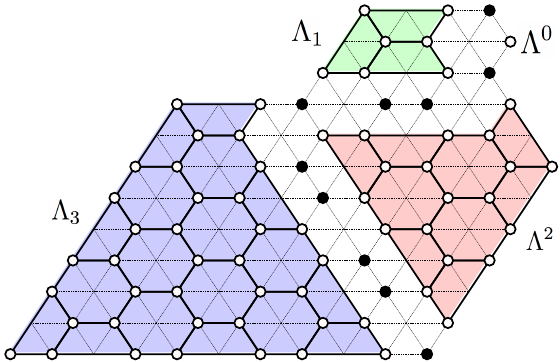}}
\caption{Embedding of the doubled color code $\calT_3$ into the hexagonal lattice $\Lambda$
with two qubits per site.
The lattice is divided into four disjoint regions $\Lambda_0,\Lambda_1,\Lambda_2,\Lambda_3$
where $\Lambda_t$ accommodates two copies of the color code $\calS_t$
labeled $A_t$ and $B_t$ (not shown).
Recall that $\Lambda_0$ is a single site.
To enable a more regular embedding we twisted one corner of each lattice $\Lambda_t$
compared with Fig.~\ref{fig:Lambda}. Sites represented by solid circles can be ignored at this point. 
To obtain the next code $\calT_4$ one should attach $\Lambda_4$ to the bottom
side  of $\Lambda_3$. To obtain $\calT_5$ one should attach $\Lambda_5$ to the left side
of $\Lambda_4$. The process continues in the alternating fashion. 
 Gauge generators live on edges and faces of the lattice.  There are also
non-local gauge generators $\omega_{t,t-1}$ of weight $4t$ connecting 
$B$-qubits on the 
boundary of $\Lambda_t$ and $A$-qubits on the boundary of $\Lambda_{t-1}$ (not shown).
\label{fig:snake}
}
\end{figure}

Consider the generating matrix $\dot{T}_t$ defined above.
We shall say that a vector $f\in \FF_2^{n_t}$ is {\em spatially local}
if its support is contained in a single face of the lattice. 
By definition, face-type and edge-type generators of 
$\dot{T}_t$ are  spatially local. 
Consider some row of $\dot{T}_t$ in the bottom group. It has a form 
\[
\omega_{r,r-1}\equiv \omega_r[B_r] + \omega_{r-1}[A_{r-1}], \quad \quad r=1,\ldots,t.
\]
The above arguments show that we are free to choose different logical operators
$\omega_r$ in different generators $\omega_{r,r-1}$.
Let us use this freedom to choose 
\begin{equation}
\label{omega1}
\omega_{r,r-1}\equiv \omega_r^i[B_r] + \omega_{r-1}^j[A_{r-1}]
\end{equation}
where $\omega_r^i$ is the minimum weight logical operator
supported on the $i$-th boundary of the lattice $\Lambda_r$, see Eq.~(\ref{omega}).
Furthermore, we can choose $i$ and $j$ such that 
the $i$-th boundary of $\Lambda_r$ is located next to the $j$-th boundary of $\Lambda_{r-1}$,
see Fig.~\ref{fig:snake}. Then the generator $\omega_{r,r-1}$ has a shape 
of a  loop that  encloses the free space separating the regions $\Lambda_r$
and $\Lambda_{r-1}$. Since $|\omega_r^i|=2r+1$, we get
$|\omega_{r,r-1}|=2r+1+2r-1=4r$.
We shall explain how to reduce the weight of the generators $\omega_{r,r-1}$ and make then spatially local
in the next section.

The above discussion also shows that a code deformation transforming  the $C$-code
to the $T$-code requires only spatially local syndrome measurements.
Indeed,
suppose $\rho_L$ is some logical state of the $C$-code, see Table~\ref{table:CT}.
One can  first apply a reverse gauge fixing that extends  the gauge 
group of $\rho_L$ from $\css{\calC_t}{\calC_t}$ to $\css{\calC_t}{\dot{\calT}_t}$. This can be achieved by 
applying a random element of the group $Z(\dot{\calT}_t)$.
This also   restricts the stabilizer group 
of $\rho_L$ from $\css{\calC_t}{\calC_t}$ to $\css{{\calT}_t}{\calC_t}$.
We note that 
\[
\dot{\calT}_t=\calC_t + \sum_{l=(u,v)} \langle l[A]+l[B]\rangle,
\]
where the sum runs over all edges of  the sub-lattices $\Lambda_1,\ldots,\Lambda_r$.
Thus a gauge fixing that extends the stabilizer group of $\rho_L$
from $\css{{\calT}_t}{\calC_t}$
to 
$\css{\calT_t}{\dot{\calT}_t}$ can be realized by measuring syndromes
of the edge-type stabilizers $Z(l[A]+l[B])$ and applying a suitable recovery operator.

\section{Weight reduction}
\label{sec:gadget}

In this section we transform  the $C$-code   into a local form.
This requires two steps. 
First, we show how to represent each non-local gauge generator
$\omega_{r,r-1}$, see Eq.~(\ref{omega1}), as a sum
of spatially local generators of weight at most six and a single non-local generator of weight two. 
Secondly, we  show how to represent each of the remaining non-local generators
as a sum of spatially local  generators of weight two.
Each of these steps extends the code by adding several ancillary qubits and 
gauge generators. We add the same ancillary qubits and  gauge generators
to both $C$ and $T$ codes to preserve the local mapping between them. 
Accordingly, we have to prove that the extended versions of the
$C$ and $T$ codes have the same distance and the same
transversality properties as the original codes.

We begin by setting up some notations. Below we consider some fixed
value of $t$ and $r=1,\ldots,t$. 
Let us denote the sites of $\Lambda_r$ lying on the boundary facing $\Lambda_{r-1}$
as $u^1,u^2,\ldots,u^{2r+1}$.
The ordering is chosen such that $u^{2r+1}$ is the ``twisted" corner of $\Lambda_r$,
see Fig.~\ref{fig:gadget}.
Let us denote the sites of $\Lambda_{r-1}$ lying on the 
boundary facing $\Lambda_r$ as $v^1,v^2,\ldots,v^{2r-1}$.
The ordering is chosen such that $u^i$ and $v^i$ are next-to-nearest neighbors,  see Fig.~\ref{fig:gadget}.
Using these notations one can rewrite Eq.~(\ref{omega1}) as
\begin{equation}
\label{bad_guy}
\omega_{r,r-1}= \sum_{i=1}^{2r+1} u^i[B_r] + \sum_{i=1}^{2r-1} v^i[A_{r-1}].
\end{equation}
Consider now sites of $\Lambda$ lying in the free space separating
$\Lambda_r$ and $\Lambda_{r-1}$. These sites are indicated by filled circles
on Figs.~\ref{fig:snake},\ref{fig:gadget}.
Denote these sites as  $w^1,w^2,\ldots,w^{2r}$, see Fig.~\ref{fig:gadget},
and let
\[
D_r=\{w^1,\ldots,w^{2r}\}.
\]
The ordering is chosen such that $w^i$ is a nearest neighbor of $u^i$ and $v^i$
for all $1\le i\le 2r-1$. Furthermore, $w^{2i}$ is a nearest neighbor of $w^{2i+1}$.

\begin{figure}[h]
\centerline{\includegraphics[height=4cm]{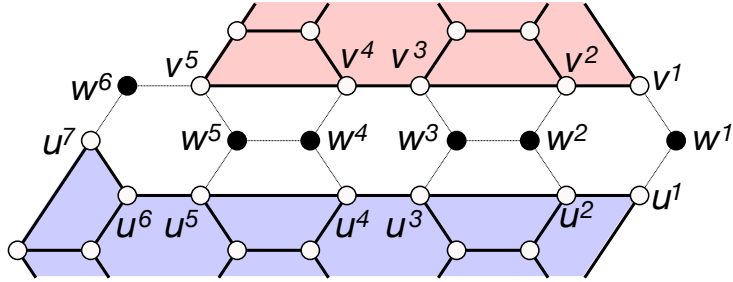}}
\caption{Sites $u^i$ on the boundary of $\Lambda_3$ (blue),
sites $v^i$ on the boundary of $\Lambda_2$ (red),
and sites $w^i$ in the region $D_3$ (filled circles).  
The lattice is rotated $60^\circ$ counter-clockwise
compared with Fig.~\ref{fig:snake}. 
\label{fig:gadget}
}
\end{figure}

Let us add $2r$ ancillary qubits such that each site of $D_r$ contains one qubit.
The total number of physical qubits in the lattice $\Lambda$ becomes
\begin{equation}
\label{N_t}
N_t=n_t+\sum_{r=1}^t 2r=2t^3+7t^2+7t+1,
\end{equation}
see Eq.~(\ref{n_t}). Accordingly, the partition Eq.~(\ref{AtBt}) becomes
\begin{equation}
\label{ABD}
[N_t]=A_tB_tD_t \ldots A_2B_2D_2A_1B_1D_1A_0.
\end{equation}
We shall use  terms $A$-qubit, $B$-qubit, and $D$-qubit
to indicate which of the blocks in Eq.~(\ref{ABD}) contains a given qubit.
Note that the site $w^{2r}$ of $D_r$ and the site $w^1$ of $D_{r-1}$ coincide,
see Fig.~\ref{fig:snake}.
However, since we placed only one qubit at $w^{2r}$ and $w^1$,
the total number of qubits per site is at most two.

For each $r=1,\ldots,t$ define $2r$ additional gauge generators
$g_r^i$ and $h_r^i$ with $i=1,\ldots,r$ as shown below, see also Fig.~\ref{fig:generators}.
\begin{equation}
\label{weight2}
g_r^i=(w^{2i}+w^{2i+1})[D_r], 
\end{equation}
where  $i=1,\ldots,r-1$,
\begin{equation}
\label{weight2'}
g_r^r=(w^1+w^{2r})[D_r],
\end{equation}
\begin{equation}
\label{weight6}
h_r^i=(w^{2i-1}+w^{2i})[D_r]+(u^{2i-1}+u^{2i})[B_r]+ (v^{2i-1} +v^{2i})[A_{r-1}], 
\end{equation} 
where  $i=1,\ldots,r-1$, and
\begin{equation}
\label{weight6'}
h_r^r=(w^{2r-1}+w^{2r})[D_r] +(u^{2r-1}+u^{2r}+u^{2r+1})[B_r] +v^{2r-1}[A_{r-1}].
\end{equation}

\begin{figure}[h]
\centerline{\includegraphics[height=5cm]{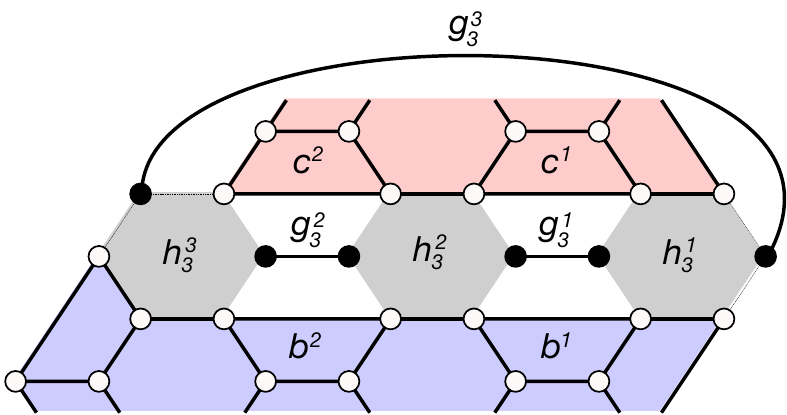}}
\caption{Additional gauge generators $h_r^i$ and $g_r^i$ for $r=3$.
The generator $h_r^i$ acts on $B$-qubits at the boundary of $\Lambda_r$ (blue)
and $A$-qubits at the boundary of $\Lambda_{r-1}$ (red). In addition, $h_r^i$ acts
on two $D$-qubits (filled circles). The generator $g_r^i$ acts only on two $D$-qubits.
The special faces $b^i$ and $c^i$ of the color code lattice
are used to define modified stabilizers.
The original stabilizer $b^i[A_r]+b^i[B_r]\in \calT_t$ must be replaced by
$g_r^i+b^i[A_r]+b^i[B_r]\in \calU_t$.
The original stabilizer $c^i[A_r]+c^i[B_r]\in \calT_t$ must be replaced by
$g_{r+1}^i+c^i[A_r]+c^i[B_r]\in \calU_t$.
\label{fig:generators}
}
\end{figure}

Note that all additional generators except for $g_r^r$ are spatially local. Furthermore, 
\begin{equation}
\label{bad_guy2}
\omega_{r,r-1}=\sum_{i=1}^r h_r^i +g_r^i.
\end{equation}
This identity will allow us to get rid of the non-local generators
$\omega_{r,r-1}$. More formally, 
define a subspace $\calU_t\subseteq \calE^{N_t}$ such that 
\begin{equation}
\label{Ut}
\dot{\calU}_t =\dot{\calT}_t +\sum_{r=1}^t \langle g_r^1,\ldots,g_r^r,h_r^1,\ldots,h_r^r\rangle.
\end{equation}
Note that $\dot{\calU}_t$ uniquely defines $\calU_t$ since applying the dot
operation twice gives the original subspace, see Section~\ref{sec:subs}.
We shall see that $\calU_t$ can be regarded as an extended version of $\calT_t$.
Similarly, define a subspace $\calD_t\subseteq \calE^{N_t}$ such that 
\begin{equation}
\label{Dt}
\dot{\calD}_t =\dot{\calC}_t +\sum_{r=1}^t \langle g_r^1,\ldots,g_r^r,h_r^1,\ldots,h_r^r\rangle.
\end{equation}
We shall see that $\calD_t$ can be regarded as an extended version of $\calC_t$.
Here it is understood that vectors from $\dot{\calT}_t$ and $\dot{\calC}_t$ are extended to $D$-qubits
by zeros.   From Eq.~(\ref{Cproperty3}) we infer that
\begin{equation}
\label{DU}
\dot{\calD}_t\subseteq \dot{\calU}_t\quad \mbox{and}  \quad  \calU_t\subseteq \calD_t.
\end{equation}
Let us establish some basic properties of $\calD_t$ and $\calU_t$.
\begin{lemma}
\label{lemma:gauge}
$\calU_t\subseteq \dot{\calU}_t$.
Furthermore, if $h\in \calU_t$ then the restriction of $h$ onto the union of $A$ and $B$ qubits
is contained in $\calT_t$.
\end{lemma}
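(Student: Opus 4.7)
The plan is to use the identity $\calU_t=(\dot{\calU}_t)^\perp\cap\calE$ (valid because $\calU_t\subseteq\calE^{N_t}$ and $N_t$ is odd) and, for $h\in\calU_t$, decompose the orthogonality condition according to which family of generators of $\dot{\calU}_t$ is being tested. A pleasant feature is that the $h_r^i$'s turn out to be unnecessary: orthogonality against $\dot{\calT}_t$ together with orthogonality against the $g_r^i$'s will already suffice.

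Because every generator of $\dot{\calT}_t$ is supported on the $AB$-qubits, orthogonality of $h$ against $\dot{\calT}_t$ translates directly into $h|_{AB}\in\dot{\calT}_t^\perp=\calT_t+\langle\overline{1}\rangle$, using $(\calS^\perp)^\perp=\calS$, $(\calS\cap\calT)^\perp=\calS^\perp+\calT^\perp$, and $\calE^\perp=\langle\overline{1}\rangle$. The key observation then concerns the $g_r^i$'s: the pairs $(w^{2i},w^{2i+1})$ for $i<r$ together with $(w^1,w^{2r})$ form a perfect matching of $\{w^1,\ldots,w^{2r}\}$, and their indicator vectors are precisely $g_r^1,\ldots,g_r^r$. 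Hence the space of $D_r$-vectors balanced on this matching coincides with $\langle g_r^1,\ldots,g_r^r\rangle$ itself, so orthogonality against the $g_r^i$'s forces $h|_{D_r}\in\langle g_r^1,\ldots,g_r^r\rangle$ outright (not merely an even-weight conclusion that would still need to be tightened).

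To finish, summing $|h|_{D_r}|$ over $r$ yields $|h|_D|$ even; together with $h\in\calE^{N_t}$ this gives $|h|_{AB}|$ even. Since $n_t$ is odd and $\calT_t\subseteq\calE$, the even-weight part of the coset $\calT_t+\langle\overline{1}\rangle$ is exactly $\calT_t$, so $h|_{AB}\in\calT_t$, which is the second assertion. For the first, Corollary~\ref{corol:ort} applied inductively to $\calT_t=2\calS_t+\calT_{t-1}$ (base $\calT_0=\langle0\rangle$) gives $\calT_t\subseteq\dot{\calT}_t$, hence $(h|_{AB})[AB]\in\dot{\calT}_t\subseteq\dot{\calU}_t$; combined with $h|_D\in\sum_r\langle g_r^1,\ldots,g_r^r\rangle\subseteq\dot{\calU}_t$ from the previous paragraph, this yields $h\in\dot{\calU}_t$. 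The only point requiring care is the matching observation above -- that the $g_r^i$'s simultaneously serve as the matching edges on $D_r$ and as the pair-indicator basis -- once which is in place the rest is standard $\perp$-manipulation plus the self-orthogonality of $\calT_t$ already supplied by Corollary~\ref{corol:ort}.
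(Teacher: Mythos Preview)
Your proof is correct and follows essentially the same route as the paper's: restrict $h\in\calU_t$ to the $D$-qubits, use orthogonality against the $g_r^i$'s to conclude $h_D$ lies in their span (hence has even weight), deduce $h_{AB}\in\calE$, and then use $\dot{\calT}_t\subseteq\dot{\calU}_t$ to get $h_{AB}\in\dot{\calT}_t^\perp\cap\calE=\calT_t$; finally combine $h_{AB}\in\calT_t\subseteq\dot{\calT}_t$ and $h_D\in\langle g_r^i\rangle$ to place $h$ in $\dot{\calU}_t$. Your explicit articulation of the perfect-matching structure on $D_r$ (explaining why $\langle g_r^1,\ldots,g_r^r\rangle^\perp\cap\FF_2^{|D_r|}=\langle g_r^1,\ldots,g_r^r\rangle$) and your appeal to Corollary~\ref{corol:ort} for $\calT_t\subseteq\dot{\calT}_t$ make explicit two steps the paper leaves to the reader, but the argument is otherwise the same.
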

\begin{lemma}
\label{lemma:gauge1}
$\calD_t\subseteq \dot{\calD_t}$.
Furthermore, if $h\in \calD_t$ then the restriction of $h$ onto the union of $A$ and $B$ qubits
is contained in $\calC_t$.
\end{lemma}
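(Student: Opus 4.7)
The plan is to take an arbitrary $h\in\calD_t$, decompose it as $h=h_{AB}+h_D$ according to the partition of qubits in Eq.~(\ref{ABD}), and separately exploit the orthogonality of $h$ to $\dot{\calC}_t$ and to the weight-two generators $g_r^i$; the $h_r^i$ constraints will turn out to be unnecessary for this particular lemma.

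Since $\dot{\calC}_t=\calC_t$ (by Eq.~(\ref{Cproperty3})) is supported entirely on the $AB$-region, the constraint $\trn{h}f=0$ for every $f\in\dot{\calC}_t$ reduces to $h_{AB}\in(\dot{\calC}_t)^\perp$.  Combining the identities $(\calS^\perp)^\perp=\calS$ and $\calE^\perp=\langle\overline{1}\rangle$ from Section~\ref{sec:notations} with the fact that $n_t$ is odd yields $(\dot{\calC}_t)^\perp=\calC_t+\langle\overline{1}_{AB}\rangle$, so $h_{AB}=c+\alpha\,\overline{1}_{AB}$ for some $c\in\calC_t$ and $\alpha\in\FF_2$.

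The weight-two generators $g_r^i$ live entirely on $D_r$ and partition its $2r$ sites into the $r$ disjoint pairs $\{w^{2i},w^{2i+1}\}_{i=1}^{r-1}$ together with $\{w^{1},w^{2r}\}$.  The constraints $\trn{h_D}g_r^i=0$ force $h_D$ to be constant on each such pair, and the $r$ pair-indicator vectors are precisely $\{g_r^i\}_{i=1}^r$, which form a basis of the space of pair-constant vectors on $D_r$.  Thus $h_D\in\langle g_r^i:r,i\rangle$ and, in particular, $|h_D|$ is even as a sum of weight-two vectors.  Combined with $h\in\calE$, this forces $|h_{AB}|$ to be even; but $|c+\overline{1}_{AB}|=n_t-|c|$ is odd, since $n_t$ is odd and $|c|$ is even, so $\alpha=0$ is forced.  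This proves the ``furthermore'' assertion that $h_{AB}=c\in\calC_t$, after which $h=c+h_D\in\dot{\calC}_t+\langle g_r^i\rangle\subseteq\dot{\calD}_t$ establishes $\calD_t\subseteq\dot{\calD}_t$.

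I do not foresee serious obstacles: once the pair-decomposition induced by the $g_r^i$'s is identified, everything reduces to an elementary parity count, and no global combinatorial input beyond $\dot{\calC}_t=\calC_t$ is required.  An essentially identical argument, with $(\calC_t,\dot{\calC}_t)$ replaced throughout by $(\calT_t,\dot{\calT}_t)$ and the identity $\dot{\calT}_t^\perp=\calT_t+\langle\overline{1}\rangle$ used in place of $\dot{\calC}_t=\calC_t$, will handle the companion Lemma~\ref{lemma:gauge}.
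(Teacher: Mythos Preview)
Your argument is correct and follows essentially the same route as the paper's proof: decompose $h=h_{AB}+h_D$, use orthogonality to the $g_r^i$ to force $h_D\in\langle g_r^i\rangle\subseteq\calE$, then use $h\in\calE$ together with $h_{AB}\perp\dot{\calC}_t$ to place $h_{AB}$ in $\calC_t$. The only cosmetic difference is that the paper writes this last step in one line as $h_{AB}\in\dot{\calC}_t^\perp\cap\calE=\ddot{\calC}_t=\calC_t$, whereas you first expand $(\dot{\calC}_t)^\perp=\calC_t+\langle\overline{1}\rangle$ and then use the parity of $n_t$ to kill the $\overline{1}$ summand; these are the same computation.
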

Since the proof of the two lemmas is identical, we only prove Lemma~\ref{lemma:gauge}.
\begin{proof}
Suppose $h\in \calU_t$. By definition of the dot operation,
$h$ has even weight and orthogonal to any vector of $\dot{\calU}_t$.
Let $h_D$ be a restriction of $h$ 
onto the union of all $D$-qubits.
Since $h_D$ is orthogonal to any generator $g_r^i$,
we infer that $h_D$ is a linear combination of $g_r^i$. 
This implies $h_D\in \dot{\calU}_t$. 
Let $h_{AB}$ be a restriction of $h$ onto the union of all $A$- and $B$-qubits.
Since $h\in \calE$ and $h_D\in \calE$,   we infer that $h_{AB}\in \calE$.
The inclusion  
$\dot{\calT}_t\subseteq \dot{\calU}_t$ implies 
$h_{AB}\in \dot{\calT}_t^\perp \cap \calE=\calT_t$.
This proves the second statement. Finally, $\calT_t\subseteq \dot{\calT}_t \subseteq \dot{\calU}_t$
implies $h_{AB}\in \dot{\calU}_t$, that is, $h\in \dot{\calU}_t$.
\end{proof}
Combining the lemmas and Eq.~(\ref{DU}) yields 
\begin{equation}
\label{DU1}
\calU_t\subseteq \calD_t \subseteq \dot{\calD}_t \subseteq \dot{\calU}_t.
\end{equation}
One can view Eq.~(\ref{DU1}) as an extended version of Eq.~(\ref{Cproperty3}).
We can now define extended versions of the doubled color codes, see Table~\ref{table:CTx}.

\begin{table}[!ht]
\centerline{
\begin{tabular}{r|c|c|c|}
 & Transversal gates  & Stabilizer group & Gauge group \\
\hline
$\vphantom{\hat{\hat{A}}}$
$C$-code & Clifford group &  $\css{\calD_t}{\calD_t}$ & $\css{\dot{\calD}_t}{\dot{\calD}_t} $   \\
\hline
$\vphantom{\hat{\hat{A}}}$
$T$-code & $T$ gate &  $\css{\calU_t}{\dot{\calU}_t}$ & $\css{\calU_t}{\dot{\calU}_t}$  \\
\hline
$\vphantom{\hat{\hat{A}}}$
Base code &   & $\css{\calU_t}{\calD_t}$  & $\css{\dot{\calD}_t}{\dot{\calU}_t}$ \\
\hline
\end{tabular}}
\caption{Extended doubled color codes.}
\label{table:CTx}
\end{table}
As before, we define the base code such that its stabilizer group
is the intersection of stabilizer groups of all codes in the family. 
Lemma~\ref{lemma:gauge1} implies that $\calD_t$ is self-orthogonal,
that is, the group $\css{\calD_t}{\calD_t}$ is abelian and the $C$-code
is well-defined. Likewise, Eq.~(\ref{DU1}) implies that 
$\calU_t$ and $\calD_t$ are mutually orthogonal, so that 
the group $\css{\calU_t}{\calD_t}$ is abelian and the base code is well-defined.
Since we already know that $\calC_t$ is doubly even and $\calT_t$ is triply even,
Lemmas~\ref{lemma:gauge},\ref{lemma:gauge1} have the following corollary.
\begin{corol}
\label{corol:DU}
The subspace $\calU_t$ is triply even with respect to the same subsets as $\calT_t$.
The subspace $\calD_t$ is doubly even with respect to the same subsets as $\calC_t$.
\end{corol}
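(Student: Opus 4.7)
The plan is to reduce both claims directly to Lemmas~\ref{lemma:gauge} and~\ref{lemma:gauge1}. Those lemmas already tell us the hard structural fact: any $h\in\calU_t$ restricted to the union of the $A$- and $B$-blocks satisfies $h_{AB}\in\calT_t$, and similarly any $h\in\calD_t$ restricted there satisfies $h_{AB}\in\calC_t$. So the only thing left to verify is that the weight-counting identities defining triple- and double-evenness are unaffected by the newly added $D$-qubits, which will follow once we pin down where the witnessing subsets live.

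The first step is to trace the construction of the reference subsets and check that they contain no $D$-qubit indices. For $\calT_t$, the proof of Lemma~\ref{lemma:Npm} builds $N^\pm_t$ recursively by $N^\pm_t = A^\pm_t B^\pm_t N^\mp_{t-1}$ with $A^\pm_t = B^\pm_t = \Delta^{0,2}_t \subseteq \Lambda_t$, so by induction $N^\pm_t \subseteq A_tB_tA_{t-1}B_{t-1}\cdots A_1B_1A_0$, missing every $D$-block. For $\calC_t$, Corollary~\ref{corol:Ct} places its witnesses inside $\Delta^{0,2}_t \subseteq A_t$, which again avoids all $D$-qubits. We therefore interpret ``the same subsets'' as these sets reinjected into $[N_t]$ via the obvious inclusion $[n_t]\hookrightarrow[N_t]$.

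With that in hand, the counting is immediate. For any $h\in\calU_t$, the locality of $N^\pm_t$ gives
\[
|h\cap N^+_t| - |h\cap N^-_t| = |h_{AB}\cap N^+_t| - |h_{AB}\cap N^-_t|,
\]
and Lemma~\ref{lemma:gauge} combined with the triple-evenness of $\calT_t$ forces this to vanish modulo $8$. The same argument, substituting Lemma~\ref{lemma:gauge1} and double-evenness of $\calC_t$, gives that $|h\cap M^+_t|-|h\cap M^-_t|\equiv 0\pmod 4$ for every $h\in\calD_t$.

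Honestly, there is no serious obstacle here; the corollary is a bookkeeping consequence of the two gauge lemmas. The only point requiring mild care is conceptual rather than technical: one must keep track of the fact that passing from $\dot{\calT}_t,\dot{\calC}_t$ to the enlarged $\dot{\calU}_t,\dot{\calD}_t$ (via adjunction of the generators $g_r^i,h_r^i$ supported partly on $D$) could in principle introduce new vectors in $\calU_t,\calD_t$ whose $D$-restriction is nonzero. Lemma~\ref{lemma:gauge} controls precisely this by forcing the $D$-part of any such $h$ to be a combination of the $g_r^i$'s and then using the inclusion $\calT_t\subseteq\dot{\calU}_t$ to recover the $AB$-part; once that is invoked, the $D$-qubits never appear in the even-weight tally and the proof is done.
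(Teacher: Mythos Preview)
Your proof is correct and follows exactly the paper's approach: the paper simply states that the corollary follows from Lemmas~\ref{lemma:gauge} and~\ref{lemma:gauge1} together with the already-established doubly/triply-even properties of $\calC_t$ and $\calT_t$, and you have spelled out the one implicit step---that the witnessing subsets $N^\pm_t$ and $\Delta^{0,2}_t$ avoid all $D$-qubits---that makes the deduction go through.
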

This shows that the extended $C$ and $T$ codes have transversal
Clifford gates and the $T$-gate respectively. 
Let us  describe gauge generators of the extended $C$-code.
From  Eq.~(\ref{bad_guy2}) we infer that the non-local generators $\omega_{r,r-1}$
can be removed from the generating set of $\dot{\calD}_t$. Thus the extended $C$-code
has only face-type gauge generators and the additional generators 
$g_r^1,\ldots,g_r^r,h_r^1,\ldots,h_r^r$. The latter are spatially local except for $g_r^r$.
A direct inspection of Eq.~(\ref{Dt}) shows that 
generators of $\dot{\calD}_t$ can be partitioned into mutually disjoint subsets
supported on regions 
\[
M_t= A_t, \quad M_{t-1}=B_tD_tA_{t-1}, \quad M_{t-2}=B_{t-1} D_{t-1} A_{t-2},
\quad \ldots \quad
, \quad M_0=B_1 D_1A_0.
\]
Thus we can analyze properties of the extended $C$-code on each region $M_r$ separately.
Generators of $\dot{\calD}_t$  supported on $M_t$ have
a form $f[A_t]$ with $f\in \calS_t$. 
These generators describe the regular color code $\calS_t$.
Thus the extended $C$-code can be converted to the regular color
code $\css{\calS_t}{\calS_t}$ by discarding all the regions except for $M_t$.
As before, the extended $T$-code is obtained from the extended $C$-code
by adding edge-type $Z$-stabilizers.

Next let us prove that the extended codes
have distance $2t+1$. Combining  Eq.~(\ref{dAdB}) and Eq.~(\ref{DU1}) one
can see that the distance of any code defined in Table~\ref{table:CTx}
is lower bounded by $d(\calU_t)$. Thus it suffices to prove that 
$d(\calU_t)=2t+1$.
We shall need an explicit expression for generators of $\calU_t$. Note that
\begin{equation}
\label{gall}
g_r\equiv \sum_{i=1}^r g_r^i \in \calU_t.
\end{equation}
Indeed, $g_r$ does not overlap with generators of $\dot{\calT}_t $
and  has even overlap with all additional generators $g_r^i$ and $h_r^i$.
Next consider a generator $g_r^i$ with $1\le i<r$. 
Let $b^i\in \calS_r$ be the face of $\Lambda_r$ located directly below
$g_r^i$, see Fig.~\ref{fig:generators}. 
We claim that 
\begin{equation}
\label{faces1}
\beta_r^i\equiv g_r^i+b^i[A_r] + b^i[B_r] \in \calU_t 
\end{equation}
for all $r=1,\ldots,t$ and all $i=1,\ldots,r-1$.
Indeed, $\beta_r^i$ has even overlap with all generators of $\dot{\calT}_t$
since $b^i[A_r] + b^i[B_r]\in \calT_t$. Furthermore, $\beta_r^i$ has
even overlap with all additional generators $g_{r'}^{i'}$. 
 It remains to check that $\beta_r^i$ has even overlap
with the additional generators  $h_{r'}^{i'}$. The only non-trivial case is $h_r^j$ with $j=i$
or $j=i+1$, see Fig.~\ref{fig:generators}.
In this case both $g_r^i$ and $b^i[A_r]+b^i[B_r]$
have odd overlap with $h_r^j$, so that
$\beta_r^i$ has even overlap with $h_r^j$. This proves that
$\beta_r^i\in \calU_t$. 

Likewise, let $c^i\in \calS_{t-1}$ be the face of $\Lambda_{r-1}$ located directly above $g_r^i$,
see Fig.~\ref{fig:generators}. The same arguments as above show that 
\begin{equation}
\label{faces2}
\gamma_r^i\equiv g_r^i+c^i[A_{r-1}] + c^i[B_{r-1}] \in \calU_t
\end{equation}
for all $r=2,\ldots,t$ and for all $i=1,\ldots,r-1$.
Note that the sublattice $\Lambda_t$ has only special faces $b^i$
whereas $\Lambda_1$ has only special faces $c^i$. 
All other sublattices $\Lambda_r$ have both types of special faces.
Now we are ready to describe generators of $\calU_t$.
\begin{lemma}
\label{lemma:nasty}
Suppose 
 $2\le r\le t-1$.
Let $\calS^*_r\subseteq \calS_r$ be the subspace
spanned by all faces of the color code lattice $\Lambda_r$ except for
the special faces $b^i$ and $c^i$.
Let $\calS^*_t\subseteq \calS_t$ be the subspace spanned by 
all faces of $\Lambda_t$ except for  $b^i$. 
Let $\calS^*_1\subseteq S_1$ be the 
subspace spanned by 
all faces of $\Lambda_1$ except for  $c^1$.
Then
\begin{align}
\label{nasty}
\calU_t &=\sum_{r=1}^t \langle f[A_r]+f[B_r]\, : \, f\in \calS_r^*\rangle +
 \langle B_r A_{r-1} \rangle \nonumber \\
& + \sum_{r=1}^t \langle g_r\rangle + \langle \beta_r^i,\gamma_r^i \, : \, i=1,\ldots,r-1\rangle.
\end{align}
\end{lemma}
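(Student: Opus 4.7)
The plan is to establish the two inclusions separately. For $\supseteq$, I need only note that $f[A_r]+f[B_r]$ with $f\in\calS_r^*$ and the boundary strips $B_rA_{r-1}$ are already generators of $\calT_t\subseteq\calU_t$, while $g_r$, $\beta_r^i$, and $\gamma_r^i$ were shown to lie in $\calU_t$ in Eqs.~(\ref{gall}), (\ref{faces1}), and (\ref{faces2}). The nontrivial direction is $\subseteq$, and I would handle it by fixing $h\in\calU_t$ and constructing an explicit expansion of $h$ in the RHS generators.

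The first step is to analyze $h$ on the $D$-blocks. For each $r$, the generators $g_r^1,\dots,g_r^r$ partition the $2r$ sites of $D_r$ into $r$ disjoint pairs, and each $g_r^j\in\dot{\calU}_t$, so orthogonality forces the restriction of $h$ to $D_r$ to be constant on each pair. Hence $h$ restricted to $D_r$ has the form $\sum_{j=1}^r \lambda_{r,j}\,g_r^j$ for scalars $\lambda_{r,j}\in\FF_2$. By Lemma~\ref{lemma:gauge}, the restriction $h|_{AB}$ lies in $\calT_t$, so I can split
\[
h|_{AB}=h^*+\sum_{r,\,i<r}\alpha_{r,i}\bigl(b^i[A_r]+b^i[B_r]\bigr)+\sum_{r\ge 2,\,i<r}\alpha'_{r,i}\bigl(c^i[A_{r-1}]+c^i[B_{r-1}]\bigr),
\]
where $h^*$ uses only non-special double faces together with boundary strips. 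The candidate reduction is
\[
h':=h-\sum_{r,\,i<r}\alpha_{r,i}\,\beta_r^i-\sum_{r,\,i<r}\alpha'_{r,i}\,\gamma_r^i-\sum_{r=1}^t \lambda_{r,r}\,g_r,
\]
which lies in $\calU_t$; inspecting the $AB$-supports of $\beta_r^i$, $\gamma_r^i$, and $g_r$ shows $h'|_{AB}=h^*$.

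The crux of the argument is the identity
\[
\lambda_{r,i}+\lambda_{r,r}=\alpha_{r,i}+\alpha'_{r,i}\qquad\text{for each } r\text{ and } i=1,\dots,r-1,\quad(\star)
\]
because once $(\star)$ holds the $D_r$-part of the subtracted combination matches $h$ on $D_r$, yielding $h'|_D=0$ and hence $h'=h^*$, which sits in the span of $\{f[A_r]+f[B_r]:f\in\calS_r^*\}$ together with the boundary strips. I would derive $(\star)$ from the relations $h\cdot h_r^j=0$. Summing these relations for $j=1,\dots,i$ and telescoping in $\FF_2$ (with the cyclic convention $\lambda_{r,0}:=\lambda_{r,r}$) reduces the task to checking $\sum_{k=1}^{2i}\bigl(h_B(u^k)+h_A(v^k)\bigr)=\alpha_{r,i}+\alpha'_{r,i}$. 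The strip $B_rA_{r-1}$ contributes $2i\,\nu_r=0\pmod{2}$, and any non-special face that contains both endpoints of an interior boundary edge $(u^{2k-1},u^{2k})$ cancels out of the partial sum; the only surviving contributions come from the special faces $b^i$ and $c^i$.

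The main obstacle is the geometric verification in the last step: showing that, along the boundary of $\Lambda_r$, the only faces whose intersection with the prefix $\{u^1,\dots,u^{2i}\}$ has odd parity are the special face $b^i$ (and analogously $c^i$ on the $A_{r-1}$-side), as opposed to $b$-faces with a different index or boundary-adjacent non-special faces. This amounts to a local inspection of the color code lattice near its boundary, using the explicit shapes (hexagonal or square) of the faces and the ordering of the boundary sites $u^1,\dots,u^{2r+1}$ fixed in Section~\ref{sec:gadget}, together with analogous checks of the boundary cases $r=1$ (only $c$-faces are excluded from $\calS_1^*$) and $r=t$ (only $b$-faces are excluded from $\calS_t^*$). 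Once $(\star)$ is established, $h'=h^*$ lies in the span of non-special generators, and the decomposition of $h$ in terms of RHS generators is complete.
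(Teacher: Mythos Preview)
Your approach is correct and follows essentially the same strategy as the paper's proof, but the order of operations differs and your $\supseteq$ justification contains a misstatement.

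For the inclusion $\supseteq$ you write that the non-special double faces and the strips $B_rA_{r-1}$ are ``generators of $\calT_t\subseteq\calU_t$''. The inclusion $\calT_t\subseteq\calU_t$ is \emph{false}: for instance $b^i[A_r]+b^i[B_r]\in\calT_t$ has odd overlap with $h_r^i$ and $h_r^{i+1}$, hence is not in $\calU_t=(\dot{\calU}_t)^\perp\cap\calE$. What you actually need (and what the paper does) is to verify directly that each of the \emph{specific} generators you list---non-special double faces $f[A_r]+f[B_r]$ with $f\in\calS_r^*$, and strips $B_rA_{r-1}$---has even overlap with every additional gauge generator $g_{r'}^j$ and $h_{r'}^j$. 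This is a direct inspection using the same boundary geometry you invoke later; once fixed, the $\supseteq$ direction is fine.

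For $\subseteq$, the paper takes a slightly cleaner route: it first subtracts a suitable combination of $\beta_r^i$'s (and $g_r$'s) to annihilate the $D$-part entirely, leaving a vector $k\in\calT_t$ supported only on $AB$. Writing $k=k^*+\sum y_r^i(b^i[A_r]+b^i[B_r])+\sum z_r^i(c^i[A_{r-1}]+c^i[B_{r-1}])$ and using that $k^*\in\calU_t$, the orthogonality of $k-k^*$ with each $h_r^j$ forces $y_r^i=z_r^i$ directly (since $h_r^j$ meets $b^i$ and $c^i$ with odd parity exactly when $i\in\{j-1,j\}$). The identity $(b^i[\cdot]+b^i[\cdot])+(c^i[\cdot]+c^i[\cdot])=\beta_r^i+\gamma_r^i$ then finishes. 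Your constraint $(\star)$ is the same relation written before the $D$-reduction; your telescoped sum over $j\le i$ recovers precisely the paper's parity count. Both arguments rest on the identical geometric fact about how the special faces $b^i$, $c^i$ and the non-special hexagons meet the boundary segments $\{u^{2j-1},u^{2j}\}$ and $\{v^{2j-1},v^{2j}\}$, which the paper dispatches by reference to Fig.~\ref{fig:generators}. Your ``main obstacle'' is thus no deeper than the paper's direct inspection; killing the $D$-part first just makes the bookkeeping lighter.
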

\begin{proof}
We have already shown that the last two terms in Eq.~(\ref{nasty}) are 
contained in $\calU_t$. A direct inspection shows that 
vectors $f[A_r]+f[B_r]$ with $f\in \calS_r^*$ and 
$B_r A_{r-1}$  have even overlap with all additional generators
$g_j^i$, $h_j^i$. Furthermore, $f[A_r]+f[B_r]$ and  $B_r A_{r-1}$  are contained in $\calT_t$ and 
thus have even overlap with any vector in $\dot{\calT}_t$. 
This proves the inclusion $\supseteq$ in Eq.~(\ref{nasty}).

Conversely, consider any $k\in \calU_t$. 
We have to prove that $k$ is contained in the  righthand
side of Eq.~(\ref{nasty}).
The same arguments as in the proof of Lemma~\ref{lemma:gauge} show that
\begin{equation}
\label{kk1}
k=k_{AB}+k_D, \quad k_{AB}\in \calT_t, \quad k_D=\sum_{r=1}^t \sum_{i=1}^r x_r^i g_r^i,
\end{equation}
where $k_{AB}$ and $k_D$  have support only on $AB$-qubits and $D$-qubits respectively.
Here $x_r^i\in \FF_2$ are some coefficients. 
Let us modify $k$ according to 
\[
k\gets k+\sum_{r=1}^t \sum_{i=1}^r x_r^i \beta_r^i.
\]
We still have the inclusion $k\in \calU_k$ since $\beta_r^i\in \calU_k$, see above. 
Furthermore, the term $x_r^i \beta_r^i$ cancels the term $x_r^i g_r^i$ in $k_D$
and modifies the term $k_{AB}$ according to $k_{AB}\gets k_{AB}+b^i[A_r]+b^i[B_r]$.
Now $k$ has support only on  $AB$-qubits
and $k\in \calT_t$.
Let us express $k$ as a sum of generators of $\calT_t$
defined in  Eq.~(\ref{Tgens}). This yields
\[
k=k^* + \sum_{r=2}^t \sum_{i=1}^{r-1} y_r^i (b^i[A_r] + b^i[B_r]) + z_r^i (c^i[A_{r-1}] + c^i[B_{r-1}])
\]
for some 
\[
k^*\in \sum_{r=1}^t \langle f[A_r]+f[B_r]\, : \, f\in \calS_r^*\rangle + 
\langle  B_r A_{r-1} \rangle
\]
and some coefficients $y_r^i,z_r^i\in \FF_2$.
Note that $k^*$ is contained in the righthand side of Eq.~(\ref{nasty}).
Furthermore, $k^*\in \calU_t$ which implies $k+k^*\in \calU_t$. 
In particular, 
$k+k^*$ has even overlap 
with all generators $h_r^j$.
On the other hand, $h_r^j$ has odd overlap with $b^i[A_r] + b^i[B_r]$
and with $c^i[A_{r-1}] + c^i[B_{r-1}]$ for $i=j,j-1$, see Fig.~\ref{fig:generators}.
Thus $k+k^*$ may have even overlap with $h_r^j$ only if $y_r^i=z_r^i$ for all $i$ and $r$.
Using the identity
\[
(b^i[A_r] + b^i[B_r]) +  (c^i[A_{r-1}] + c^i[B_{r-1}])=\beta_r^i + \gamma_r^i
\]
we conclude that $k+k^*$ is contained in the last term in Eq.~(\ref{nasty}).
Since $k^*$ is contained in the sum of the first two terms in Eq.~(\ref{nasty}),
we have proved the inclusion $\subseteq$ in Eq.~(\ref{nasty}).
 \end{proof}

The following lemma is the most difficult part of the proof. 
\begin{lemma}
\label{lemma:tricky}
$d(\calU_t)=d(\calT_t)=2t+1$.
\end{lemma}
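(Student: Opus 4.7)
The plan is to prove $d(\calU_t)\le 2t+1$ and $d(\calU_t)\ge 2t+1$ separately, leveraging Lemma~\ref{lemma:nasty} and the identity $d(\calT_t)=2t+1$ from Eq.~(\ref{d_t}).

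For the upper bound, take a minimum weight logical operator $\omega_t\in\calS_t^\perp\cap\calO$ of the regular color code on $\Lambda_t$, with $|\omega_t|=2t+1$, and extend it by zero to all non-$A_t$ qubits. I claim $\omega_t[A_t]\in\calU_t^\perp$: among the generators of $\calU_t$ listed in Lemma~\ref{lemma:nasty}, only $f[A_t]+f[B_t]$ with $f\in\calS_t^*$ and $\beta_t^i=g_t^i+b^i[A_t]+b^i[B_t]$ have support overlapping $A_t$, and in both cases orthogonality reduces to $\langle\omega_t,f\rangle=0$ for $f\in\calS_t$. Hence $d(\calU_t)\le|\omega_t[A_t]|=2t+1$.

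For the lower bound, I take any $f\in\calU_t^\perp\cap\calO$ and decompose $f=f_{AB}+f_D$ by restriction to the $AB$-qubits and ancillary $D$-qubits. Orthogonality of $f$ to $g_r=\overline{1}[D_r]\in\calU_t$ (cf.\ Eq.~(\ref{gall})) forces $|f_D\cap D_r|$ to be even, so $|f_{AB}|$ is odd. The strategy is to produce $\tilde f\in\calU_t^\perp\cap\calO$ with $\tilde f_D=0$ and $|\tilde f|\le|f|$ by adding elements of $\dot{\calU}_t$. Once $\tilde f_D=0$, orthogonality of $\tilde f$ to $\beta_r^i$ and $\gamma_r^i$ collapses to $\tilde f_{AB}\perp b^i[A_r]+b^i[B_r]$ and $\tilde f_{AB}\perp c^i[A_{r-1}]+c^i[B_{r-1}]$, which together with the remaining generators of Lemma~\ref{lemma:nasty} exhaust the generating set of $\calT_t$ from Eq.~(\ref{Tgens}). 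Hence $\tilde f_{AB}\in\calT_t^\perp\cap\calO$ and $|\tilde f|=|\tilde f_{AB}|\ge d(\calT_t)=2t+1$ by Eq.~(\ref{d_t}), giving $|f|\ge 2t+1$.

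The main obstacle is the weight-non-increasing cleaning of $f_D$. On each $D_r$, the $D$-parts of $g_r^i$ and $h_r^i$ are the nearest-neighbor pairings $w^j+w^{j+1}$ for $j=1,\ldots,2r-1$ together with $w^1+w^{2r}$, and these span the full even subspace of $\FF_2^{|D_r|}$; so clearing $f_D\cap D_r$ is always \emph{possible}. The cost is that each $h_r^i$ used contributes four or six qubits to $f_{AB}$ at boundary sites of $\Lambda_r$ and $\Lambda_{r-1}$. The delicate point is that the pattern of nonzero $x_r^i=\langle f_D,g_r^i\rangle$ controls the violation of $\langle f_{AB},b^i[A_r]+b^i[B_r]\rangle=x_r^i$ and $\langle f_{AB},c^i[A_{r-1}]+c^i[B_{r-1}]\rangle=x_r^i$, forcing $f_{AB}$ to already carry extra weight near the corresponding special faces; the cleaning trades this weight against the $D$-qubit weight in a manner that never grows $|f|$. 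Verifying this combinatorial trade-off on each $D_r$, organized along the telescoping identity $\omega_{r,r-1}=\sum_i(h_r^i+g_r^i)$ of Eq.~(\ref{bad_guy2}) so that correcting runs of pairings is compensated by a single long-range adjustment in $\dot{\calT}_t\subseteq\dot{\calU}_t$, is the bulk of the argument.
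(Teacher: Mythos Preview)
Your upper bound via $\omega_t[A_t]$ is correct. Your lower-bound strategy differs from the paper's: the paper argues by induction on $t$, using gauge transformations in $\dot{\calU}_t$ to bring an arbitrary $k\in\calU_t^\perp\cap\calO$ into the special form $e^i[A_t]+e^i[B_t]+\gamma[C_t]$ with $\gamma\in\calU_{t-1}^\perp\cap\calO$ and $|k|$ not increased, then invokes the hypothesis $|\gamma|\ge 2t-1$. You instead aim to clear \emph{all} $D$-qubits at once and reduce directly to $\calT_t^\perp$.

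The gap is that you have not proved the cleaning of $f_D$ is weight-non-increasing; you only assert it. Your constraint $\langle f_{AB},\,b^i[A_r]+b^i[B_r]\rangle=x_r^i$ forces at least one qubit of $f_{AB}$ somewhere in the double face $b^i$, but the $AB$-support of $h_r^i$ consists of four (or six) specific \emph{boundary} qubits, not that face, so there is no a priori cancellation. In fact the paper's own argument for a single inductive level is \emph{not} step-by-step non-increasing: it first concentrates the $A_t$-part to a single site via edge-type gauge generators, then replaces the residual $\theta[B_t]$ by a boundary-supported minimum-weight $\theta^*[B_t]$ at a possible cost of $+2$ in weight (``borrowing''), and only afterwards recovers at least $-2$ through a carefully chosen combination of several $h_t^p$ and $g_t^p$ that pushes weight from $B_t\cup D_t$ onto $A_{t-1}$. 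This requires a case split on the parity of $\theta^*$ and a path-by-path analysis along the boundary. Your sketch skips all of this, and the closing phrase about ``a single long-range adjustment in $\dot{\calT}_t$'' has no concrete referent. If you try to execute your plan level by level, starting from $D_t$, you will need exactly the paper's local analysis; the induction is what lets one avoid tracking how the weight pushed onto $A_{r-1}$ interacts with the subsequent cleaning of $D_{r-1}$.
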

\begin{proof}
We shall use induction in $t$. The base of induction is $t=1$. In this case
there are only two $D$-qubits, $D_1=\{w^1,w^2\}$,
and one  additional gauge generator $g_1\equiv g_1^1=w^1+w^2$.
Furthermore, $g_1$ is a stabilizer,  $g_1\in \calU_1$. 
Consider a minimum weight vector $k\in \calU_1^\perp \cap \calO$
such that $d(\calU_1)=|k|$.
Since $k$ has even overlap with $g_1$, one has
either $g_1\subseteq k$ or $g_1\cap k=\emptyset$.
The first case can be ruled out since $k+g_1\in \calU_1^\perp \cap \calO$
would have  weight $|k|-2$. Thus $k$ has support only on $AB$-qubits.
Lemma~\ref{lemma:gauge} implies that $\calU_1$ and $\calT_1$
have the same restriction on $AB$-qubits, that is, $d(\calU_1)=d(\calT_1)=3$.

Consider now some $t\ge 2$.
Let us rewrite the partition in Eq.~(\ref{ABD}) as 
\[
[N_t]=A_tB_tD_tC_t, \quad \quad C_t\equiv A_{t-1} B_{t-1} D_{t-1} \ldots  A_1 B_1 D_1 A_0.
\]
Consider an arbitrary vector $k\in \calU_t^\perp \cap \calO$. Let us write
\[
k=\alpha[A_t]+\beta[B_t]  + \delta[D_t] +  \gamma[C_t],
\]
for some vectors
\[
\alpha,\beta \in \FF_2^{m_t}, \quad \delta\in \FF_2^{2t}, \quad \gamma\in \FF_2^{N_{t-1}}.
\]
First we claim that 
\begin{equation}
\label{parity_stuff}
\alpha\in \calO, \quad \beta+\gamma\in \calE, \quad  \mbox{and} \quad
\delta \in \calE.
\end{equation}
Indeed, since $k$ has even overlap with the stabilizer $g_t$
with $\supp{g_t}=D_t$, 
see Eq.~(\ref{gall}), we infer that $0=\trn{k}g_t=|\delta|{\pmod 2}$,
that is, $\delta\in \calE$.
Furthermore, Lemma~\ref{lemma:nasty} implies 
\[
h\equiv B_tC_t
=\sum_{r=1}^t B_rA_{r-1} + \sum_{r=1}^{t-1} g_r\in \calU_t.
\]
Since $\supp{h}=B_tC_t$ and $0=\trn{k}h=|\beta|+|\gamma|{\pmod 2}$, 
we infer that $\beta+\gamma\in \calE$.
Finally, $\alpha\in \calO$ follows from the above and the assumption that $k\in \calO$. 

We shall refer to a substitution $k\gets k+g$ with $g\in \dot{\calU}_t$ as a
gauge transformation. Note that gauge transformations preserve the set $\calU_t^\perp \cap \calO$.
Our strategy will be to choose a sequence of gauge transformations
that transform $k$ into a form 
\begin{equation}
\label{desired}
k = e^i[A_t]+e^i[B_t] + \gamma[C_t]
\end{equation}
without increasing the weight of $k$. 
Here $e^i$ is some basis vector of $\FF_2^{m_t}$. 
Let us first assume that $k$ has the desired form Eq.~(\ref{desired})
and show that this implies $|k|\ge 2t+1$.
Indeed, using  Eq.~(\ref{nasty}) one can check that 
$e^i[A_t]+e^i[B_t]$ is orthogonal to all generators of $\calU_t$
except for $B_tA_{t-1}$.
Therefore if $k\in \calU_t^\perp$ has a special form  Eq.~(\ref{desired}) 
then $\gamma\in \calU_{t-1}^\perp$. Furthermore, $k\in \calO$
implies $\gamma\in \calO$, that is, $\gamma\in \calU_{t-1}^\perp \cap \calO$.
By induction hypothesis, $|\gamma|\ge 2t-1$
and thus $|k|\ge 2t+1$. 

It remains to show that $k$ can be transformed into the desired form Eq.~(\ref{desired})
without increasing the weight.
First, choose any $i\in \mathrm{supp}(\alpha)$ and let
$\alpha'=\alpha+e^i$. Note that $\alpha'\in \calE$ due to Eq.~(\ref{parity_stuff}),
so that  $\alpha'[A_t]+\alpha'[B_t]\in \dot{\calU}_t$.
In addition,  $|\alpha'|=|\alpha|-1$, so that 
\[
|k+\alpha'[A_t]+\alpha'[B_t]| = 1 + |\beta+\alpha'| +|\gamma|+|\delta|
\le 1+ |\alpha'|+|\beta|+|\gamma|+|\delta| =|\alpha|+|\beta|+|\gamma|+|\delta| =|k|.
\]
Thus we can transform $k$ to $k+\alpha'[A_t]+\alpha'[B_t]$ obtaining 
\[
k=e^i[A_t] + \beta[B_t] + \delta[D_t] + \gamma[C_t] 
\]
for some new vector $\beta\in \FF_2^{m_t}$.
Define $\theta=\beta+e^i$ so that 
\begin{equation}
\label{k1}
k=e^i[A_t] + e^i[B_t] + \theta[B_t] +  \delta[D_t] + \gamma[C_t].
\end{equation}
Consider first the case when $\theta=\overline{0}$.
Since $k$ has even overlap with the stabilizer $\beta_t^j$
and so does $e^i[A_t] + e^i[B_t]$, we conclude that 
$\delta[D_t]$ has even overlap with the gauge generator $g_t^j$
for any $j$. 
This is possible only if $\delta=\overline{0}$
or $\delta=\overline{1}$. If $\delta=\overline{0}$
then $k$ already has  the desired form Eq.~(\ref{desired}).
If $\delta=\overline{1}_W$ then $k+g_t^t$
has the desired form and $|k+g_t^t|\le |k|$.

Consider now the case  $\theta \ne \overline{0}$. 
Since $k$ has even overlap with the stabilizers $\beta_t^j$
and so does $e^i[A_t] + e^i[B_t]$, we conclude that 
$\delta[D_t]$ has even (odd) overlap with a generator  $g_t^j$
iff $\theta$ has even (odd) overlap with the special face $b^j$,
see Fig.~\ref{fig:generators}.
Let $\epsilon(\theta)\equiv |\theta| {\pmod 2}$ be the total parity of 
$\theta$. Let $\sigma(\theta)$ be the set of faces
of $\Lambda_t$ that have odd overlap with $\theta$.
The above shows that 
\begin{equation}
\label{gamma_syndrome}
\sigma(\theta)\subseteq \{ b^1,b^2,\ldots,b^{t-1}\}.
\end{equation}
Note that 
\begin{equation}
\label{gamma_syndrome1}
|\sigma(\theta)|=0{\pmod 2}
\end{equation}
since $\delta \in \calE$, see Eq.~(\ref{parity_stuff}).
We claim that one can choose $\theta^*\subseteq \Lambda_t$
such that 
\begin{equation}
\label{theta*}
|\theta^*|\le |\theta|, \quad \sigma(\theta^*)=\sigma(\theta), \quad \epsilon(\theta^*)=\epsilon(\theta),
\quad \mbox{and} \quad \mathrm{supp}(\theta^*)\subseteq \{u^1,u^2,\ldots,u^{2t+1}\}.
\end{equation}
Recall that $u^1,u^2,\ldots,u^{2t+1}$ are the sites of $\Lambda_t$ located
on the boundary facing $\Lambda_{t-1}$, see Fig.~\ref{fig:gadget}.
Indeed, it is straightforward to choose some $\theta^*$ that satisfies all above
conditions except for the first one. However, the property that $\theta^*$ is supported
on the boundary $\omega\equiv \{u^1,u^2,\ldots,u^{2t+1}\}$ implies that its weight cannot be 
decreased by adding faces of the lattice $\Lambda_t$. 
Indeed, the boundary $\omega$  is a minimum weight logical operator of the color code.
This implies $|\omega+f|\ge |\omega|$ for any stabilizer $f\in \calS_t$.
Equivalently, $|f\setminus \omega|\ge |f\cap \omega|$. 
This implies 
\[
|f+\theta^*|=|f\setminus \omega| + |(f\cap \omega)+\theta^*| 
\ge |f\cap \omega| +  |(f\cap \omega)+\theta^*|  \ge |\theta^*|.
\]
Finally, we note that  $\sigma(\theta^*)$ and $\epsilon(\theta^*)$
fix $\theta^*$ modulo stabilizers $f\in \calS_t$. This proves that we can satisfy all
conditions in Eq.~(\ref{theta*}).

Note that $\theta[B_t]+\theta^*[B_t]\in \dot{\calU}_t$ since 
$\theta+\theta^*\in \calS_t$.
A gauge transformation $k\to k+\theta[B_t]+\theta^*[B_t]$
can potentially {\em increase} the weight of $k$
if $i\in \theta$ but $i\notin \theta^*$. However, the weight can
increase at most by two. We will
``borrow" two units of weight keeping in mind 
that at least one of the subsequent gauge transformations
has to {\em decrease} the weight of $k$ at least by two,
so that  we  maintain a zero weight balance. 
Transforming 
$k$ to $k+\theta[B_t]+\theta^*[B_t]$ we obtain
\begin{equation}
\label{k2}
k=e^i[A_t] + e^i[B_t] + \theta^*[B_t] +  \delta[D_t] + \gamma[C_t].
\end{equation}
We have to two consider two cases.\\
\noindent
{\em Case~1:} $\theta^*\in \calE$.
Then  $u^1\notin \theta^*$ and $u^{2t+1}\notin \theta^*$, so that 
$\theta^*$ consists of disjoint paths
connecting consecutive pairs of faces in $\sigma(\theta^*)$. 
Consider any such path 
\[
\pi=u^{2i+1}+u^{2i+2}+\ldots+u^{2m-1}+u^{2m}\subseteq \theta^*
\]
that creates a pair of syndromes at faces $b^i$ and $b^m$ for some $i<m$.
An example of such path with $i=1$ and $m=4$ is shown on Fig.~\ref{fig:cleaning1}.
As we argued above, $\delta[D_t]$ must have odd overlap with $g_r^i$
and $g_r^m$.  Applying, if necessary, gauge transformations
$k\to k+g_r^i$ and $k\to k+g_r^m$ we can assume that 
$w^{2i+1}\in \delta$ and $w^{2m}\in \delta$. Then a gauge transformation
\[
k\to k+\sum_{p=i+1}^m h_t^p+ \sum_{p=i+1}^{m-1} g_t^p
\]
cleans $k$ out of all qubits  $\pi[B_t]$, qubits $(w^{2i+1}+w^{2m})[D_t]$, 
and potentially adds weight at $|\pi|$ qubits
\[
(v^{2i+1}+v^{2i+2}+\ldots+v^{2m-1}+v^{2m})[A_{t-1}] 
\]
Overall, the weight decreases at least by two, see Fig.~\ref{fig:cleaning1}.

\begin{figure}[h]
\centerline{\includegraphics[height=4.5cm]{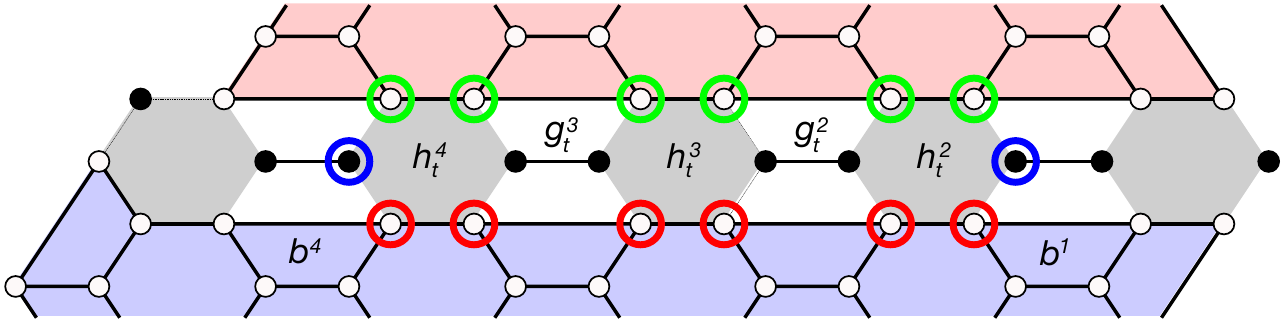}}
\caption{Example of a vector  $k$ defined  in Eq.~(\ref{k2})
such that $\theta^*\in \calE$ creates a pair of syndromes at faces
$b^1$ and $b^4$. Sites of $\theta^*$ are indicated by red circles.
In order to have zero syndrome for stabilizers $\beta_t^1$ and $\beta_t^4$,
the support of $k$ must also include sites $w^3$ and $w^8$ (possibly, after a 
gauge transformation $k\to k+g_t^1$ and $k\to k+g_t^4$). The sites 
$w^3$ and $w^8$  are indicated by blue circles. 
A gauge transformation $k\to k+h_t^2+h_t^3+h_t^4+g_t^2+g_t^3$
cleans out $k$ from all qubits indicated by blue and red circles,
potentially adding weight to qubits indicated by green circles. 
Overall, the weight of $k$ decreases at least by two.
Here $t=5$.
\label{fig:cleaning1}
}
\end{figure}

\noindent
{\em Case~2:} $\theta^*\in \calO$.
Then $u^1\in \theta^*$ and $u^{2t+1}\in \theta^*$,
so that $\theta^*$ consists of a path connecting the leftmost face
in $\sigma(\theta^*)$ to the site $u^{2t+1}$,
a path connecting the rightmost face in $\sigma(\theta^*)$ to the site $u^1$,
and, possibly, disjoint paths connecting consecutive pairs of faces in $\sigma(\theta^*)$,
see Fig.~\ref{fig:cleaning2}.
The latter can be cleaned out in the same fashion as in Case~(1), so below we focus on 
the former. Let $b^i$ be the rightmost face in $\sigma(\theta^*)$.
Then $\theta^*$ contains a path
\[
\pi_{right}=u^1+u^2+\ldots+u^{2i-1}+u^{2i}.
\]
Let $b^m$ be the leftmost face in $\gamma(\theta^*)$.
Then $\theta^*$ contains a path
\[
\pi_{left}=u^{2m+1}+u^{2m}+\ldots,+u^{2t}+u^{2t+1}.
\]
As we argued above, $\delta[D_t]$ must have odd overlap with $g_t^i$ and $g_t^m$.
Applying, if necessary, gauge transformations
$k\to k+g_t^i$ and $k\to k+g_t^m$ we can assume that 
$w^{2i}\in \delta$ and $w^{2m+1}\in \delta$. Then a gauge transformation
\[
k\to k+\sum_{p=1}^i h_t^p + \sum_{p=m+1}^{t-1} h_t^p + \sum_{p=1}^{i-1} g_t^p + \sum_{p=m+1}^t g_t^p
\]
cleans $k$ out of all qubits of $(\pi_{left}+\pi_{right})[B_t]$, 
qubits $(w^{2i}+w^{2m+1})[D_t]$, and potentially adds
weight at $|\pi_{right}|$ qubits $(v^1+\ldots+v^{2i})[A_{t-1}]$
and $|\pi_{left}|-2$ qubits $(v^{2m+1}+\ldots+v^{2t-1})[A_{t-1}]$.
Overall, the weight decreases at least by four, see Fig.~\ref{fig:cleaning2}.
In both cases, we transform $k$ to the desired form
and the weight decreases at least by two. 
\end{proof}

\begin{figure}[h]
\centerline{\includegraphics[height=6cm]{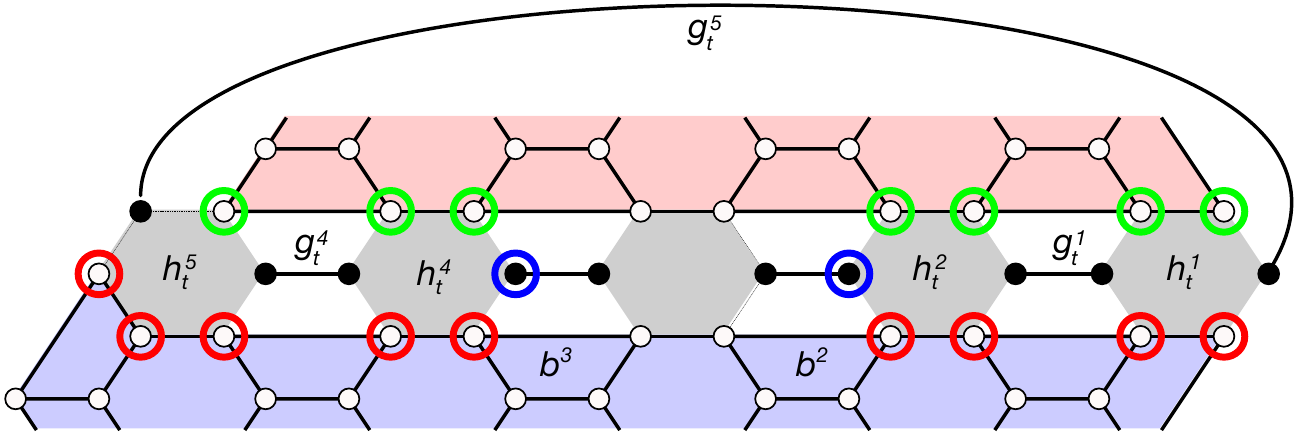}}
\caption{Example of a vector $k$ defined in Eq.~(\ref{k2})
such that $\theta^*\in \calO$ creates a pair of syndromes at faces
$b^2$ and $b^3$. Sites of $\theta^*$ are indicated by red circles.
In order to have zero syndrome for stabilizers $\beta_t^2$ and $\beta_t^3$,
the support of $k$ must also include sites $w^4$ and $w^7$ (possibly, after a 
gauge transformation $k\to k+g_t^2$ and $k\to k+g_t^3$). The sites 
$w^4$ and $w^7$  are indicated by blue circles. 
A gauge transformation $k\to k+h_t^1+h_t^2+h_t^4+h_t^5+g_t^1+g_t^4+ g_t^5$
cleans out $k$ from all qubits indicated by blue and red circles,
potentially adding weight to qubits indicated by green circles. 
Overall, the weight of $k$ decreases at least by four. Here $t=5$.
\label{fig:cleaning2}
}
\end{figure}

The final weight reduction step is to transform the long-range generators $g_r^r$  into a local form. 
We shall use a coding theory analogue of the
subdivision gadget used to simulate long-range spin interactions by short-range ones~\cite{Oliveira2008}.
Consider an arbitrary subspace $\calU \subseteq \calE^n$ such that $\calU\subseteq \dot{\calU}$
and $n$ is odd. It defines a CSS code with a gauge group $\css{\dot{\calU}}{\dot{\calU}}$ and a stabilizer group
$\css{\calU}{\calU}$. Note that $\calU\subseteq \dot{\calU}$ implies that 
$\calU$ is self-orthogonal, so that the stabilizer group is abelian and the code is well-defined. 
Suppose  $\dot{\calU}$ contains a weight-two vector
supported  on the first two qubits, $e^1+e^2\in \dot{\calU}$. 
We envision a scenario when qubits $1$ and $2$ occupy two remote
lattice locations (for example sites $w^1$ and $w^{2t}$ in the above construction),
such that $e^1+e^2$ is not spatially local.
Let us add two ancillary qubits labeled $a$ and $b$.  We 
envision that  all vectors $e^1+e^a$, $e^a+e^b$, and $e^b+e^2$ are
spatially local (or, at least, more local compared with $e^1+e^2$). 
Define a subspace $\calV\subseteq  \calE_2^{n+2}$ such that 
\begin{equation}
\label{gadget1}
\dot{\calV}=\langle e^1+e^a, e^a+e^b, e^b+e^2\rangle + \dot{\calU}.
\end{equation}
Here it is understood that vectors of $\dot{\calU}$ are extended to the ancillary qubits by zeroes. 
Using  analogues
of Lemmas~\ref{lemma:gauge},\ref{lemma:gauge1} one can easily show that
$\calV$ is self-orthogonal, so that $\calV$ 
 defines an abelian  stabilizer group
$\css{\calV}{\calV}$. Below we prove that the codes $\css{\calU}{\calU}$
and $\css{\calV}{\calV}$ have the 
same distance and the same transversality properties. 
\begin{lemma}[\bf Subdivision gadget]
\label{lemma:subdivision}
Consider the subspaces $\calU$ and $\calV$ as above.
Suppose $d(\calU)\ge 3$. Then $d(\calV)=d(\calU)$.
Furthermore, if $\calU$ is triply (doubly) even with respect to some subsets
$M^\pm \subseteq [n]$ then $\calV$ is triply (doubly) even with respect to the same
subsets. 
\end{lemma}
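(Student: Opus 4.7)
The plan is to first pin down the subspace $\calV$ explicitly in terms of $\calU$, then use that characterization to handle both the evenness claim and the distance claim. Since $n$ is odd so is $n+2$, and therefore $\ddot{\calV}=\calV$, so I can compute $\calV=\dot{\calV}^\perp\cap\calE^{n+2}$ directly. Using $(\dot{\calU})^\perp=\calU+\langle\overline{1}[n]\rangle$ together with the three linked constraints $h_1=h_a=h_b=h_2$ coming from $g_1=e^1+e^a$, $g_2=e^a+e^b$, $g_3=e^b+e^2$, and the evenness condition (which, since $n$ is odd, rules out the $\overline{1}[n]$ coset representative), I expect to obtain the clean description
\[
\calV=\{\,u+u_1(e^a+e^b)\,:\,u\in\calU,\;u_1=u_2\,\},
\]
where $u$ is silently extended by zeros to the ancillary qubits $a,b$.

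The triply- or doubly-even claim is then immediate. Every $v\in\calV$ satisfies $v|_{[n]}\in\calU$ and $v|_{\{a,b\}}\in\{(0,0),(1,1)\}$, and since $M^\pm\subseteq[n]$ are disjoint from $\{a,b\}$ one has $|v\cap M^\pm|=|u\cap M^\pm|$. Hence the congruence hypothesis on $\calU$ transfers verbatim to $\calV$ with respect to the same subsets $M^\pm$.

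For the distance, I use $\calV^\perp\cap\calO^{n+2}=\dot{\calV}+\overline{1}[n+2]$ and parametrize a generic odd-weight element as $f=d+xg_1+yg_2+zg_3+\overline{1}[n+2]$ with $d\in\dot{\calU}$ and $x,y,z\in\FF_2$. A short computation gives $f|_{\{a,b\}}=(1+x+y,\,1+y+z)$ and $f|_{[n]}=d+xe^1+ze^2+\overline{1}[n]$, so the analysis naturally splits on whether $x=z$. If $x=z$, the key hypothesis $e^1+e^2\in\dot{\calU}$ lets me absorb $x(e^1+e^2)$ into $d$, so $f|_{[n]}\in\dot{\calU}+\overline{1}[n]$ and $|f|_{[n]}|\ge d(\calU)$; choosing $y=1+x$ makes the ancillary contribution vanish, which simultaneously attains $|f|=d(\calU)$ and thus proves $d(\calV)\le d(\calU)$. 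If $x\ne z$, the ancillary contribution has weight exactly one and $f|_{[n]}=(d+\overline{1}[n])+e^i$ for some $i\in\{1,2\}$; the triangle inequality $|(d+\overline{1}[n])+e^i|\ge |d+\overline{1}[n]|-1\ge d(\calU)-1$ combined with the ancillary weight gives $|f|\ge d(\calU)$. Together these yield $d(\calV)=d(\calU)$.

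The main obstacle is the bookkeeping in the first step: verifying that the parity condition $h\in\calE^{n+2}$ really does force the $\overline{1}[n]$ coefficient in $(\dot{\calU})^\perp$ to vanish, and checking that the three linked equalities $h_1=h_a=h_b=h_2$ collapse to the single constraint $u_1=u_2$ on the underlying $u\in\calU$. Once that is in place, the distance argument is essentially mechanical, and its critical algebraic input is precisely the hypothesis $e^1+e^2\in\dot{\calU}$, without which the $x=z$ branch would not reduce to a coset of $\dot{\calU}+\overline{1}[n]$ and the bound $d(\calV)\ge d(\calU)$ would fail.
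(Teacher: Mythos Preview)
Your proposal is correct and follows essentially the same approach as the paper. Both arguments first characterize $\calV$ explicitly (the paper writes $\calV=\langle e^1+e^a+e^b+e^2+g\rangle+\calU_{12}$, which is the same as your description $\calV=\{u+u_1(e^a+e^b):u\in\calU\}$ once you note that $u_1=u_2$ is automatic from $e^1+e^2\in\dot{\calU}$ and $\calU\subseteq\dot{\calU}^\perp$), deduce the evenness claim from the fact that $\calV|_{[n]}=\calU$, and then bound the distance; the only cosmetic difference is that the paper phrases the distance step as ``clean $f$ by gauge transformations in $\langle g_1,g_2,g_3\rangle$ until it touches at most one of $\{1,a,b,2\}$,'' whereas you parametrize $f=d+xg_1+yg_2+zg_3+\overline{1}$ and split on $x=z$ versus $x\ne z$, which is the same case analysis in different clothing.
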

\begin{proof}
Let $\dot{\calU}_{12}$  and $\calU_{12}$ 
be the subspaces including all vectors $g\in \dot{\calU}$ and $g\in \calU$
respectively such that $g_1=g_2=0$, that is, $g$ does not include qubits
$1$ and $2$. By assumption, 
$e^1+e^2\in \dot{\calU}$ and thus 
$f_1=f_2$ for any  $f\in \calU$ since $f$ must have
even overlap with any element of $\dot{\calU}$.
There must be at least one vector $f\in \calU$
such that $f_1=f_2=1$ since otherwise  $d(\calU)=1$.
Thus $\calU$ can be represented as 
\begin{equation}
\label{UUU}
\calU=\langle e^1+e^2+g\rangle + \calU_{12}\quad \mbox{for some $g\in \dot{\calU}_{12}$}.
\end{equation}
We claim that 
\begin{equation}
\label{Nu}
\calV=\langle h\rangle + \calU_{12},
\quad \mbox{where} \quad h\equiv e^1+e^a+e^b+e^2+g.
\end{equation}
Let us first prove the inclusion $\supseteq$ in Eq.~(\ref{Nu}).
We have $h\in \dot{\calU}^\perp$ since $e^1+e^2+g\in \calU \subseteq \dot{\calU}^\perp$
and $e^a+e^b\in \dot{\calU}^\perp$ since no vector in $\dot{\calU}$ includes $a$ or $b$.
Taking into account  Eq.~(\ref{gadget1}) one gets $h\in \dot{\calV}^\perp$
and thus $h\in \calV$. The inclusion $\calU_{12}\subseteq \calV$ follows
trivially from the definitions. 
Next let us prove the inclusion $\subseteq$ in Eq.~(\ref{Nu}).
Consider any vector $k\in \calV$. We note that $k_1=k_a=k_b=k_2$
since $k$ must have even overlap with any vector in $\dot{\calV}$. 
Replacing, if necessary, $k$ by $k+h$ we can assume that $k_1=k_a=k_b=k_2=0$.
Since $\dot{\calU}\subseteq \dot{\calV}$ we infer that $k\in \dot{\calU}^\perp$.
Taking into account that $k$ has even weight
one gets $k\in \ddot{\calU}=\calU$, that is, $k\in \calU_{12}$.
We have proved Eq.~(\ref{Nu}).

Suppose  $f\in \calV^\perp \cap \calO$ is a minimum weight vector
such that $|f|=d(\calV)$. 
Then  $f$ includes at most one of the qubits $1,a,b,2$
since otherwise we would be able to reduce the weight of $f$ by a gauge transformation
$f\gets f+f'$, where $f'$ is contained in the first term in Eq.~(\ref{gadget1}).
Without loss of generality $f_a=f_b=f_2=0$.
From Eq.~(\ref{Nu}) we infer that 
$\trn{h}f=0$ and $f\in \calU_{12}^\perp$.
Since $f_a=f_b=0$, this implies that $f$ is orthogonal to
 $e^1+e^2+g$ and thus $f\in \calU^\perp$, see Eq.~(\ref{UUU}).
 This shows that $f\in \calU^\perp\cap \calO$, that is, $d(\calV)=|f|\ge d(\calU)$.
The opposite inequality, $d(\calV)\le d(\calU)$, is obvious since 
extending any vector $f\in \calU^\perp \cap \calO$ by zeroes to qubits $a$ and $b$
gives a vector $f\in \calV^\perp \cap \calO$.
The last statement of the lemma follows 
from the fact that $\calU$ and $\calV$ have the same restriction 
on any subset $M^\pm \subseteq [n]$,
see Eq.~(\ref{Nu}).
\end{proof}

It remains to 
apply the subdivision gadget to the long-range gauge generators
$g_r^r=(w^1+w^{2r})[D_r]$, where $r=2,\ldots,t$
(note that $g_1^1$ is already spatially local).
Let us add a second qubit at each site $w^i$
of the region $D_r$, except for $w^1$ and $w^{2r}$. 
We shall denote these extra qubits as $\bar{w}^2,\ldots,\bar{w}^{2r-1}$
such that 
$D_r=\{w^1,w^2,\ldots,w^{2r},\bar{w}^2,\ldots,\bar{w}^{2r-1} \}$.
Qubits $w^i,\bar{w}^i$ share the same site of the lattice.
The  total number of qubits  becomes
\[
K_t=N_t+\sum_{r=2}^t (2r-2)=2t^3+8t^2+6t+1.
\]
Define a subspace $\calV_t\subseteq \calE_2^{K_t}$ such that 
\begin{equation}
\label{dotV}
\dot{\calV}_t=\dot{\calU}_t + \sum_{r=2}^t \langle (w^1+ \bar{w}^2)[D_r]\rangle 
+ \langle (\bar{w}^{2r-1} + w^{2r})[D_r]\rangle
+ \sum_{i=2}^{2r-2} 
\langle (\bar{w}^i+\bar{w}^{i+1})[D_r]\rangle.
\end{equation}
Here it is understood that vectors of $\dot{\calU}_t$ are extended to the added
qubits by zeroes.  Note that all generators of $\dot{\calV}_t$ are spatially local
since  the long-range generator $g_r^r$ can be decomposed as 
\begin{equation}
\label{bad_guy5}
g_r^r=w^1+w^{2r}= (w_1+\bar{w}_2) + (\bar{w}_2+\bar{w}_3) + \ldots + (\bar{w}_{2r-1}+w_{2r}),
\end{equation}
where each term has support on a single face of the lattice. 
Here we omitted $[D_r]$ to simplify notations.
We shall regard $\calV_t$ as an extended version of $\calU_t$. 
Similarly, define a subspace $\calF_t\subseteq \calE_2^{K_t}$ such that 
\begin{equation}
\label{dotF}
\dot{\calF}_t=\dot{\calD}_t + \sum_{r=2}^t \langle (w^1+ \bar{w}^2)[D_r]\rangle 
+ \langle (\bar{w}^{2r-1} + w^{2r})[D_r]\rangle
+ \sum_{i=2}^{2r-2} 
\langle (\bar{w}^i+\bar{w}^{i+1})[D_r]\rangle.
\end{equation}
We shall regard $\calF_t$ as an extended version of $\calD_t$. 
Note that all generators of $\dot{\calF}_t$ are spatially local. 
Using Eq.~(\ref{DU1}) and analogues
of Lemmas~\ref{lemma:gauge},\ref{lemma:gauge1} one can easily show that
\begin{equation}
\label{NuF}
\calV_t\subseteq \calF_t \subseteq \dot{\calF}_t \subseteq \dot{\calV}_t.
\end{equation}
We are now ready to define the final version of the doubled color codes,
see Table~\ref{table:CTxx}, 
that satisfy all the properties announced in Section~\ref{sec:summary}
(up to relabeling of the subspaces $\calC_t,\calT_t$ into 
$\calF_t,\calV_t$).
 Combining  Eq.~(\ref{dAdB}) and Eq.~(\ref{NuF}) one
can see that the distance of any code defined in Table~\ref{table:CTxx}
is lower bounded by $d(\calV_t)$.
A recursive application of the subdivision gadget  lemma shows that
$d(\calV_t)=d(\calU_t)=2t+1$.
Furthermore,  $\calV_t$ is triply-even
with respect to the same subsets as $\calU_t$
and $\calF_t$ is doubly even with respect to the same subsets as $\calD_t$.
Thus the final $C$ and $T$ codes have transversal Clifford gates
and the $T$-gate respectively.
\begin{table}[!ht]
\centerline{
\begin{tabular}{r|c|c|c|}
 & Transversal gates  & Stabilizer group & Gauge group \\
\hline
$\vphantom{\hat{\hat{A}}}$
$C$-code & Clifford group &  $\css{\calF_t}{\calF_t}$ & $\css{\dot{\calF}_t}{\dot{\calF}_t} $   \\
\hline
$\vphantom{\hat{\hat{A}}}$
$T$-code & $T$ gate &  $\css{\calV_t}{\dot{\calV}_t}$ & $\css{\calV_t}{\dot{\calV}_t}$  \\
\hline
$\vphantom{\hat{\hat{A}}}$
Base code &   & $\css{\calV_t}{\calF_t}$  & $\css{\dot{\calF}_t}{\dot{\calV}_t}$ \\
\hline
\end{tabular}}
\caption{Final version of the doubled color codes. All gauge generators
of the $C$-code are spatially local. The $T$-code can be obtained from
the $C$-code by measuring syndromes of edge-type stabilizers.
The base code is defined such that its stabilizer group
is the intersection of all other stabilizer groups.
}
\label{table:CTxx}
\end{table}
Let us  describe gauge generators of the final $C$-code.
From  Eq.~(\ref{bad_guy5}) we infer that the long-range generators $g_r^r$
can be removed from the generating set. Thus the final $C$-code
has only face-type gauge generators, the additional generators 
$g_r^1,\ldots,g_r^{r-1},h_r^1,\ldots,h_r^r$, and the additional generators
that appear in Eq.~(\ref{dotF}). All these generators are spatially local. 
The same arguments as above show that a restriction of the final $C$-code 
onto the region $A_t$ coincides with the regular color code $\css{\calS_t}{\calS_t}$.
As before, the final $T$-code is obtained from the extended $C$-code
by adding edge-type stabilizer generators.

Finally, let us point out that  the gauge generator $\omega_{1,0}\in \dot{\calT}_t$ is already spatially local,
so that we do not have to apply the weight reduction steps to $\omega_{1,0}$.
Thus the total number of physical qubits can be reduced from $K_t$ to  $K_t-2=2t^3+8t^2+6t-1$.

\section*{Acknowledgments}
SB thanks Andrew Landahl for fruitful discussions and helpful suggestions
at the early stages of this project.
SB acknowledges NSF grant PHY-1415461.


\begin{thebibliography}{10}

\bibitem{Barends2014}
R.~Barends, J.~Kelly, A.~Megrant, A.~Veitia, D.~Sank, E.~Jeffrey, T.C. White,
  J.~Mutus, A.~Fowler, B.~Campbell, et~al.
\newblock Superconducting quantum circuits at the surface code threshold for
  fault tolerance.
\newblock {\em Nature}, 508(7497):500--503, 2014.

\bibitem{Kelly2015}
J.~Kelly, R.~Barends, A.~Fowler, A.~Megrant, E.~Jeffrey, T.C. White, D.~Sank,
  J.~Mutus, B.~Campbell, Y.~Chen, et~al.
\newblock State preservation by repetitive error detection in a superconducting
  quantum circuit.
\newblock {\em Nature}, 519(7541):66--69, 2015.

\bibitem{corcoles2014}
A.~C{\'o}rcoles, E.~Magesan, S.~Srinivasan, A.~Cross, M.~Steffen, J.~Gambetta,
  and J.~Chow.
\newblock Demonstration of a quantum error detection code using a square
  lattice of four superconducting qubits.
\newblock {\em Nature Communications}, 6:6979, 2015.

\bibitem{Dennis2001}
E.~Dennis, A.~Kitaev, A.~Landahl, and J.~Preskill.
\newblock Topological quantum memory.
\newblock {\em J. of Math. Phys.}, 43(9):4452--4505, 2002.

\bibitem{Raussendorf2007}
R.~Raussendorf and J.~Harrington.
\newblock Fault-tolerant quantum computation with high threshold in two
  dimensions.
\newblock {\em Phys. Rev. Lett.}, 98(19):190504, 2007.

\bibitem{Fowler2009}
A.~Fowler, A.~Stephens, and P.~Groszkowski.
\newblock High-threshold universal quantum computation on the surface code.
\newblock {\em Phys. Rev. A}, 80(5):052312, 2009.

\bibitem{Jones2013}
C.~Jones.
\newblock Logic synthesis for fault-tolerant quantum computers.
\newblock {\em preprint arXiv:1310.7290}, 2013.

\bibitem{Fowler2013surface}
A.~Fowler, S.~Devitt, and C.~Jones.
\newblock Surface code implementation of block code state distillation.
\newblock {\em Scientific Reports}, 3:1939, 2013.

\bibitem{Eastin2009}
B.~Eastin and E.~Knill.
\newblock Restrictions on transversal encoded quantum gate sets.
\newblock {\em Phys. Rev. Lett.}, 102(11):110502, 2009.

\bibitem{Bravyi2013}
S.~Bravyi and R.~K{\"o}nig.
\newblock Classification of topologically protected gates for local stabilizer
  codes.
\newblock {\em Phys. Rev. Lett.}, 110(17):170503, 2013.

\bibitem{Pastawski2015}
F.~Pastawski and B.~Yoshida.
\newblock Fault-tolerant logical gates in quantum error-correcting codes.
\newblock {\em Phys. Rev. A}, 91(1):012305, 2015.

\bibitem{Beverland2014}
M.~Beverland, R.~K{\"o}nig, F.~Pastawski, J.~Preskill, and S.~Sijher.
\newblock Protected gates for topological quantum field theories.
\newblock {\em preprint arXiv:1409.3898}, 2014.

\bibitem{Paetznick2013}
A.~Paetznick and B.~Reichardt.
\newblock Universal fault-tolerant quantum computation with only transversal
  gates and error correction.
\newblock {\em Phys. Rev. Lett.}, 111(9):090505, 2013.

\bibitem{Anderson2014}
J.~Anderson, G.~Duclos-Cianci, and D.~Poulin.
\newblock Fault-tolerant conversion between the {S}teane and {R}eed-{M}uller
  quantum codes.
\newblock {\em Phys. Rev. Lett.}, 113(8):080501, 2014.

\bibitem{Knill2008}
E.~Knill, D.~Leibfried, R.~Reichle, J.~Britton, R.B. Blakestad, J.D. Jost,
  C.~Langer, R.~Ozeri, S.~Seidelin, and D.J. Wineland.
\newblock Randomized benchmarking of quantum gates.
\newblock {\em Phys. Rev. A}, 77(1):012307, 2008.

\bibitem{Chow2009}
J.~Chow, J.~Gambetta, L.~Tornberg, J.~Koch, L.~Bishop, A.~Houck, B.R. Johnson,
  L.~Frunzio, S.~Girvin, and R.~Schoelkopf.
\newblock Randomized benchmarking and process tomography for gate errors in a
  solid-state qubit.
\newblock {\em Phys. Rev. Lett.}, 102(9):090502, 2009.

\bibitem{Brown2015}
B.J. Brown, H.~N. Naomi, and D.E. Browne.
\newblock Fault-tolerant error correction with the gauge color code.
\newblock {\em preprint arXiv:1503.08217}, 2015.

\bibitem{Bombin2015}
H.~Bombin.
\newblock Gauge color codes: optimal transversal gates and gauge fixing in
  topological stabilizer codes.
\newblock {\em New J. Phys.}, 17(8):083002, 2015.

\bibitem{Wang2003}
C.~Wang, J.~Harrington, and J.~Preskill.
\newblock Confinement-{H}iggs transition in a disordered gauge theory and the
  accuracy threshold for quantum memory.
\newblock {\em Ann. of Phys.}, 303(1):31--58, 2003.

\bibitem{Betsum2010}
K.~Betsumiya and A.~Munemasa.
\newblock On triply even binary codes.
\newblock {\em preprint arXiv:1012.4134}, 2010.

\bibitem{CSS1996}
A.~R. Calderbank and P.~W. Shor.
\newblock Good quantum error-correcting codes exist.
\newblock {\em Phys. Rev. A}, 54(2):1098, 1996.

\bibitem{Steane1996}
A.M. Steane.
\newblock Multiple particle interference and quantum error correction.
\newblock {\em Proc. Roy. Soc. Lond. A}, 452:2551--2577, 1996.

\bibitem{Bombin2006}
H.~Bombin and M.~A. Martin-Delgado.
\newblock Topological quantum distillation.
\newblock {\em Phys. Rev. Lett.}, 97(18):180501, 2006.

\bibitem{Kubica2015}
A.~Kubica and M.~Beverland.
\newblock Universal transversal gates with color codes: a simplified approach.
\newblock {\em Phys. Rev. A}, 91(3):032330, 2015.

\bibitem{Kitaev2003}
A.~Kitaev.
\newblock Fault-tolerant quantum computation by anyons.
\newblock {\em Ann. of Phys.}, 303(1):2--30, 2003.

\bibitem{Oliveira2008}
R.~Oliveira and B.~Terhal.
\newblock The complexity of quantum spin systems on a two-dimensional square
  lattice.
\newblock {\em Quant. Inf. and Comp.}, 8(10):900--924, 2008.

\bibitem{Bravyi2012}
S.~Bravyi and J.~Haah.
\newblock Magic-state distillation with low overhead.
\newblock {\em Phys. Rev. A}, 86(5):052329, 2012.

\bibitem{Landahl2014}
A.~Landahl and C.~Ryan-Anderson.
\newblock Quantum computing by color-code lattice surgery.
\newblock {\em preprint arXiv:1407.5103}, 2014.

\bibitem{Kliuchnikov2013fast}
V.~Kliuchnikov, D.~Maslov, and M.~Mosca.
\newblock Fast and efficient exact synthesis of single-qubit unitaries
  generated by {C}lifford and {T} gates.
\newblock {\em Quant. Inf. and Comp.}, 13(7-8):607--630, 2013.

\bibitem{Selinger2015}
P.~Selinger.
\newblock Efficient {C}lifford+{T} approximation of single-qubit operators.
\newblock {\em Quant. Inf. and Comp.}, 15(1-2):159--180, 2015.

\end{thebibliography}

\end{document}